\newcommand{\lnote}[1]{\footnote{{\bf \color{blue}Li-Yang}: {#1}}}
\newcommand{\BayesOpt}{\textsc{BayesOpt}}
\def\colorful{0}
\newcommand{\violet}[1]{{\color{violet}{#1}}}
\newcommand{\violet}[1]{{{#1}}}
\newcommand{\DDT}{\mathrm{DDT}}
\newcommand{\RDT}{\mathrm{RDT}}
\newcommand{\BPTIME}{\mathsf{BPTIME}}
\newcommand{\TIME}{\mathsf{TIME}}
\newcommand{\quasipoly}{\mathrm{quasipoly}}
\newcommand{\qbudget}{q_\mathrm{budget}}
\newtheorem*{Nisan}{Nisan's Theorem}
\newcommand{\pparagraph}[1]{\bigskip \noindent {\bf {#1}}}
\begin{document}

\title{Constructive derandomization of query algorithms\vspace*{10pt}}

\author{Guy Blanc \and \hspace{10pt} Jane Lange \vspace{8pt} \\
\hspace{6pt} {\small {\sl Stanford University}}
\and Li-Yang Tan}  

\date{\small{\today}}

\maketitle


\newtheorem*{notation*}{Notation}

\begin{abstract}
We give efficient deterministic algorithms for converting randomized query algorithms into deterministic ones.   We first give an algorithm that takes as input a randomized $q$-query algorithm~$R$ with description length~$N$ and a parameter $\eps$, runs in time $\poly(N) \cdot 2^{O(q/\eps)}$, and returns a deterministic $O(q/\eps)$-query algorithm $D$ that $\eps$-approximates the acceptance probabilities of~$R$.   These parameters are near-optimal: runtime $N + 2^{\Omega(q/\eps)}$ and query complexity $\Omega(q/\eps)$ are necessary.   

Next, we give algorithms for {\sl instance-optimal} and {\sl online} versions of the problem: 

\begin{itemize}[leftmargin=0.8cm]
\item[$\circ$] {\sl Instance optimal}: Construct a deterministic $q^\star_R$-query algorithm $D$, where $q^\star_R$ is minimum query complexity of any deterministic algorithm that $\eps$-approximates $R$.
\item[$\circ$] {\sl Online}: Deterministically approximate the acceptance probability of $R$ for a specific input~$\underline{x}$ in time $\poly(N,q,1/\eps)$, without constructing $D$ in its entirety.
\end{itemize}

%
%

Applying the techniques we develop for these extensions, we constructivize classic results that relate the deterministic, randomized, and quantum query complexities of boolean functions~(Nisan,~{\footnotesize{STOC 1989}}; Beals et al.,~{\footnotesize{FOCS 1998}}).  This has direct implications for the Turing machine model of computation: sublinear-time algorithms for total decision problems can be efficiently derandomized and dequantized with a subexponential-time preprocessing step. 


\end{abstract}

\thispagestyle{empty}

\newpage
\setcounter{page}{1}

\section{Introduction} 

The {\sl query model} is one of the simplest models of computation.  Each query to a coordinate of the input corresponds to one unit of computation, and the computational cost associated with an input is the number of its coordinates queried.  All other computation is considered free.

The query model is fundamental to both algorithms and complexity theory.  In algorithms, it is central to the study of {\sl sublinear-time computation}.   Since sublinear-time algorithms cannot afford to read the entire input, the number of input coordinates queried naturally becomes an important metric.  Indeed, there is a large body of work focused just on understanding the query complexity of algorithmic tasks across a broad range of areas spanning testing, optimization, and approximation (see e.g.~\cite{Rub06,CS10,Gol17} and the references therein).  The query model is also an important framework for the design and analysis of {\sl quantum algorithms}.  Many of the best known quantum algorithms, such as Grover's search~\cite{Gro96} and Shor's factoring algorithm~\cite{Sho99}, are captured by the quantum query model (see e.g.~\cite{Amb18} and the references therein).

In complexity theory, the query model is a model within which significant progress has been made on understanding of the overarching questions of the field.  A partial listing of examples include: the relationships between deterministic, randomized,  and nondeterministic computation (see e.g.~\cite{BdW02,Juk12}); the power and limitations of parallelism~\cite{CDR86,RVW18}; the complexity of search problems~\cite{LLMW95}; computing with noisy information~\cite{FRPU94}; direct sum~\cite{JKS10} and direct product theorems~\cite{NRS94,Sha04,Dru12}; etc.  In addition to being a fruitful testbed for developing intuition and techniques to reason about computation,  there is also a long history in complexity theory where results in the query model have been successfully bootstrapped to shed new light on much more powerful models such as communication protocols~\cite{RW99,GPW17,GPW18}, circuits and proof systems~\cite{GGKS18,dRMNPRV19}, and even Turing machines~\cite{FSS81,IN88,Ver99,Zim07,Sha11}. 

\subsection{This work: Constructive derandomization of query algorithms}  We study derandomization within the query model: the task of converting randomized query algorithms into deterministic ones.  The unifying focus of our work is on {\sl constructive} derandomization: rather than just establishing the {\sl existence} of a corresponding deterministic algorithm, our goal is to design {\sl efficient} meta-algorithms for constructing this deterministic algorithm.  In addition to being an aspect of derandomization that is natural and of independent interest, constructivity is also the key criterion that connects derandomization in the query model of computation (a non-uniform model) to derandomization in the Turing machine model of computation (a uniform model).  Constructive derandomization of query algorithms, and its implications for the Turing machine computation, have been previously studied by Impagliazzo and Naor~\cite{IN88}, Zimand~\cite{Zim07}, and Shaltiel~\cite{Sha11}; we give a detailed comparison of our work to prior work in~\Cref{sec:our-work}.  

\label{two-strands}
There are two main strands to this work.  First, we consider general randomized query algorithms~$R$, where we make no assumptions about the distribution of $R$'s output values on any given input $x$ (across possible outcomes of its internal randomness)---in particular, this distribution is not assumed to be concentrated on a certain value.  Here our goal is to deterministically approximate, for a given input $x$, the expected output value of $R$ when run on input $x$: 
\newpage
\begin{quote}
{\sl Given as input a randomized $q$-query algorithm $R : \zo^n \times \zo^m \to [0,1]$,  construct a deterministic $q'$-query algorithm $D : \zo^n \to [0,1]$ satisfying 
\begin{equation} \Ex_{\text{input $\bx$}}\Big[\big(D(\bx) - \Ex_{\text{randomness $\br$}} [R(\bx, \br)]\big)^2\Big] \le \eps. \label{eq:eps-approximating} 
\end{equation} 
We refer to $D$ as an $\eps$-approximating deterministic algorithm for $R$.}\footnote{All of our results can be stated more generally for algorithms with arbitrary real-valued output values; however, it will be convenient for us to assume a normalization where the output values are scaled to be in $[0,1]$.  Relatedly, note that if $R$ is $\zo$-valued, then $\Ex_{\br}[R(x,\br)] = \Prx_{\br}[R(x,\br) = 1]$ is simply the acceptance probability of $R$ on input~$x$.} 
\end{quote}
By Markov's inequality, (\ref{eq:eps-approximating}) implies that $|D(x) - \Ex_{\br}[R(x,\br)]| \le \eps$ for all but a $\sqrt{\eps}$-fraction of~$x$'s.  It is natural to seek a stronger {\sl worst-case} guarantee that holds for all $x$, but as we will show (and as is easy to see), there are simple examples of $q$-query randomized $R$'s for which any deterministic $D$ satisfying $|D(x) - \Ex_{\br}[R(x,\br)]| \le 0.1$ {\sl for all} $x$  has to have query complexity $q'$ where $q'$ is exponentially larger, or even unboundedly larger, than $q$. Therefore, without any added assumptions about $R$, any derandomization that does not incur such a blowup in query complexity has to allow for an average-case approximation such as~(\ref{eq:eps-approximating}). 

That brings us to the second strand of our work, where we focus on the special case of randomized query algorithms that compute boolean functions $f : \zo^n \to \zo$ with bounded error (or equivalently, randomized algorithms that solve total decision problems with bounded error).  These are randomized algorithms $R : \zo^n \times \zo^m \to \zo$ that are {\sl promised} to satisfy: 
\begin{equation} 
\label{eq:total} 
\text{For all $x \in \zo^n$}, \quad \Ex_{\br}[R(x,\br)] \in 
\begin{cases}
[\frac{2}{3}, 1] & \text{if $f(x) = 1$} \\
[0,\frac{1}{3}] & \text{if $f(x) = 0$.} 
\end{cases}
 \end{equation}
 Under such a promise, the aforementioned impossibility result ruling out a worst-case guarantee does not apply.  
 Indeed, in this case our goal will be that of achieving a {\sl zero-error} derandomization: to construct a deterministic query algorithm $D$ that {\sl computes $f$ exactly}, meaning that $D(x) = f(x)$ for all $x \in \zo^n$.

  
 \pparagraph{Efficiency of derandomization: the two criteria we focus on.}  
 \label{two-criteria} 
 In both settings---whether we are considering general randomized query algorithms, or those that solve total decision problems---we will focus on the two most basic criteria for evaluating the quality of a derandomization:  \begin{enumerate}
 \item[(i)] the runtime of the derandomization procedure; and 
 \item[(ii)] the query complexity of the resulting deterministic algorithm.  
 \end{enumerate} 
   That is, we seek a derandomization that is efficient in two senses: we would like to construct the corresponding deterministic query algorithm $D$ quickly, and we would like $D$'s query complexity to be as close to $R$'s query complexity as possible.

    \pparagraph{Perspectives from learning theory: random forests and latent variable models.}  For an alternative perspective on the objects and problems that we study in this work, in~\Cref{ap:learn} we discuss the roles that  randomized query algorithms play in the field of {\sl learning theory}, and the corresponding interpretations of the problem of constructive derandomization.

\subsection{Background: Non-constructive derandomization of query algorithms} 
\label{sec:background}

We begin by discussing two well-known results giving {\sl non-constructive} derandomizations of query algorithms, where the first of the two efficiency criteria discussed above, the runtime of the derandomization procedure, is disregarded.  These results establish the {\sl existence} of a corresponding deterministic query algorithm, but their proofs do not yield efficient algorithms for constructing such a deterministic  algorithm.  Looking ahead, the main contribution of our work, described in detail in~\Cref{sec:our-work}, is in obtaining constructive versions of these results.  

\begin{itemize}[leftmargin=0.5cm]
\item[$\circ$] In~\Cref{sec:Yao} we recall Yao's lemma~\cite{Yao77}, specializing it to the context of query algorithms.  For any randomized $q$-query algorithm $R$,  (the ``easy direction" of) Yao's lemma along with a standard empirical estimation analysis implies the existence of a deterministic $O(q/\eps)$-query algorithm that $\eps$-approximates $R$.

\item[$\circ$]In~\Cref{sec:Nisan} we recall Nisan's  theorem~\cite{Nis89}, which relates the deterministic and randomized query complexities of {\sl total decision problems}.  For every total decision problem $f$ that can be computed by a bounded-error randomized $q$-query algorithm, Nisan's theorem establishes the existence of a deterministic $O(q^3)$-query algorithm that computes $f$ exactly. 
\end{itemize}

These results are incomparable, and their proofs are very different: the first is essentially a simple averaging argument, whereas Nisan's theorem involves reasoning about the ``block sensitivity" of~$f$ and related boolean function complexity measures.  However, the two proofs share one common feature: they are both non-constructive. 


\subsubsection{The easy direction of Yao's  lemma} 
\label{sec:Yao} 

Yao's lemma~\cite{Yao77}, a special case of von Neumann's minimax theorem, is a simple and extremely useful technique in the study of randomized algorithms.  It shows that the bounded-error randomized complexity of a function $f : \zo^n \to \zo$ is an upper bound on its {\sl distributional complexity} relative to any distribution $\mu$ over $\zo^n$: the complexity of the optimal deterministic algorithm for $f$ that is correct on {\sl most} inputs, weighted according to~$\mu$.\footnote{This is in fact the ``easy direction" of Yao's lemma; the hard direction shows that the randomized complexity of $f$ is precisely equal to its distributional complexity relative to the worst distribution~$\mu$.}


Although this easy direction of Yao's lemma is most often applied in the context of randomized algorithms for decision problems, by combining its simple proof with a standard empirical estimation argument, one easily gets an extension to general randomized algorithms $R : \zo^n \times\zo^m \to [0,1]$, where no assumptions are made about the distribution of $R(x,\br)$.  We defer the proof of the following fact to~\Cref{ap:yao}.  


\begin{fact}[Non-constructive derandomization via the easy direction of Yao's lemma] 
\label{fact:yao} 
Let $R : \zo^n \times \zo^m \to [0,1]$ be a randomized $q$-query algorithm.  For every $\eps \in (0,\frac1{2})$, there exists a deterministic $O(q/\eps)$-query algorithm $D : \zo^n \to [0,1]$ satisfying
\[  \Ex_{\text{input $\bx$}}\Big[\big(D(\bx) - \Ex_{\text{randomness $\br$}} [R(\bx, \br)]\big)^2\Big] \le \eps. \] 
\end{fact}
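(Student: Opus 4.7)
The plan is to combine a probabilistic-method ``fix the coins'' argument with a standard empirical-mean variance bound. Since the desired guarantee is averaged over $\bx$, it suffices to exhibit a single deterministic algorithm of query complexity $O(q/\eps)$ whose expected squared error, averaged over $\bx$, is at most $\eps$; the probabilistic method will produce one.

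Concretely, I would set $k = \lceil 1/(4\eps)\rceil$ and draw $k$ independent copies $\br_1,\ldots,\br_k$ from the distribution of $R$'s internal randomness. Define the (random) algorithm
\[ D(x) \;:=\; \frac{1}{k}\sum_{i=1}^{k} R(x,\br_i). \]
For every fixed outcome $r_1,\ldots,r_k$, each of the algorithms $R(\cdot,r_i)$ is a deterministic $q$-query algorithm, so $D$ can be implemented deterministically using $kq = O(q/\eps)$ queries (run the $k$ fixed-randomness copies one after the other and average their outputs). The output lies in $[0,1]$ since it is an average of $[0,1]$-valued quantities.

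For the error analysis, fix any $x \in \zo^n$ and write $p(x) := \Ex_{\br}[R(x,\br)]$. Then $R(x,\br_1),\ldots,R(x,\br_k)$ are i.i.d.\ $[0,1]$-valued random variables with mean $p(x)$, each with variance at most $1/4$. The expected squared deviation of their average from $p(x)$ is therefore
\[ \Ex_{\br_1,\ldots,\br_k}\!\bigl[(D(x) - p(x))^2\bigr] \;=\; \frac{1}{k}\,\mathrm{Var}_{\br}\!\bigl[R(x,\br)\bigr] \;\le\; \frac{1}{4k}. \]
Averaging over $\bx$ and swapping the order of expectations by Fubini's theorem gives
\[ \Ex_{\br_1,\ldots,\br_k}\!\Ex_{\bx}\!\bigl[(D(\bx) - p(\bx))^2\bigr] \;\le\; \frac{1}{4k} \;\le\; \eps. \]
By the probabilistic method, some particular choice of $r_1,\ldots,r_k$ achieves the same bound on the inner expectation, and hardcoding these coins into $D$ yields the desired deterministic $O(q/\eps)$-query algorithm.

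There is no real obstacle: the proof is essentially the ``easy direction of Yao'' argument (fixing random coins in the given randomized strategy) composed with $k$-fold empirical estimation to control the $L^2$ error. The one point worth flagging is that the $[0,1]$-valued normalization of $R$ is exactly what allows us to bound each summand's variance by $1/4$, giving the clean $k = \Theta(1/\eps)$ dependence and hence the $O(q/\eps)$ query complexity claimed.
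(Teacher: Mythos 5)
Your proof is correct and is essentially identical to the paper's own argument (Appendix~\ref{ap:yao}): sample $\Theta(1/\eps)$ independent random strings, form the empirical-mean estimator, bound its variance by $1/4k$ using the $[0,1]$-normalization, average over $\bx$, and invoke the probabilistic method to fix a good outcome, yielding a stacked $O(q/\eps)$-query deterministic tree. The only difference is the choice of constant ($k=\lceil 1/(4\eps)\rceil$ versus $c=1/\eps$), which is immaterial.
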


We make two observations regarding the optimality of~\Cref{fact:yao}, the proofs of which are also deferred to~\Cref{ap:yao}: 


\begin{fact}[Optimality of query complexity]
\label{fact:lower-bound}
For every $q\in \N$ and $\eps \le O(q/n)$, there is a randomized $q$-query algorithm $R$ such that any $\eps$-approximating deterministic algorithm $D$ for~$R$ has to have query complexity $\Omega(q/\eps)$. 
\end{fact}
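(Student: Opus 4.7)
My plan is to exhibit a randomized $q$-query algorithm based on an average of XORs over disjoint blocks. Fix $q \in \N$ and $\eps \le 1/8$, set $N := \lfloor 1/(8\eps)\rfloor$ and $n := Nq$, and partition $[n]$ into $N$ disjoint blocks $B_1,\ldots,B_N$ of size $q$ each. Define the randomized algorithm $R : \zo^n \times [N] \to \zo$ as follows: on randomness $\br \in [N]$ drawn uniformly, query all $q$ bits of $B_\br$ and output $\bigoplus_{i \in B_\br} x_i$. This uses exactly $q$ queries, and the expected output is
\[
  f(x) \;:=\; \Ex_{\br}[R(x,\br)] \;=\; \frac{1}{N}\sum_{j=1}^N \bigoplus_{i \in B_j} x_i.
\]
Note that $n = Nq = \Theta(q/\eps)$, placing us at the boundary of the hypothesized regime $\eps \le O(q/n)$.

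For the lower bound I measure the error on the left side of (\ref{eq:eps-approximating}) under the uniform distribution on $\zo^n$, and argue via a ``query budget plus conditional variance'' calculation. A depth-$q'$ decision tree $D$ queries at most $q'$ bits on any input, so the set $J(\bx) \subseteq [N]$ of blocks that $D$ \emph{fully} queries on input $\bx$ satisfies $|J(\bx)| \le q'/q$ pointwise. For uniform $\bx$, conditional on the bits observed by $D$ the remaining bits are still uniform and independent; hence for each $j \notin J(\bx)$, block $B_j$ has at least one free bit, making $\bigoplus_{i \in B_j} \bx_i$ a uniform $\zo$-valued random variable, and these residual XORs are mutually independent across $j \notin J(\bx)$. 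Since $D(\bx)$ is a deterministic function of $D$'s observations, the standard inequality $\Ex[(g(X) - Y)^2] \ge \Ex[\mathrm{Var}(Y \mid X)]$ yields
\[
  \Ex_{\bx}\!\left[\big(D(\bx) - f(\bx)\big)^2\right] \;\ge\; \Ex_{\bx}\!\left[\frac{N - |J(\bx)|}{4N^2}\right] \;\ge\; \frac{N - q'/q}{4N^2}.
\]

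Requiring the left side to be at most $\eps$ and substituting $N = \lfloor 1/(8\eps)\rfloor$ (so that $4N\eps \le 1/2$) forces $q' \ge qN(1 - 4N\eps) \ge qN/2 = \Omega(q/\eps)$, as desired. The only conceptually subtle step is the conditional-independence claim: after $D$'s adaptive querying, the XORs of the blocks outside $J(\bx)$ must remain jointly uniform and independent. This is where the uniformity of $\bx$ and the disjointness of the blocks are essential---the transcript of $D$ is a deterministic function of $\bx$, and conditional on any fixed transcript the unqueried coordinates are uniform on $\zo^{n - |S(\bx)|}$; disjointness of the $B_j$'s then upgrades this to independence of the residual XORs, after which the rest is linearity of variance.
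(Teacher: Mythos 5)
Your proposal is correct and takes essentially the same route as the paper: the paper also uses a randomized algorithm that outputs the parity of a uniformly random block of size $q$ (degenerating to a single random coordinate when $q=1$) and lower-bounds the error of any depth-$q'$ DDT via the variance of the conditional distribution of $\mu_R(\bx)$ given the transcript, which is governed by the number of blocks $D$ fails to fully query. Your writeup unifies the $q=1$ and $q>1$ cases and is somewhat more careful about the conditional-independence step, but the construction and the counting argument are the ones in the paper.
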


\begin{fact}[Impossibility of pointwise approximation]
\label{fact:pointwise-impossible} 
Consider the randomized $1$-query algorithm $R$ which on input $x$, samples $\bi \in [n]$ uniformly at random and outputs $x_{\bi}$.  Any deterministic algorithm $D$ satisfying $| D(x) - \Ex_{\br}[R(x,\br)]| \le 0.1$ for all $x \in \zo^n$ 
has to have query complexity $\Omega(n)$. 
\end{fact}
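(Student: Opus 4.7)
\textbf{Proof plan for Fact~\ref{fact:pointwise-impossible}.} The plan is to recognize that $\mathbb{E}_{\br}[R(x,\br)]$ is exactly the normalized Hamming weight of $x$, and then to run a simple adversary argument showing that short decision trees cannot distinguish between pairs of inputs with very different Hamming weights that necessarily reach the same leaf.

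First I would unpack the expectation. Since $R$ samples $\bi \in [n]$ uniformly and outputs $x_{\bi}$, we have $\mathbb{E}_{\br}[R(x,\br)] = \tfrac{1}{n}\sum_{i=1}^n x_i = |x|/n$, where $|x|$ denotes the Hamming weight of $x$. So the task reduces to the following claim: any deterministic query algorithm $D : \zo^n \to [0,1]$ satisfying $|D(x) - |x|/n| \le 0.1$ for all $x \in \zo^n$ must make $\Omega(n)$ queries.

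Next I would argue about an arbitrary $D$, viewed as a decision tree of depth $q$. Fix any leaf $\ell$, let $S \subseteq [n]$ be the set of coordinates queried along the root-to-$\ell$ path (so $|S| \le q$), and let $a \in \zo^S$ be the answer pattern on that path. Every $x \in \zo^n$ that agrees with $a$ on $S$ reaches $\ell$, regardless of how the remaining $n - |S|$ coordinates are set. In particular, both $x^-$ (setting all unqueried coordinates to $0$) and $x^+$ (setting all unqueried coordinates to $1$) reach $\ell$, so $D(x^-) = D(x^+) = v_\ell$ for some common value $v_\ell \in [0,1]$. Their normalized Hamming weights differ by $(n-|S|)/n$.

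Finally I would invoke the triangle inequality: since $|v_\ell - |x^-|/n| \le 0.1$ and $|v_\ell - |x^+|/n| \le 0.1$, we get $(n-|S|)/n \le 0.2$, i.e.\ $|S| \ge 0.8n$, and therefore $q \ge 0.8n = \Omega(n)$. The argument is essentially a one-line adversary argument; the only thing that requires care is the (standard) convention that the ``query complexity'' of a deterministic algorithm refers to the worst-case depth of the associated decision tree rather than the number of queries averaged over inputs, so that the leaf-based reasoning above is valid.
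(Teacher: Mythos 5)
Your proposal is correct and is essentially the paper's own proof: the paper also observes that $\mu_R(x)=|x|/n$, fixes a root-to-leaf path, considers the two extensions setting all unqueried coordinates to $0$ and to $1$ (its $x^{(1)}$ and $x^{(2)}$ are your $x^-$ and $x^+$), and concludes via a triangle-inequality/midpoint argument that $q \geq 0.8n$. Your write-up is slightly cleaner in that it drops the (irrelevant to the pointwise bound) observation about the conditional binomial distribution, but the substance is identical.
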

The example in~\Cref{fact:pointwise-impossible} is chosen to illustrate the largest possible gap ($1$ versus $\Omega(n)$).  Another canonical example is that of approximating the fractional Hamming weight of the input, for which the gap is $O(1)$ versus $\Omega(n)$.

\begin{remark}[Quantum analogue of~\Cref{fact:yao} and the work of Aaronson and Ambainis~\cite{AA14}]
\label{remark:AA} 
A major open problem in quantum complexity theory is that of obtaining a {\sl quantum} analogue of~\Cref{fact:yao}: showing\violet{---even just non-constructively---}that the acceptance probabilities of a quantum query algorithm $Q$ can be approximated on most inputs by a deterministic query algorithm (whose query complexity is polynomially related to that of $Q$'s).  For a precise formulation, see Conjecture 4 of~\cite{AA14}, where it is attributed as folklore dating back to 1999 or before. (See also~\cite{Aar05,Aar10,Aar08}.)

For one of our results (\Cref{thm:online}), we build on and extend techniques that Aaronson and Ambainis~\cite{AA14} developed to study this problem.
\end{remark} 
\subsubsection{Nisan's theorem} 
\label{sec:Nisan}

For the special case of randomized query algorithms that solve {\sl total decision problems} (recall (\ref{eq:total})), the impossibility result of~\Cref{fact:pointwise-impossible} does not apply.  Indeed, a classic result of Nisan~\cite{Nis89} establishes the existence of a {\sl zero-error} derandomization of such algorithms.  Given a function $f : \zo^n \to \zo$, we write $D(f)$ to denote its deterministic query complexity, and $R(f)$ to denote its bounded-error randomized query complexity.  (Please see~\Cref{sec:prelim} for formal definitions.)
\begin{Nisan}
\hypertarget{nisan-anchor}{}
For every function $f : \zo^n \to \zo$, we have $D(f) \le O(R(f)^3)$. 
\end{Nisan}

To align and compare~\hyperlink{nisan-anchor}{Nisan's Theorem} with~\Cref{fact:yao}, we restate it as follows: 

\bigskip
\hypertarget{nisan-restatement-anchor}{}
\noindent{{\bf Nisan's Theorem, restated.}}  {\it Let $R : \zo^n \times \zo^m\to \zo$ be a randomized $q$-query algorithm that computes $f : \zo^n \to \zo$ with bounded error.   There exists a deterministic $O(q^3)$-query algorithm $D : \zo^n \to \zo$ that computes $f$ exactly: $D(x) = f(x)$ for all $x\in \zo^n$.}
\bigskip

Interestingly, unlike most proofs of such relationships between query complexity measures, Nisan's proof is non-constructive.  Indeed, Nisan himself remarked: ``This result is particularly surprising as it is not achieved by simulation"~\cite[p.~329]{Nis89}. 

This non-constructive aspect of Nisan's proof was further highlighted in the work of Impagliazzo and Naor~\cite{IN88}, who sought a constructive version to derive consequences  the Turing machine model of computation.  \cite{IN88} essentially overcame this issue of non-constructivity with the added assumption that $\mathsf{P}=  \mathsf{NP}$.  In~\Cref{sec:Turing}, we discuss the implications of our constructivization of Nisan's theorem for derandomization in the Turing machine model, and compare them with the result of~\cite{IN88}.

\section{Our results: Constructive derandomization of query algorithms} 
\label{sec:our-work}

From both an algorithmic and complexity-theoretic point of view, it is natural to seek {\sl constructive} versions of~\Cref{fact:yao} and~\hyperlink{nisan-anchor}{Nisan's Theorem}: 

\begin{itemize}[leftmargin=0.5cm]
\item[$\circ$] \violet{{\sl Constructive version of~\Cref{fact:yao}:}} Given the description of a randomized $q$-query algorithm~$R$, can we {\sl efficiently construct} an deterministic $O(q/\eps)$-query algorithm $D$ that $\eps$-approximates $R$? 
\item[$\circ$] \violet{{\sl Constructive version of~\hyperlink{nisan-anchor}{Nisan's Theorem}:}}  Given the description of a randomized $q$-query algorithm that computes a function $f : \zo^n \to \zo$ with bounded error, can we {\sl efficiently construct} a deterministic $O(q^3)$-query algorithm $D$ that computes $f$ exactly?  
\end{itemize} 

 In addition to being an independently interesting aspect of derandomization to study, as alluded to in the introduction, constructivity is also the key criterion that connects derandomization in the query model of computation (a non-uniform model) to derandomization in the Turing machine model of computation (a uniform model). 

Prior work of Zimand~\cite{Zim07} and Shaltiel~\cite{Sha11} gave constructive versions of (a variant of)~\Cref{fact:yao}.  As for~\hyperlink{nisan-anchor}{Nisan's Theorem}, to our knowledge there were no known unconditional constructive versions of it; Impagliazzo and Naor~\cite{IN88} gave a constructivization under the assumption that $\mathsf{P} = \mathsf{NP}$.  We will give a detailed comparison between our results and those of~\cite{Zim07,Sha11} and~\cite{IN88} in this section.

\pparagraph{Structure of this section.}  Paralleling the structure of~\Cref{sec:background} and the two strands of our work as outlined in~\Cref{two-strands}, this section is structured as follows:

\begin{itemize}[leftmargin=0.5cm]
\item[$\circ$] In~\Cref{sec:constructive-yao} we consider general randomized query algorithms, with the goal of obtaining a constructive version of~\Cref{fact:yao}.   
\item[$\circ$] In~\Cref{sec:constructive-nisan} we consider randomized query algorithms for that compute functions $f : \zo^n \to \zo$ with bounded error, with the goal of obtaining a constructive version of~\hyperlink{nisan-anchor}{Nisan's Theorem}.   
In~\Cref{sec:Turing} we discuss the consequences of our constructivization of~\hyperlink{nisan-anchor}{Nisan's Theorem} for the Turing machine model of computation. 
\end{itemize}

In both cases, we further give {\sl instance-optimal} derandomizations: for any randomized query algorithm $R$, the deterministic query algorithm that we construct has query complexity that not only matches the bounds guaranteed by~\Cref{fact:yao} or~\hyperlink{nisan-anchor}{Nisan's Theorem}, but is in fact minimal for this specific~$R$.

\subsection{Constructive versions of~\Cref{fact:yao}}
\label{sec:constructive-yao}

Our first result is a constructive version of~\Cref{fact:yao}: 

\begin{theorem}[Constructive version of~\Cref{fact:yao}]
\label{thm:derand-alg} 
There is a deterministic algorithm $\mathcal{A}$ with the following guarantee.  Given as input a randomized $q$-query algorithm $R : \zo^n \times \zo^m \to [0,1]$ with description length $N$ and an error parameter $\eps \in (0,\frac1{2})$,  this algorithm $\mathcal{A}$ runs in 
\[ \poly(N)\cdot 2^{O(q/\eps)} \] 
 time and returns a deterministic $O(q/\eps)$-query algorithm $D : \zo^n \to [0,1]$ satisfying 
 \begin{equation}  \Ex_{\bx}\Big[\big(D(\bx) - \Ex_{\br} [R(\bx, \br)]\big)^2\Big] \le \eps. \label{eq:eps-approximating2} 
 \end{equation}
\end{theorem}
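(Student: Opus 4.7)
The plan is to derandomize the non-constructive proof of~\Cref{fact:yao} via the method of conditional expectations, and then materialize the resulting algorithm as an explicit decision tree. Write $F(x) := \Ex_{\br}[R(x,\br)]$ and set $k := \lceil 1/(4\eps)\rceil$. The non-constructive argument shows that for i.i.d.\ uniform $\br_1,\ldots,\br_k \in \zo^m$, the empirical mean
\[
\widetilde D_{r_1,\ldots,r_k}(x) \;:=\; \tfrac{1}{k}\sum_{j=1}^k R(x,r_j)
\]
is a deterministic $qk = O(q/\eps)$-query algorithm satisfying $\Ex[\Ex_{\bx}[(\widetilde D - F)^2]] \le 1/(4k) \le \eps$ by a standard second-moment bound (using $\mathrm{Var}_{\br}[R(x,\br)] \le 1/4$). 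Since the analysis only uses pairwise independence of the $\br_j$'s, it suffices to deterministically exhibit a single good tuple $(r_1^\star,\ldots,r_k^\star) \in (\zo^m)^k$ and output $D := \widetilde D_{r^\star}$.

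I would find such a tuple greedily via the method of conditional expectations: having fixed $r_1^\star,\ldots,r_{j-1}^\star$, choose $r_j^\star \in \zo^m$ to minimize the conditional expectation of $\Ex_{\bx}[(\widetilde D - F)^2]$ over the remaining pairwise-independent $\br_{j+1},\ldots,\br_k$. Expanding the square shows this reduces to evaluating inner products of the form $\Ex_{\bx}[R(\bx,r)R(\bx,r')]$ for deterministic $r,r'$, together with terms of the form $\Ex_{\br'}[\Ex_{\bx}[R(\bx,r)R(\bx,\br')]]$; the $r_j$-independent constants (including $\Ex_{\bx}[F(\bx)^2]$) can be ignored for the purposes of minimization. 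The main obstacle --- and the key structural insight --- is the efficient exact evaluation of each such inner product in $\poly(N)$ time rather than by the naive $2^n$ sum over inputs. The observation is that $R(\cdot,r)$ and $R(\cdot,r')$ are depth-$q$ decision trees with at most $2^q$ leaves each, so the pointwise product $R(\cdot,r)\cdot R(\cdot,r')$ decomposes into contributions from the $\le 4^q$ joint leaf-pairs; summing their subcube probabilities weighted by leaf values yields $\Ex_{\bx}[R(\bx,r)R(\bx,r')]$ in $\poly(N,2^q)$ time. Terms with an inner expectation over $\br'$ are computed analogously, contributing an additional $2^m$ factor, which is $\poly(N)$ after the WLOG reduction $m \le O(\log N)$ (effected, e.g., by Newman's lemma, or by observing that $R$'s description already enumerates at most $N$ distinct deterministic trees in the support). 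With $2^m \le N$ candidates per greedy step and $k = O(1/\eps)$ steps in total, the derandomization runs in $\poly(N,1/\eps)$ time.

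Finally, materialize $D = \widetilde D_{r^\star}$ as an explicit decision tree of depth $qk = O(q/\eps)$, which can have up to $2^{O(q/\eps)}$ leaves; writing it out is the dominant cost, accounting for the $2^{O(q/\eps)}$ factor in the overall runtime $\poly(N) \cdot 2^{O(q/\eps)}$. The $L^2$ guarantee~(\ref{eq:eps-approximating2}) follows directly from the greedy conditional-expectation argument combined with the non-constructive second-moment bound of~\Cref{fact:yao}.
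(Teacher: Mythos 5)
Your proposal takes a genuinely different route from the paper's proof, and the high-level structure is sound, but it has a gap in the step that reduces the effective randomness to $O(\log N)$.

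The paper proceeds by (i) applying a PRG for decision trees (via the Fourier $L_1$-norm bound and small-bias spaces) to deterministically replace $R$ with an $\eps$-close $\RDT$ $\tilde R$ of randomness complexity $\tilde m = O(\log(N/\eps))$; (ii) generating a $\poly(N,1/\eps)$-size list of candidate $\DDT$s by enumerating seeds of a pairwise-independent sampler; and (iii) brute-force testing each candidate by computing its $L_2$ distance to $\mu_R$ exactly via the Fourier expansion. You instead derandomize the averaging argument directly via the method of conditional expectations and compute the needed inner products $\Ex_{\bx}[R(\bx,r)R(\bx,r')]$ by summing over $\le 4^q$ leaf-pairs; this is a valid alternative both to the paper's ``sample-and-test'' scheme and to its Fourier-based distance computation. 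What buys you this directness is the observation that one only needs to enumerate over the distinct $\DDT$s in the support of $R_{\br}$, rather than over all $2^m$ strings.

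That enumeration is where the gap is. You offer two justifications, neither of which is complete as stated. Newman's lemma is itself a probabilistic existence argument (draw $\poly(N/\eps)$ strings and apply Chernoff); making it constructive is precisely the derandomization problem you are trying to solve, so invoking it here is circular unless you supply a constructive variant --- which is effectively what the paper's PRG lemma does. The second justification --- that $R$'s description enumerates at most $N$ distinct deterministic trees $R_r$ --- is actually \emph{true}, but it is a nontrivial structural claim about $\RDT$s and should be proved rather than asserted. A correct argument: for each leaf $\ell$ of $R$, let $s(\ell) \in \zo^{\le m}$ be the sequence of stochastic choices on the root-to-$\ell$ path; then $R_r$ as a function is determined by the set of reachable leaves $\{\ell : s(\ell) \text{ is a prefix of } r\}$, which in turn is determined by the longest signature $s(\ell)$ that is a prefix of $r$. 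Since this is an element of the $\le N$-element set of signatures, there are at most $N$ distinct $R_r$'s, and they (with their probabilities) can be enumerated explicitly in $\poly(N)$ time. With this lemma proved, your conditional-expectations argument goes through (noting, as a small clarification, that the relevant distribution is the i.i.d.\ product distribution over $\br_1,\ldots,\br_k$ --- pairwise independence is only the property of that distribution that the second-moment bound uses, not a separate distribution you condition over), and the overall runtime matches the theorem.
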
 
The query complexity of $D$ matches the guarantee of~\Cref{fact:yao}, and is optimal by~\Cref{fact:lower-bound}.  The runtime of $\mathcal{A}$ is near-optimal: runtime $N + 2^{\Omega(q/\eps)}$ is necessary, since it takes time $N$ to read the description of $R$, and there are many examples of deterministic $\Theta(q/\eps)$-query algorithms $D$ that have description length $2^{\Omega(q/\eps)}$ (e.g.~the example of~\Cref{fact:lower-bound}).

As mentioned above, Zimand~\cite{Zim07} and Shaltiel~\cite{Sha11} considered the problem of constructivizing (a variant of)~\Cref{fact:yao}.  We discuss the results of~\cite{Zim07,Sha11} and compare them with~\Cref{thm:derand-alg} in~\Cref{sec:ZS}.

\subsubsection{Instance-optimal and online derandomization} 
\label{sec:extensions} 

With~\Cref{thm:derand-alg} in hand, we further consider two extensions of the basic problem of constructive derandomization: 
\begin{itemize}[leftmargin=0.5cm]
\item[$\circ$] {\sl Instance optimal derandomization:}  For any randomized $q$-query algorithm $R$, return a deterministic $q^\star_R$-query algorithm $D$, where $q^\star_R$ is minimum query complexity of any deterministic algorithm that $\eps$-approximates $R$.  By~\Cref{fact:yao} we have that $q^{\star}_R \le O(q/\eps)$, but $q^\star_R$ can in general be much smaller than $O(q/\eps)$.

Instance optimality has emerged as an influential notion in modern algorithmic research~\cite{FLN03,VV17}, as part of a broad effort to develop general frameworks for going beyond worst-case analysis~\cite{Rou19}. 
\item[$\circ$] {\sl Online derandomization:}   The algorithm of~\Cref{thm:derand-alg} constructs a deterministic query algorithm $D$ that can then be evaluated on any input $x$ of our choice.  What if we are only interested in a specific input $\underline{x}$?  Can we deterministically approximate $\E_{\br}[R(\underline{x},\br)]$, in time that is faster than constructing $D$ in its entirety and then evaluating $D$ on $\underline{x}$? 
\end{itemize}

As our algorithm for~\Cref{thm:derand-alg} does not seem to be amendable to either of the above extensions, we develop new techniques and fundamentally different algorithms to achieve them.  These techniques turn out to be of interest and utility beyond the specific applications above: for our instance-optimal derandomization algorithm, we develop a general framework that we will later on also use to derive an instance-optimal constructivization of~\hyperlink{nisan-anchor}{Nisan's Theorem}.   For  our online derandomization algorithm, we generalize the powerful O'Donnell--Saks--Schramm--Servedio inequality~\cite{OSSS05} from deterministic to randomized query algorithms. 

\pparagraph{An instance-optimal algorithm.}  We begin by describing our instance-optimal algorithm.

\begin{notation}[$q^\star_R$]
\label{notation:minimal-general} 
Let $R : \zo^n \times \zo^m \to [0,1]$ be a randomized query algorithm.  We write $q^{\star}_R$ to denote the \emph{minimum query complexity of any deterministic algorithm that $\eps$-approximates $R$}:  
\[ q^\star_R \coloneqq \{\,q' \colon \text{there is a $q'$-query $\DDT$ $D$ that $\eps$-approximates $R$\,}\}. \] 
\end{notation}

\begin{theorem}[Instance-optimal derandomization] 
\label{thm:instance-opt} 
There is a deterministic algorithm $\mathcal{A}_{\mathrm{InstanceOpt}}$ with the following guarantee.  Given as input a randomized $q$-query algorithm $R : \zo^n \times \zo^m \to [0,1]$ with description length $N$ and an error parameter $\eps \in (0,\frac1{2})$, this algorithm $\mathcal{A}_{\mathrm{InstanceOpt}}$ runs in 
\[ \poly(N)\cdot n^{O(q^{\star}_R)} \] 
 time and returns a deterministic $q^{\star}_R$-query algorithm $D : \zo^n \to [0,1]$ satisfying
  \[  \Ex_{\bx}\Big[\big(D(\bx) - \Ex_{\br} [R(\bx, \br)]\big)^2\Big] \le \eps. \] 
 \end{theorem}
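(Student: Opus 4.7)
The plan is to combine a straightforward dynamic program over subcubes with an outer loop that increases a depth budget $q'$ one unit at a time. Let $f(x) \coloneqq \Ex_{\br}[R(x,\br)]$ denote the target function. For a partial assignment $\rho$ (equivalently, a subcube of $\zo^n$) and depth budget $d$, define
\[
T(\rho,d) \coloneqq \min_{\text{depth-}\leq d\text{ DT }D}\ \Ex_{\bx\mid\rho}\bigl[(D(\bx) - f(\bx))^2\bigr].
\]
A standard case split---either make the current node a leaf outputting the optimal constant, or split on some as-yet-unqueried coordinate---yields the recurrence
\[
T(\rho,d) \;=\; \min\!\Bigg(\mathrm{Var}_{\bx\mid\rho}[f(\bx)],\ \min_{i\notin\mathrm{vars}(\rho)} \tfrac{1}{2}\bigl(T(\rho_{i\to 0},d-1) + T(\rho_{i\to 1},d-1)\bigr)\Bigg),
\]
with base case $T(\rho,0) = \mathrm{Var}_{\bx\mid\rho}[f(\bx)]$, using the fact that the optimal constant predictor for a bounded function on a subcube is its conditional mean, with squared error equal to its conditional variance. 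An argmin-traceback along the DP table recovers an actual optimal tree alongside the value $T(\emptyset,q')$.

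First I would implement the base-case computations exactly and deterministically in $\poly(N)$ time. The conditional mean $\Ex_{\bx\mid\rho}[f(\bx)] = \Ex_{\bx,\br\mid\rho}[R(\bx,\br)]$ is evaluated by traversing the given description of $R$ and averaging at every branch over the unconstrained $x$-coordinates and over $\br$. For the conditional second moment I would rewrite $f(\bx)^2 = \Ex_{\br_1,\br_2}[R(\bx,\br_1)R(\bx,\br_2)]$, which is the expectation of a product randomized object of description length $\poly(N)$ obtained by running two independent copies of $R$ on the same input; the same traversal scheme applies. Subtracting the squared mean then yields the conditional variance.

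Next I would wrap the DP in an outer loop over $q' = 1, 2, \ldots$, running it with depth bound $q'$ and returning the first reconstructed tree whose error is $\leq \eps$. Correctness is immediate from the definition of $q^\star_R$: for $q' < q^\star_R$, no depth-$q'$ DT achieves error $\leq \eps$, so $T(\emptyset,q') > \eps$ and the loop continues; at $q' = q^\star_R$ a witnessing tree exists, so $T(\emptyset,q^\star_R) \leq \eps$ and the argmin-traceback produces a valid tree of depth $\leq q^\star_R$. For the running time, the DP at depth bound $q'$ has at most $q' \cdot \sum_{k\leq q'}\binom{n}{k}2^{k} \leq n^{O(q')}$ states, each performing a $\poly(N)$ base-case evaluation and $n$ memoized recurrence lookups; summing over $q' \leq q^\star_R$ gives total time $\poly(N)\cdot n^{O(q^\star_R)}$.

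The main obstacle I anticipate is the base-case step: for the DP to genuinely output an optimal tree, the conditional mean and second moment of $f$ must be computed \emph{exactly and deterministically} in polynomial time, since any Monte Carlo estimation would forfeit both the ``deterministic'' output guarantee and the exact instance-optimality certification. The reduction from $f(\bx)^2$ to the expectation of the product of two independent copies of $R$ is the central trick that keeps this step in $\poly(N)$ time, and checking that the resulting composite object still admits efficient traversal directly from the raw description of $R$ is the detail that needs the most care.
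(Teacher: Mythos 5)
Your proposal is correct and matches the paper's approach in its essential structure: the recurrence you write is exactly the recursive backtracking procedure $\textsc{Find}$ from the paper's general framework (try every unqueried coordinate at the root, recursively optimize both subtrees, compare against the best leaf), and the outer loop over increasing query budgets $q'$ is the same device used to achieve instance-optimality. The one place you genuinely depart from the paper is the base-case subroutine: the paper computes conditional means and variances by extracting the Fourier representation of $\mu_R$ via $\mathcal{A}_{\mathrm{Fourier}}$ and then applying Parseval, which costs $\poly(N,2^q)$, whereas your pairing trick $f(\bx)^2 = \Ex_{\br_1,\br_2}[R(\bx,\br_1)R(\bx,\br_2)]$, evaluated by summing over pairs of leaves of $R_\rho$, gives the conditional second moment directly in $\poly(N)$ time without ever materializing the (potentially $2^q$-coefficient) Fourier spectrum --- a cleaner and mildly more efficient implementation of the same step.
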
 


As alluded to above, we derive~\Cref{thm:instance-opt} as a corollary of a general framework that we develop for achieving instance-optimality in the derandomization of query algorithms with respect to a broad class of error metrics: 

\begin{theorem}[General framework for instance-optimal  derandomization; informal version]
\label{thm:framework} 
Let $\mathcal{E} : \{ \RDT\text{s}\} \times \{ \DDT\text{s}\} \to [0,1]$ be a ``\,$t$-efficient" error metric for measuring the distance between $\RDT$s and $\DDT$s.  There is a deterministic algorithm, $\mathcal{A}_{\mathrm{InstanceOpt},\mathcal{E}}$ with the following guarantee: Given as input a $q$-query $\RDT$ $R$ with description length $N$ and an error parameter $\eps \in (0,1)$, for
    \[ q^\star_{R,\mathcal{E}} \coloneqq \min \{\,q' \colon \text{there is a $q'$-query $\DDT$ $D$ such that $
    \mathcal{E}(R,D) \leq \epsilon$\,}\}. \] 
    $\mathcal{A}_{\mathrm{InstanceOpt},\mathcal{E}}$ runs in
    \[ \poly(N,t, n^{q^{\star}_{R,\mathcal{E}}})  \] 
    time and returns a $q^{\star}_{R,\mathcal{E}}$-query $\DDT$ $D$ satisfying $\mathcal{E}(R,D) \leq \epsilon$.   
\end{theorem}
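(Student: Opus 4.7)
The plan is to construct an instance-optimal DDT by iterating a target depth $q' = 0, 1, 2, \ldots$ and running, for each $q'$, a bottom-up dynamic program over restrictions that finds the depth-$q'$ DDT minimizing $\mathcal{E}$-error. Since this minimum is non-increasing in $q'$, the smallest $q'$ for which it drops to $\eps$ is precisely $q^\star_{R,\mathcal{E}}$, and we return the witnessing tree.

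The natural formalization of ``$t$-efficient'' that makes the DP go through is the following two-part property: (i) $\mathcal{E}(R, D)$ decomposes additively over the leaves of $D$, i.e.\ $\mathcal{E}(R,D) = \sum_{\ell} \mathcal{E}_{\mathrm{leaf}}(R|_{\rho_\ell}, v_\ell)$ where the sum ranges over the leaves of $D$, $\rho_\ell$ is the restriction leading to $\ell$, and $v_\ell$ is its label; and (ii) for every restriction $\rho$ the optimal leaf value $v^\star(\rho) \coloneqq \argmin_v \mathcal{E}_{\mathrm{leaf}}(R|_\rho, v)$ and its value $m(\rho) \coloneqq \mathcal{E}_{\mathrm{leaf}}(R|_\rho, v^\star(\rho))$ can be computed deterministically in time $t$. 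Both hold for the squared-loss metric of~\Cref{thm:instance-opt} via a standard empirical-estimation subroutine on $R$. Letting $T(\rho, d)$ denote the minimum total leaf contribution achievable on the cell of $\rho$ by a sub-DDT of depth $d$, we have $T(\rho, 0) = m(\rho)$ and
\begin{equation*}
T(\rho, d) = \min\Big\{ m(\rho),\ \min_{i \notin \mathrm{dom}(\rho)} \big[ T(\rho \cup \{x_i=0\}, d-1) + T(\rho \cup \{x_i=1\}, d-1)\big]\Big\}.
\end{equation*}
There are at most $\sum_{k \le q'} \binom{n}{k} 2^k = n^{O(q')}$ reachable $(\rho, d)$ states, each filled in time $O(n + t)$ once its two children are filled; $\argmin$ pointers let us extract the optimal tree. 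Folding in the $\poly(N)$ cost of reading $R$ yields total runtime $\poly(N, t, n^{q^\star_{R,\mathcal{E}}})$.

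The main obstacle will be choosing the formalization of ``$t$-efficient'' carefully enough that (a) it captures the error metrics we actually care about --- in particular the squared-loss setting of~\Cref{thm:instance-opt} and the zero-error setting arising in the instance-optimal constructivization of~\hyperlink{nisan-anchor}{Nisan's Theorem} --- and (b) approximation in the computation of $m(\rho)$ does not destroy instance optimality. For real-valued metrics $m(\rho)$ cannot be computed exactly, so the guarantee should be an additive $\pm\,\eps/n^{O(q^\star)}$ estimate, absorbed into $t$ by sampling and a union bound over the polynomially-many DP states; with this relaxation the DP's correctness and the runtime accounting are routine, and the same template will then be instantiated to yield~\Cref{thm:instance-opt} and, later, an instance-optimal constructivization of~\hyperlink{nisan-anchor}{Nisan's Theorem}.
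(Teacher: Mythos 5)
Your core algorithmic plan is essentially the paper's: both iterate a depth budget $q' = 0, 1, 2, \ldots$, both search over depth-$q'$ trees by a recursion over restrictions that exploits the fact that a ``natural'' error metric decomposes over the tree structure (the paper's version is
$\mathcal{E}(R_\pi, D_i) = \tfrac{1}{2}\big(\mathcal{E}(R_{\pi\cup\{x_i\leftarrow 0\}}, (D_i)_{\mathrm{left}}) + \mathcal{E}(R_{\pi\cup\{x_i\leftarrow 1\}}, (D_i)_{\mathrm{right}})\big)$,
which unfolds to your leafwise additive decomposition), and both bottom out at the best constant. The paper's \textsc{Find} is a top-down recursive backtracking that does not memoize, while you memoize into a bottom-up DP over $(\rho,d)$ states; both land at the same $\poly(N, t, n^{q^\star_{R,\mathcal{E}}})$ bound, so this is a stylistic difference only. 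Your proposed reformulation of $t$-efficiency (leafwise decomposition plus cheaply computable optimal leaf contribution $m(\rho)$) is interchangeable with the paper's (``natural'' metric plus exact computation of $\mathcal{E}(R,D)$ and of the best constant).

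The one thing you should fix is the final paragraph. You assert that for real-valued metrics $m(\rho)$ cannot be computed exactly and therefore needs to be replaced by a $\pm\,\eps/n^{O(q^\star)}$ estimate obtained ``by sampling and a union bound.'' That step would make the algorithm randomized, which is exactly what the theorem is trying to avoid, and the concern it responds to does not actually arise. For the metrics the paper instantiates the framework with, $\mathcal{E}$ is computed \emph{exactly} and \emph{deterministically}: for $L_2$ error, one extracts the Fourier representation of $\mu_R$ deterministically in $\poly(N, 2^q)$ time by walking the tree, and then $\|D - \mu_R\|_2^2 = \sum_{S}\big(\widehat{D}(S) - \widehat{\mu_R}(S)\big)^2$ by Parseval, with the best constant simply $\widehat{\mu_R}(\emptyset)$; the Bayes-error metric $\Pr_{\bx,\br}[R(\bx,\br)\neq D(\bx)]$ used to constructivize Nisan's theorem is computed by an analogous exact path-by-path calculation. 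So the $t$-efficiency hypothesis is met with exact arithmetic, no approximation is introduced, and there is nothing to union-bound over --- the instance-optimality guarantee goes through without relaxation.
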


\Cref{thm:instance-opt} follows as an immediate corollary of~\Cref{thm:framework} by instantiating it with the error metric $\mathcal{E}$ being $L_2$ error.  The framework of~\Cref{thm:framework} is fairly versatile:  in~\Cref{sec:constructive-nisan} we will see that it also yields an instance-optimal constructivization of~\hyperlink{nisan-anchor}{Nisan's Theorem} (though this application will require choosing the error metric $\mathcal{E}$ carefully and involve more technical work).

\pparagraph{An online algorithm.} Our online algorithm as follows: 

\begin{theorem}[Online derandomization] 
\label{thm:online} 
There is a deterministic algorithm $\mathcal{A}_{\mathrm{Online}}$ with the following guarantee.  Given as input a randomized $q$-query algorithm $R : \zo^n \times \zo^m \to [0,1]$ with description length $N$, an error parameter $\eps \in (0,\frac1{2})$, and an input $\underline{x} \in \zo^n$, this algorithm $\mathcal{A}_{\mathrm{Online}}$ runs in 
\[ \poly(N, q,1/\eps) \] 
 time, makes $O(q^2/\eps^3)$ queries to $\underline{x}$,   and returns a value $\mathcal{A}_{\mathrm{Online}}(\underline{x}) \in [0,1]$.  The output values of $\mathcal{A}_{\mathrm{Online}}$ satisfy:  
 \[ \Ex_{\bx}\Big[ \big(\mathcal{A}_{\mathrm{Online}}(\bx) - \Ex_{\br} [R(\bx, \br)]\big)^2\Big] \le \eps. \] 
\end{theorem}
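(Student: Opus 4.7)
The plan is an adaptive procedure that, on input $\underline x$, greedily queries coordinates so as to reduce the conditional variance of $g(x) := \Ex_{\br}[R(x,\br)]$ and then returns the posterior mean. Concretely, I would maintain a restriction $\rho$ consistent with $\underline x$ (initially empty), and at each of $T = O(q^2/\eps^3)$ steps choose an unqueried coordinate $i$ using the current restricted RDT $R|_\rho$ and the induced function $g|_\rho$, query $\underline x_i$, and extend $\rho$; at the end return $\mathcal{A}_{\mathrm{Online}}(\underline x) := \Ex_{\bx\sim U|_\rho}[g(\bx)]$. Each step runs in $\poly(N)$ time: because $R$ is given explicitly as an RDT of description length $N$, all relevant quantities---$\Ex[g|_\rho]$, $\mathrm{Var}(g|_\rho)$, the query probabilities $\delta_i(R|_\rho)$, and the influences $\mathrm{Inf}_i(g|_\rho)$---can be computed exactly by summing over $R$'s random seeds and the paths of its constituent DTs, weighted by the subcube~$\rho$.

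Correctness of the $L_2$ bound reduces to the law of total variance. The algorithm defines an adaptive decision tree whose leaves partition $\zo^n$, and under uniform $\bx$ the resulting restriction $\boldsymbol\rho(\bx)$ satisfies: conditioned on $\boldsymbol\rho(\bx) = \rho$, $\bx$ is uniform on the subcube fixed by $\rho$. Hence
\begin{equation*}
\Ex_{\bx}\big[(\mathcal{A}_{\mathrm{Online}}(\bx) - g(\bx))^2\big] \;=\; \Ex_{\boldsymbol\rho}\big[\mathrm{Var}(g|_{\boldsymbol\rho})\big],
\end{equation*}
so the theorem reduces to bounding the expected leaf-variance by $\eps$ after $T$ queries.

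The tool driving the coordinate choice is a randomized generalization of the OSSS inequality \cite{OSSS05}. Classically, any $f$ computed by a $q$-query deterministic algorithm satisfies $\mathrm{Var}(f) \le \sum_i \delta_i\, \mathrm{Inf}_i(f)$ with $\sum_i \delta_i \le q$. I would prove the analogous statement for RDTs,
\begin{equation*}
\mathrm{Var}(g) \;\le\; \sum_i \delta_i(R)\,\mathrm{Inf}_i(g), \qquad \delta_i(R) := \Prx_{\bx,\br}\big[R \text{ queries coordinate } i\big],
\end{equation*}
still with $\sum_i \delta_i(R) \le q$, by running the standard OSSS coupling/martingale argument on the joint input-randomness space $(\bx,\br)$ and accounting only for $\bx$-queries. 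Applied to any restriction $\rho$ with $\mathrm{Var}(g|_\rho) > \eps$, this yields an unqueried coordinate $i$ with $\mathrm{Inf}_i(g|_\rho) \ge \eps/q$, which is the coordinate the algorithm queries.

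The main obstacle is the query-complexity bound. Querying $x_i$ decreases the conditional variance of $g$ by only $\tfrac14(\Ex[g|_{\rho, x_i=1}] - \Ex[g|_{\rho, x_i=0}])^2$, a quantity that can be much smaller than $\mathrm{Inf}_i(g|_\rho)$---for parity-like $g|_\rho$ it vanishes even when $\mathrm{Inf}_i$ is maximal---so a naive single-step drop argument cannot give the $\Omega(\mathrm{Inf}_i)$ progress one might hope for. Following the strategy of Aaronson and Ambainis \cite{AA14}, I would amortize progress across longer sequences of queries: when $\mathrm{Inf}_i$ is large but the one-step mean gap is small, the difference $g|_{\rho, x_i=1} - g|_{\rho, x_i=0}$ itself has large variance and therefore depends substantially on other coordinates, so subsequent queries in the two branches must convert that hidden $\mathrm{Inf}_i$ into variance reduction further down the tree. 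Quantified via an appropriate AA14-style potential (tracking both current expected variance and a secondary Fourier-mass term), this amortization yields expected per-query variance reduction $\Omega(\eps^3/q^2)$; since $\mathrm{Var}(g) \le \tfrac14$, the choice $T = O(q^2/\eps^3)$ then drives $\Ex_{\boldsymbol\rho}[\mathrm{Var}(g|_{\boldsymbol\rho})]$ below $\eps$, completing the proof.
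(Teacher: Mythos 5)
Your setup is aligned with the paper's: the greedy ``query the most influential variable of the restricted mean function'' algorithm, the reduction via the law of total variance to bounding $\Ex_{\boldsymbol\rho}[\mathrm{Var}(\mu_R|_{\boldsymbol\rho})]$, and the RDT extension of the OSSS inequality (which the paper derives from the two-function OSSS inequality rather than by re-running the coupling argument, but the conclusion is the same). You also correctly identify the sticking point: a single query's variance drop $\tfrac14(\Ex[g|_{\rho,x_i=1}]-\Ex[g|_{\rho,x_i=0}])^2$ can vanish even when $\mathrm{Inf}_i$ is large, so ``variance as potential with one-step progress'' does not work.

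Where your proposal has a genuine gap is in resolving that obstacle. Your sketch invokes an unspecified ``AA14-style potential (tracking both current expected variance and a secondary Fourier-mass term)'' and asserts $\Omega(\eps^3/q^2)$ amortized variance reduction per query; as written this is a promissory note, not an argument. The paper sidesteps amortization entirely with a cleaner single-step potential: average subfunction \emph{total influence},
\[
\AvgInf_d \;\coloneqq\; \mathop{\Ex_{\text{paths }\bpi,\ |\bpi|=d}}\big[\Inf(\mu_{\bpi})\big],
\]
together with two facts you did not quite assemble. First, the chain rule $\Inf(f) = \Inf_i(f) + \tfrac12\big(\Inf(f_{x_i=0}) + \Inf(f_{x_i=1})\big)$ shows that querying $x_i$ reduces the potential by exactly $\Inf_i(\mu_\pi)$, deterministically, regardless of whether the one-step mean gap is zero. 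Second, the RDT total-influence bound $\Inf(\mu_R) \le q$ (a consequence of $\Inf(D)\le q$ for deterministic $q$-query trees, averaged over seeds via Jensen) caps the initial budget. Combined with the RDT-OSSS inequality $\Inf_{i^\star}(\mu_\pi) \ge \Var(\mu_\pi)/q$, whenever the expected leaf-variance is still above threshold the potential drops by $\Omega(\eps^2\delta^2/q)$ per level, so after $O(q^2/\eps^2\delta^2)$ levels the expected variance must fall below threshold; Markov and Chebyshev then yield a high-probability pointwise guarantee, from which the $L_2$ bound follows with appropriate choices of $\eps,\delta$. This replaces your speculative amortization with a one-step monotone-potential argument, and it is the missing piece that your proposal would need to supply to be complete.
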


The key qualitative advantage of~\Cref{thm:online} is that $\mathcal{A}_{\mathrm{Online}}$'s runtime is polynomial in all the relevant parameters.  Such a runtime is achievable because we are considering the online version of the problem, where the derandomization algorithm is not expected to return the entire description of the deterministic query algorithm $D$.  We can think of $\mathcal{A}_{\mathrm{Online}}$ as constructing just one branch of $D$: the branch that $\underline{x}$ is consistent with. 

Our algorithm $\mathcal{A}_{\mathrm{Online}}$ and its analysis build on the work of Aaronson and Ambainis~\cite{AA14}, who were interested in {\sl quantum} query algorithms. Recalling~\Cref{remark:AA}, the work of~\cite{AA14} was motivated by the possibility of a quantum analogue of~\Cref{fact:yao}: showing---even just non-constructively---that the acceptance probabilities of a quantum query algorithm $Q$ can be approximated on most inputs by a deterministic query algorithm $D$ (whose query complexity is polynomially related to that of $Q$'s).  \violet{In~\cite{AA14}, the authors posed a Fourier-analytic conjecture about the influence of variables in bounded low-degree polynomials $p : \zo^n \to [0,1]$, and showed that this conjecture would yield a quantum analogue of~\Cref{fact:yao}.  In fact, assuming this Fourier-analytic conjecture, their proof of the quantum analogue of~\Cref{fact:yao} is  even constructive, where the meta-algorithm that constructs $D$ is efficient if $\mathsf{P} = \mathsf{P}^{\#\mathsf{P}}$}.
 This conjecture is now known as the {\sl Aaronson--Ambainis conjecture}, and it remains a major open problem in the analysis of boolean functions~\cite{Sim14}.   

\violet{The first} ingredient in our proof of~\Cref{thm:online} is a lemma showing that the Aaronson--Ambainis conjecture holds for randomized query algorithms: 

\begin{lemma}[Every randomized query algorithm has an influential variable] 
\label{lem:OSSS-RDT}
Let $R : \zo^n \times\zo^m \to [0,1]$ be a randomized $q$-query algorithm and consider its mean function $\mu_R(x) \coloneqq \Ex_{\br}[R(x,\br)].$\footnote{To see the connection to the Aaronson--Ambainis conjecture, note that $\mu_R : \zo^n\to [0,1]$ is a polynomial of degree at most $q$.}
There is a variable $i\in [n]$ such that 
\[ \Inf_i(\mu) \coloneqq \Prx_{\bx}\big[|\mu_R(\bx)-\mu_R(\bx^{\oplus i})|\big] \ge \frac{\Var(\mu_R)}{q},\]
where $\bx^{\oplus i}$ denotes $\bx$ with its $i$-th coordinate flipped.  
\end{lemma}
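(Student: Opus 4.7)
The plan is to lift the classical OSSS inequality~\cite{OSSS05} from deterministic to randomized decision trees by conditioning on the internal randomness of $R$. For each $r \in \zo^m$, the function $f_r(x) \coloneqq R(x,r)$ is computed by a depth-$q$ $\DDT$ $T_r$ (obtained by hard-wiring $R$'s randomness to $r$), and by definition $\mu_R(x) = \Ex_{\br}[f_{\br}(x)]$. The goal is to transfer, via a single averaging step, the ``variance is at most $q$ times the largest influence'' bound that OSSS supplies for each $T_r$ individually up to the mean function $\mu_R$.

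The key technical ingredient will be the more general form of the OSSS inequality in which the tree need not compute the function whose influences appear on the right-hand side: for any $[0,1]$-valued function $h : \zo^n \to [0,1]$ and any $\DDT$ $T$ computing a function $f$,
\[ \mathrm{Cov}(f,h) \;\le\; \sum_{i=1}^n \delta_i(T)\cdot \Inf_i(h),\qquad \delta_i(T) \coloneqq \Prx_{\bx}[\,T \text{ queries coordinate } i\,]. \]
This generalization is implicit in the original OSSS argument and can be derived by the same martingale/``revealment'' proof. Applying it with $T = T_{\br}$, $f = f_{\br}$, and $h = \mu_R$, and then averaging over $\br$, the left-hand side collapses to
\[ \Ex_{\br}\big[\mathrm{Cov}(f_{\br}, \mu_R)\big] \;=\; \Ex_{\bx}\!\big[\mu_R(\bx)^2\big] - \big(\Ex_{\bx}[\mu_R(\bx)]\big)^2 \;=\; \Var(\mu_R), \]
using that $\Ex_{\br}[f_{\br}(x)] = \mu_R(x)$ pointwise, while the right-hand side becomes $\sum_i \delta_i(R)\cdot \Inf_i(\mu_R)$ with $\delta_i(R) \coloneqq \Prx_{\bx,\br}[R \text{ queries coordinate } i]$. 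Since $R$ makes at most $q$ queries on every execution, $\sum_i \delta_i(R) \le q$, so some coordinate $i$ must satisfy $\Inf_i(\mu_R) \ge \Var(\mu_R)/q$.

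The main obstacle is the generalized OSSS inequality above: the more naive approach in which one applies standard OSSS ($\Var(f) \le \sum_i \delta_i(T)\Inf_i(f)$) to each $T_r$ separately and then averages does not close, because by the law of total variance $\Ex_{\br}[\Var(f_{\br})]$ exceeds $\Var(\mu_R)$ by the ``within-$r$'' term $\Ex_{\bx}[\Var_{\br}(f_{\br}(\bx))]$, and $\Ex_{\br}[\Inf_i(f_{\br})]$ is likewise not equal to $\Inf_i(\mu_R)$. It is precisely the freedom to choose a ``test'' function $h$ independent of what the tree computes that allows the two sides to couple correctly under the averaging over $\br$; reassuringly, verifying that the original OSSS argument extends to this form requires no new ideas beyond its proof.
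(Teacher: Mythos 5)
Your proof is correct and matches the paper's essentially line for line: both rely on the two-function form of the OSSS inequality (explicitly stated as Theorem~3.2 in~\cite{OSSS05}, not merely implicit), apply it to each $R_r$ against the fixed test function $\mu_R$, average over $r$ to collapse the left side to $\Var(\mu_R)$, and then invoke $\sum_i \delta_i(R) \le q$. The only superficial difference is that the paper's version carries an absolute value on the covariance and removes it via $\E[|\bX|] \ge |\E[\bX]|$, whereas you use the signed form directly; this is immaterial.
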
 

\Cref{lem:OSSS-RDT} is in turn a generalization of the analogous inequality for {\sl deterministic} query algorithms, a powerful result due to O'Donnell, Saks, Schramm, and Servedio~\cite{OSSS05}.  We show that~\Cref{lem:OSSS-RDT} is a straightforward consequence of a ``two-function version" of the~\cite{OSSS05} inequality; this two-function version is also due to~\cite{OSSS05}.

\violet{The second ingredient in our proof is a modification of~\cite{AA14}'s algorithm and analysis to remove their assumption of $\mathsf{P} = \mathsf{P}^{\#\mathsf{P}}$ in the case of randomized query algorithms.  In~\cite{AA14}'s analysis, this assumption underlies their design of an efficient deterministic algorithm for computing the influence of variables within quantum query algorithms.  We give an unconditional, efficient algorithm in the case of randomized query algorithms.}

\subsubsection{Comparison of~\Cref{thm:derand-alg,thm:instance-opt,thm:online}}  While both~\Cref{thm:instance-opt,thm:online} improve upon~\Cref{thm:derand-alg} in qualitative ways, neither strictly improves upon~\Cref{thm:derand-alg}.  The runtime of $\mathcal{A}_{\mathrm{InstanceOpt}}$ from~\Cref{thm:instance-opt} is $\poly(N)\cdot n^{O(q^\star_R)}$, which is incomparable to the runtime of  $\mathcal{A}$ from~\Cref{thm:derand-alg} ($\poly(N)\cdot 2^{O(q/\eps)}$).  The algorithm $\mathcal{A}_{\mathrm{Online}}$ of~\Cref{thm:online} has query complexity $O(q^2/\eps^3)$, whereas the algorithm of~\Cref{thm:derand-alg} returns $D$ with query complexity $O(q/\eps)$.  \violet{The possibility of designing a unified algorithm that achieves the ``best of all worlds" is an interesting avenue for future work.}

\subsection{Constructive version of Nisan's theorem} 
\label{sec:constructive-nisan} 

We now turn to the second strand of our work (as described on~\cpageref{two-strands}): we consider the special case of randomized query algorithms for {\sl total decision problems} and the problem of constructivizing~\hyperlink{nisan-anchor}{Nisan's Theorem}.  Recall that~\hyperlink{nisan-anchor}{Nisan's Theorem} establishes the existence of a {\sl zero-error} derandomization of randomized $q$-query algorithms that solve total decision problems $f : \zo^n \to \zo$ with bounded error: it establishes the existence of a deterministic $O(q^3)$-query algorithm that computes $f$ exactly.

Using the general framework we developed for proving~\Cref{thm:instance-opt} (\Cref{thm:framework}), we obtain the following instance-optimal constructivization of~\hyperlink{nisan-anchor}{Nisan's Theorem}.    In this context, the corresponding notion of minimal deterministic query complexity is the following: 

\begin{notation}[$q^\star_R$]
\label{notation:minimal-2}
Let $R : \zo^n \times \zo^m \to \zo$ be a randomized query algorithm that computes $f : \zo^n \to \zo$ with bounded error.  We write $q^{\star}_R$ to denote the \emph{minimum query complexity of any deterministic algorithm that computes $f$ exactly}:  
\[ q^\star_R \coloneqq \{\,q' \colon \text{there is a $q'$-query $\DDT$ $D$ such that $D(x) = f(x)$ for all $x$\,}\}. \] 
\end{notation}

\begin{restatable}[Instance-optimal constructivization of~\hyperlink{nisan-anchor}{Nisan's Theorem}]{theorem}{nisan}
\label{thm:constructive-Nisan} 
There is a deterministic algorithm $\mathcal{A}_{\mathrm{Nisan}}$ with the following guarantee.  Given as input a randomized $q$-query algorithm $R$ with description length $N$ that computes function $f : \zo^n \to \zo$ with bounded error, this algorithm $\mathcal{A}_{\mathrm{Nisan}}$ runs in 
\[ \poly(N)\cdot n^{O(q^3)} \] 
 time and returns a $q^{\star}_R$-query deterministic decision tree $D : \zo^n \to \zo$ that computes $f$ exactly: $D(x) = f(x)$ for all $x \in \zo^n$.   
\end{restatable}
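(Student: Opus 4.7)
The plan is to deduce Theorem~\ref{thm:constructive-Nisan} as a corollary of the general framework of Theorem~\ref{thm:framework}, instantiated with an error metric tailored to the zero-error setting. Specifically, I would define
\[
\mathcal{E}(R, D) \coloneqq \max_{x \in \zo^n} \Prx_{\br}\bigl[R(x, \br) \neq D(x)\bigr],
\]
and set $\eps = 1/3$. The bounded-error promise on $R$ produces a gap: if $D(x) = f(x)$ then $\Prx_{\br}[R(x,\br) \neq D(x)] \leq 1/3$, while if $D(x) \neq f(x)$ then $\Prx_{\br}[R(x,\br) \neq D(x)] \geq 2/3$. Hence $\mathcal{E}(R, D) \leq 1/3$ holds if and only if $D$ computes $f$ exactly, and so the parameter $q^\star_{R,\mathcal{E}}$ of Theorem~\ref{thm:framework} coincides with $q^\star_R$ from Notation~\ref{notation:minimal-2}.

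Given such an $\mathcal{E}$, the framework returns a $q^\star_R$-query DDT $D$ with $D \equiv f$ in time $\poly(N, t, n^{q^\star_R})$, where $t$ is the efficiency parameter of $\mathcal{E}$. Combining this with \hyperlink{nisan-anchor}{Nisan's Theorem}, which gives $q^\star_R \leq O(q^3)$, yields the claimed runtime $\poly(N) \cdot n^{O(q^3)}$, provided that $t = \poly(N)$. The bulk of the proof therefore reduces to verifying that the metric $\mathcal{E}$ above is $\poly(N)$-efficient in the sense required by the framework.

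This $\poly(N)$-efficiency will be the main obstacle, and it is presumably the ``more technical work'' alluded to in Section~\ref{sec:constructive-nisan}. The worst-case nature of $\mathcal{E}$ means that, in contrast to Theorem~\ref{thm:instance-opt} where $L_2$ error could be estimated by Monte-Carlo sampling, we cannot verify $\mathcal{E}(R,D) \leq 1/3$ by sampling inputs alone. Instead, the plan is to implement the per-leaf check underlying the framework via a Nisan-style sensitivity argument applied to $\mu_R(x) \coloneqq \Ex_{\br}[R(x,\br)]$: at each candidate leaf with partial assignment $\pi$, either exhibit a coordinate outside $\pi$ that is sensitive for $\mu_R|_\pi$ (which certifies that the leaf must be split further), or else certify that $\mu_R|_\pi$ is constant across the Boolean cube via the bounded-error gap and label the leaf by evaluating $\mu_R$ at a single canonical extension of $\pi$. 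Since $R$ is a depth-$q$ RDT, $\mu_R$ is a degree-$q$ multilinear polynomial whose values lie in $[0, 1/3] \cup [2/3, 1]$ on Boolean inputs; exploiting this low-degree structure together with the gap should allow both steps to be implemented in $\poly(N)$ time. Embedding this decision procedure into the framework's search over DDT shapes---while preserving its instance-optimal query-count guarantee---completes the plan.
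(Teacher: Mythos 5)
There is a genuine gap, and it lies exactly where you flagged the ``main obstacle'': your proposed worst-case metric $\mathcal{E}(R,D) = \max_x \Prx_{\br}[R(x,\br)\neq D(x)]$ fails the framework's requirements in two ways. First, it is not \emph{natural} in the sense of Definition~\ref{def:natural-efficient}: a max over $x$ cannot be written as $\Ex_{\bx}[d(\mu_R(\bx),D(\bx))]$. The correctness of $\textsc{Find}$ (Lemma~\ref{lemma:find correctness}) relies on the recursion $\mathcal{E}(R_\pi,D_i) = \tfrac12\bigl(\mathcal{E}(R_{\pi\cup\{x_i\leftarrow 0\}},\cdot) + \mathcal{E}(R_{\pi\cup\{x_i\leftarrow 1\}},\cdot)\bigr)$, which is specific to expectation-style metrics. (A $\max$ of the two children would also let $\textsc{Find}$ decouple, so this first point is a patchable technicality — but the framework as stated doesn't cover it.) Second and more seriously, the $\poly(N)$-efficiency of your metric is not established. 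Computing $\max_x|\mu_{R_\pi}(x) - D_\pi(x)|$ over a subcube is a worst-case optimization of a degree-$q$ multilinear polynomial over the hypercube, and neither low degree nor the gap property $\mu_R(x)\in[0,1/3]\cup[2/3,1]$ is known to make this tractable. Your sketched ``Nisan-style sensitivity'' subroutine (exhibit a sensitive coordinate, or certify constancy of $\mathrm{round}(\mu_R|_\pi)$) is implicitly solving an $\mathsf{NP}$-type search/certification problem: finding a witness $x$ where the rounding flips, or proving none exists, and you do not explain how to do this in $\poly(N)$ time. Recall that the \emph{non}-constructivity of Nisan's theorem is precisely the difficulty of carrying out such sensitivity/certificate arguments algorithmically; one cannot assume it away.

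The paper's proof avoids all of this by choosing an \emph{average-case} metric $\mathcal{E}_{\BayesError}(R,D) := \Prx_{\bx,\br}[R(\bx,\br)\neq D(\bx)]$, which is manifestly natural and which Lemma~\ref{lem:BO} shows is $2^q$-efficient via exact Fourier computation of $\mu_{R_\pi}$ (no worst-case search required). The price is that there is no fixed threshold like $1/3$: instead the paper proves (Lemma~\ref{lemma:proxy error}) that $\mathcal{E}_{\BayesError}(R,D)$ achieves a unique minimum value $\varepsilon_R$ exactly when $D\equiv f$. The algorithm then has a two-phase structure: first invoke $\textsc{Find}$ with the $O(q^3)$ budget guaranteed by Nisan's theorem to \emph{learn} $\varepsilon_R$, and then rerun the instance-optimal framework targeting error $\varepsilon_R$ to recover the $q^\star_R$-query tree. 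This ``compute the right $\eps$ first'' step is the key idea your proposal is missing, and it is what replaces the intractable worst-case certification in your plan.
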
 


To our knowledge, prior to our work there were no known constructivizations of~\hyperlink{nisan-anchor}{Nisan's Theorem}, even one with just a worst-case bound on the query complexity of $D$ rather than an instance-optimal one (i.e.~a bound of $O(q^3)$ as guaranteed by~\hyperlink{nisan-anchor}{Nisan's Theorem}, rather than $q^\star_R$).  Indeed, Nisan himself remarked ``This result is particularly surprising as it is not achieved by simulation"~\cite[p.~329]{Nis89}.  This non-constructive aspect of Nisan's proof was further highlighted in the work of Impagliazzo and Naor~\cite{IN88}, who sought a constructive version to derive consequences  the Turing machine model of computation; we discuss the work of~\cite{IN88} in the next subsection.

\subsubsection{Consequences for derandomizing Turing machine computation} 
\label{sec:Turing} 

Constructivity is the key criterion that connects derandomization in the query model of computation, a non-uniform model, to derandomization in the Turing machine model of computation, a uniform model.  The following is a straightforward corollary of our constructivization (\Cref{thm:constructive-Nisan}) of~\hyperlink{nisan-anchor}{Nisan's Theorem}.  
  (As is standard when reasoning about sublinear-time computation, we consider {\sl random access} Turing machines.)

\begin{restatable}[Uniform derandomization with preprocessing]{corollary}{derandomPreprocess}
\label{cor:derandomization-with-preprocessing}  If $L \sse \zo^*$ is a language decided by a $\polylog(n)$-time randomized Turing machine (allowing for two-sided error), then $L$ is also decided by a $\polylog(n)$-time deterministic Turing machine with a $\quasipoly(n)$-time preprocessing step, a one-time cost for all inputs of length $n$. 
\end{restatable}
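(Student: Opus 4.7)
The plan is to apply our constructive Nisan's theorem (\Cref{thm:constructive-Nisan}) separately for each input length $n$, performing the construction during the preprocessing phase. Let $M$ be a $\polylog(n)$-time randomized random-access Turing machine deciding $L$ with two-sided error. On length-$n$ inputs, $M$ makes at most $q = \polylog(n)$ queries and uses at most $m = \polylog(n)$ random bits. First, I would encode the behavior of $M$ on $n$-bit inputs as a randomized query algorithm $R_n$: explicitly, as the distribution (over the $2^m$ uniform choices of randomness) of deterministic decision trees of depth at most $q$, one per random string. Since $m, q = \polylog(n)$, the resulting description length is $N \le 2^m \cdot 2^q \cdot \polylog(n) = \quasipoly(n)$, and this description is computable from $M$ within $\quasipoly(n)$ time, since each of the $2^m$ DDTs is produced by simulating $M$ along each root-to-leaf branch.

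Next, I would invoke $\mathcal{A}_{\mathrm{Nisan}}$ on $R_n$ as the preprocessing step. By~\Cref{thm:constructive-Nisan}, this takes time
\[ \poly(N)\cdot n^{O(q^3)} \;=\; \quasipoly(n)\cdot n^{\polylog(n)} \;=\; \quasipoly(n), \]
and outputs a deterministic decision tree $D_n$ of depth $q^\star_{R_n} \le O(q^3) = \polylog(n)$ that exactly computes $L\cap\zo^n$ (note that the two-sided error guarantee of $M$ ensures $R_n$ is a legitimate bounded-error $\RDT$ computing the characteristic function of $L\cap\zo^n$). Since $D_n$ has at most $2^{\polylog(n)} = \quasipoly(n)$ nodes, its entire description fits into the preprocessing output.

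In the query phase, the deterministic random-access TM receives $x\in\zo^n$ together with the precomputed $D_n$ and simply walks from the root of $D_n$. Each step reads one queried bit of $x$ and follows the appropriate child pointer; storing $D_n$ as a flat array of $\quasipoly(n)$ entries indexed by $\polylog(n)$-bit addresses, each step costs $\polylog(n)$ time for address arithmetic. The depth of $D_n$ is $\polylog(n)$, giving total runtime $\polylog(n)$, and correctness $D_n(x) = 1 \iff x \in L$ follows from~\Cref{thm:constructive-Nisan}.

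The main obstacle is confirming that $R_n$ admits a succinct, explicit encoding producible from $M$ in quasipolynomial time: the whole argument collapses unless $N\le\quasipoly(n)$, so that the $\poly(N)$ factor in~\Cref{thm:constructive-Nisan} remains quasipolynomial. This is purely bookkeeping --- enumerate the $2^m$ random strings, simulate $M$ on each to build the corresponding depth-$q$ DDT, and concatenate the results --- but it should be spelled out, since the total number of leaves across the $2^m$ trees is $2^{q+m} = \quasipoly(n)$ and each leaf is reached by a $\polylog(n)$-time simulation of $M$, keeping the construction within the claimed budget.
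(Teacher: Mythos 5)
Your proposal is correct and follows essentially the same route as the paper's proof: simulate the $\polylog(n)$-time randomized TM on length-$n$ inputs to explicitly write down a $\quasipoly(n)$-size $\RDT$ $R_n$ in $\quasipoly(n)$ time, run $\mathcal{A}_{\mathrm{Nisan}}$ on $R_n$ as the preprocessing step (with total cost $\poly(N)\cdot n^{O(q^3)} = \quasipoly(n)$), and then evaluate the resulting $\polylog(n)$-depth $\DDT$ at query time. Your write-up simply makes the bookkeeping steps (encoding $R_n$, the address arithmetic for walking the tree) more explicit than the paper does, which is fine but does not change the argument.
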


\Cref{cor:derandomization-with-preprocessing} can be expressed succinctly as: 
\[ \mathsf{BPTIME}(\polylog(n)) \sse ``\mathsf{Preprocess}(\quasipoly(n)) + \mathsf{TIME}(\polylog(n))". \] 
The same proof ``scales up" to give, say, $\mathsf{BPTIME}(n^{o(1)}) \sse \mathsf{Preprocess}(2^{n^{o(1)}}) + \mathsf{TIME}(n^{o(1)})$.

Even the following weaker version of~\Cref{cor:derandomization-with-preprocessing}, where one does not ``factor out" the preprocessing step, does not appear to have been known prior to our work.  Let $\mathsf{TIME}(t,q)$ denote the class of languages decided by a time-$t$ deterministic Turing machine that makes $q$-queries to the input.  Then 
\begin{equation} \mathsf{BPTIME}(\polylog(n)) \sse \mathsf{TIME}(\quasipoly(n),\polylog(n)). \label{eq:us}
\end{equation} 

\pparagraph{Comparision with naive constructivizations.}  There are two easy ways to constructively derandomize $\mathsf{BPTIME}(\polylog(n))$. One is to try all possible $\polylog(n)$-query deterministic algorithms, of which there are $n^{\quasipoly(n)}$ many. This implies that:
\begin{equation}
     \mathsf{BPTIME}(\polylog(n)) \sse \mathsf{TIME}(n^{\quasipoly(n)},\polylog(n)). \label{eq:them1} 
\end{equation}
A second naive algorithm would be, on an input $x$, to try all possible $2^{\polylog(n)}$ random strings and return the majority output. These different choices of the random string might result in queries to different coordinates of the input, meaning that up to $n$ coordinates can be queried, the trivial number.  Hence: 
\begin{equation}
    \mathsf{BPTIME}(\polylog(n)) \sse \mathsf{TIME}(\quasipoly(n), n). \label{eq:them2} 
\end{equation}
Our result (\ref{eq:us}) can therefore be viewed as achieving the best of both worlds (\ref{eq:them1}) and (\ref{eq:them2}).

\pparagraph{Comparison with Impagliazzo--Naor~\cite{IN88}.}  The connection between~\hyperlink{nisan-anchor}{Nisan's Theorem} and the derandomization of sublinear-time Turing machine computation, and the challenges posed by the non-constructive nature of Nisan's proof, were highlighted in the work of Impagliazzo and Naor~\cite{IN88}.  This work essentially overcame the issue of non-constructivity with the added assumption that $\mathsf{P} = \mathsf{NP}$:

\begin{theorem}[\cite{IN88}]
\label{thm:IN88}
If $\mathsf{P} = \mathsf{NP}$ then $\mathsf{BPTIME}(\polylog(n)) = \mathsf{TIME}(\polylog(n))$. 
\end{theorem}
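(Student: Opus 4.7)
The plan is to use \hyperlink{nisan-anchor}{Nisan's Theorem} to reduce derandomizing $M$ to simulating the unique branch of a small deterministic decision tree that the input $x$ follows, and to exploit $\mathsf{P} = \mathsf{NP}$ to compute each successive query along that branch in deterministic polylogarithmic time.

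Let $M$ be a randomized Turing machine deciding $L$ that runs in time and makes $q = \polylog(n)$ queries on inputs of length $n$, and let $f_n : \zo^n \to \zo$ be the function it computes. By \hyperlink{nisan-anchor}{Nisan's Theorem} there exists a deterministic decision tree $T_n$ of depth $d = O(q^3) = \polylog(n)$ computing $f_n$ exactly. A Turing machine with random access to $T_n$ could evaluate $f_n(x)$ in $d$ steps just by walking the root-to-leaf branch that $x$ induces. The difficulty is that $T_n$ itself has description size up to $2^d = \quasipoly(n)$, far more than we can store, so we will instead reconstruct only the branch of $x$, one query at a time.

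The core question at each node is: given a current path $\pi = ((i_1,b_1),\ldots,(i_t,b_t))$, what is a next variable $i_{t+1}$ that is queried by some optimal depth-$d$ deterministic decision tree for $f_n$ consistent with $\pi$? I would formalize this as a predicate $\textsc{NextQuery}(\langle M\rangle, 1^n, \pi, i)$ whose instance has size $O(|M|) + O(\log n) + O(d\log n) = \polylog(n)$. Building on the constructive, certificate-based argument behind \hyperlink{nisan-anchor}{Nisan's Theorem}---in which the next query is always selected from a minimum-size $0$- or $1$-certificate of the restricted function $f_n|_\pi$---the predicate can be written as a Boolean combination of short-certificate statements about $f_n|_\pi$. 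Each such statement sits at a constant level of the polynomial hierarchy relative to the bounded-error acceptance predicate of $M$, and by Sipser--G\'acs--Lautemann the latter is itself in $\Sigma_2^{\mathsf{P}}$. Hence $\textsc{NextQuery}$ lies at a constant level of $\mathsf{PH}$. Under $\mathsf{P} = \mathsf{NP}$ we have $\mathsf{PH} = \mathsf{P}$, so $\textsc{NextQuery}$ is decidable in time polynomial in its polylogarithmic input, i.e.\ in $\polylog(n)$. The deterministic algorithm for $L$ starts from $\pi = \emptyset$, iterates ``call $\textsc{NextQuery}$ to get $i_{t+1}$, query $x_{i_{t+1}}$, append $(i_{t+1}, x_{i_{t+1}})$ to $\pi$'' for $d = \polylog(n)$ rounds, and resolves the leaf label with one final $\mathsf{PH}$ query. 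The total running time is $d \cdot \polylog(n) = \polylog(n)$.

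The main obstacle is controlling the instance size of the $\mathsf{PH}$ predicate that encodes ``$C$ is a $b$-certificate for $f_n|_\pi$.'' The natural formulation universally quantifies over inputs $x$ extending $\pi \cup C$ and asserts that $M$ accepts or rejects $x$ with high probability; the standard Sipser--G\'acs--Lautemann reduction for that bounded-error predicate produces witnesses of size polynomial in $|x| = n$, which would blow the runtime up to $\poly(n)$ and defeat the whole exercise. Making the encoded $\mathsf{PH}$ instance genuinely polylogarithmic---by summarizing each quantified $x$ by the $\polylog(n)$-bit ``view'' that $M$ examines on a given random string, and carefully combining quantifiers across the (polylog many) random strings $M$ actually consults---is the delicate bookkeeping step on which the proof ultimately rests.
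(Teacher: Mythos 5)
The paper does not supply a proof of this statement; it cites it from Impagliazzo--Naor \cite{IN88}, so there is no internal argument to compare your sketch against. Evaluating the sketch on its own merits: the high-level plan (derandomize by walking a single root-to-leaf branch of the Nisan tree, one $\mathsf{PH}$-oracle call per level, then collapse $\mathsf{PH}$ to $\mathsf{P}$) is plausible and very likely in the spirit of the original argument, and you are right that the entire difficulty sits in making the encoded $\mathsf{PH}$ instances $\polylog(n)$-sized. But you stop there, and the fix you gesture at --- ``summarize each quantified $x$ by the $\polylog(n)$-bit view $M$ examines'' --- does not close the gap, for two concrete reasons.

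First, the certificate-validity predicate $\forall x\ [x \text{ extends } \pi \cup C \Rightarrow f(x)=1]$ has a universal quantifier over $n$-bit strings, and the ``view'' of $x$ that matters is not $\polylog(n)$ bits: a single Lautemann block has the form $\exists \bar t\,\forall r\,\exists i\, M'(x, r \oplus t_i)$, and the $\forall r$ ranges over all $2^{m'}=\quasipoly(n)$ random strings. Across those $r$ the runs of $M'$ may touch up to $q \cdot 2^{m'} = \quasipoly(n)$ distinct coordinates of $x$, so replacing $\forall x$ by ``$\forall$ view'' does not shrink the quantifier to $\polylog(n)$ bits --- it is quasipolynomial. (The parenthetical ``polylog many random strings $M$ actually consults'' is not right: for a \emph{fixed} outer $r$ there are $k = \polylog(n)$ shifted strings, but $r$ itself is universally quantified over $\{0,1\}^{m'}$.)

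Second, even if one tries instead to pull the $\exists \bar t$ outside the $\forall x$ to get a clean $\Sigma_2$ statement, a union bound over the $2^{\Theta(n)}$ inputs consistent with $\pi \cup C$ forces the error of the amplified machine below $2^{-\Omega(n)}$, hence $\Omega(n/\polylog n)$ Lautemann shifts and $\Omega(n)$ total randomness. The witness for $\exists \bar t$ is then $\Omega(n)$ bits, not $\polylog(n)$, which breaks the runtime entirely. So neither of the two natural decompositions of the certificate predicate yields a $\polylog(n)$-size $\mathsf{PH}$ instance, and you would need a genuinely different encoding --- or a different characterization of the Nisan tree that avoids the $\forall x$ over full inputs --- to make the reduction go through. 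There is also a subtler point you never address: $\mathsf{P}=\mathsf{NP}$ is a statement about ordinary Turing machines, and it does not automatically let you decide a $\polylog(n)$-size $\Sigma_2$ question whose inner predicate has random access to the $n$-bit input $x$ in time $\polylog(n)$; that requires an argument, since relative to an arbitrary oracle the collapse can fail. As written, the sketch identifies the right obstacle but does not surmount it.
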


(\Cref{thm:IN88} can be viewed as a strengthening of a basic and classical result of structural complexity theory:  if $\mathsf{P} = \mathsf{NP}$ then $\mathsf{BPP} = \mathsf{P}$.) While the conclusion of~\Cref{thm:IN88} is stronger than our~\Cref{cor:derandomization-with-preprocessing}, it only holds under the assumption that $\mathsf{P} = \mathsf{NP}$, whereas~\Cref{cor:derandomization-with-preprocessing} is  unconditional.

\subsubsection{Consequences for dequantizing Turing machine computation} 

As a further application of our framework (\Cref{thm:framework}), we show that it can be used to constructivize yet another a classic result in query complexity, this time relating the deterministic query complexity of a total boolean function $f : \zo^n \to \zo$ to its (bounded-error) {\sl quantum query complexity}.  The following theorem is due to Beals, Burhman, Cleve, Mosca, de Wolf~\cite{BBCMdW01}:

\hypertarget{beals-anchor}{}
\begin{theorem}[Quantum versus deterministic query complexity]
\label{thm:beals}
For every $f : \zo^n \to \zo$, we have that $D(f) \le O(Q(f)^6)$.
\end{theorem}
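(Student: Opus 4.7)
The plan is to follow the classical polynomial-method route of Beals et al.: translate a bounded-error quantum query algorithm into a low-degree real polynomial that approximates $f$, then bound $D(f)$ in terms of approximate degree via block sensitivity, closing the loop with (the proof of) \hyperlink{nisan-anchor}{Nisan's Theorem}.

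First, I would prove the polynomial-method lemma: the acceptance probability of any $T$-query quantum algorithm on input $x \in \zo^n$ is a real multilinear polynomial $p(x)$ of degree at most $2T$. This goes by induction on the number of queries. After $k$ queries, every amplitude of the algorithm's state is a complex polynomial in $x_1,\dots,x_n$ of degree at most $k$, since a query to position $i$ multiplies amplitudes by $x_i$ or $1-x_i$ and unitaries preserve the degree. The acceptance probability is a sum of squared magnitudes of amplitudes after $T$ queries, hence a real polynomial of degree at most $2T$. Applied to a quantum algorithm computing $f$ with bounded error, this gives $|p(x)-f(x)| \le 1/3$ for all $x \in \zo^n$, so the approximate degree satisfies $\widetilde{\mathrm{deg}}(f) \le 2Q(f)$.

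Second, I would invoke the Nisan--Szegedy inequality $\mathrm{bs}(f) \le O(\widetilde{\mathrm{deg}}(f)^2)$: given a degree-$d$ polynomial $p$ that approximates $f$, and an input $x$ together with $b$ disjoint sensitive blocks $B_1,\dots,B_b$, one restricts $p$ to the line through $x$ parametrized by how many of the blocks are flipped; this yields a univariate polynomial on $\{0,1,\dots,b\}$ of degree at most $d$ whose values at $0$ and at each $1 \le j \le b$ are separated by a constant, after which Markov's inequality for polynomials forces $b \le O(d^2)$. The standard proof of \hyperlink{nisan-anchor}{Nisan's Theorem} in fact establishes the stronger intermediate bound $D(f) \le O(\mathrm{bs}(f)^3)$ (through certificate complexity and iteratively querying a minimal certificate of the current consistent input). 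Chaining everything yields
\[ D(f) \le O(\mathrm{bs}(f)^3) \le O(\widetilde{\mathrm{deg}}(f)^6) \le O(Q(f)^6). \]

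The main obstacle is Step~1, the polynomial-method lemma, since that is the one place the quantum model is actually used; the subsequent steps are purely classical combinatorial arguments about approximating polynomials and decision trees. A minor care point is that the inductive degree bookkeeping must be done at the level of individual amplitudes rather than probabilities, because only then does taking squared magnitudes at the end yield a real (rather than complex) polynomial and account for the factor of $2$ in the degree.
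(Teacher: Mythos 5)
The paper does not prove \Cref{thm:beals}; it cites it as a known result of Beals, Buhrman, Cleve, Mosca, and de~Wolf~\cite{BBCMdW01}, so there is no in-paper proof to compare against. Your proposal correctly reconstructs that original argument: the polynomial method gives $\widetilde{\mathrm{deg}}(f) \le 2Q(f)$, Nisan--Szegedy gives $\mathrm{bs}(f) \le O(\widetilde{\mathrm{deg}}(f)^2)$, and Nisan's chain gives $D(f) \le O(\mathrm{bs}(f)^3)$, composing to $D(f)\le O(Q(f)^6)$; the first of these is exactly the lemma the paper itself restates from \cite{BBCMdW01} when proving \Cref{cor:dequantization-with-preprocessing}. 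One small inaccuracy in your sketch of the Nisan--Szegedy step: after symmetrizing the restriction of $p$ to block-flips of $x$ into a univariate polynomial $r$ on $\{0,1,\dots,b\}$, the constant separation from $r(0)$ is guaranteed only at $j=1$ (where $r(1)$ is the average of $p$ over the $b$ single-block flips, each landing on an input with the opposite $f$-value), not ``at each $1\le j\le b$''; for $j\ge 2$ one merely knows $r(j)$ stays bounded, and the conclusion $b = O(d^2)$ then uses the Ehlich--Zeller / Rivlin--Cheney refinement of Markov's inequality (which upgrades boundedness on the integer grid to boundedness on the real interval before bounding the derivative), not plain Markov. This does not affect the correctness of the final bound.
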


Given the description of a quantum query algorithm for a function $f$,~\Cref{thm:framework} can be used to find a deterministic algorithm with minimal query complexity computing $f$ exactly (and by~\Cref{thm:beals}, we are guaranteed that this query complexity is at most $O(Q(f)^6)$).  Like our constructivization of~\hyperlink{nisan-anchor}{Nisan's Theorem}, this has immediate implications for computation in the Turing machine model;  the following is a quantum analogue of~\Cref{cor:derandomization-with-preprocessing}: 

\begin{restatable}[Uniform dequantization with preprocessing]{corollary}{dequantizePreprocess}
\label{cor:dequantization-with-preprocessing}  If $L \sse \zo^*$ is a language decided by a $\polylog(n)$-time $m$-qubit quantum Turing machine (allowing for two-sided error), then $L$ is also decided by a $\polylog(n)$-time deterministic Turing machine with a $\poly(2^m)\cdot \quasipoly(n)$-time preprocessing step, a one-time cost for all inputs of length $n$.   
\end{restatable}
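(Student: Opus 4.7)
The plan is to prove Corollary~\ref{cor:dequantization-with-preprocessing} in direct analogy with Corollary~\ref{cor:derandomization-with-preprocessing}, with Theorem~\ref{thm:constructive-Nisan} replaced by the quantum analogue obtained by applying Theorem~\ref{thm:framework} together with Theorem~\ref{thm:beals} (as indicated in the paragraph preceding the corollary). The preprocessing, on input length $n$, produces a small deterministic decision tree $D_n$ that computes the characteristic function $f_n$ of $L \cap \zo^n$ exactly; on any specific input~$x$, the deterministic Turing machine simply evaluates $D_n(x)$.

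First I would unroll the random-access quantum Turing machine $M$ on inputs of length $n$ into an explicit quantum query algorithm $Q_n$: a sequence of at most $t = \polylog(n)$ gates, each either an $m$-qubit unitary independent of the input or an input-query gate, acting on the $m$-qubit register. Writing down the $2^m \times 2^m$ unitaries to sufficient precision and listing the query targets yields a description of $Q_n$ of length $N = \poly(2^m) \cdot \polylog(n)$, and the unrolling itself takes $\poly(2^m) \cdot \quasipoly(n)$ time. The two-sided-error assumption on~$M$ is exactly the statement that $Q_n$ is a bounded-error $t$-query quantum algorithm for $f_n$.

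Next I would invoke Theorem~\ref{thm:framework} (in its quantum-query extension sketched just before the corollary) with the error metric $\mathcal{E}(Q,D) \coloneqq \Prx_{\bx}[D(\bx) \neq f_Q(\bx)]$, where $f_Q$ is the bounded-error function computed by~$Q$, and with threshold $\eps < 2^{-n}$, so that $\mathcal{E}(Q_n, D) \le \eps$ forces $D(x) = f_n(x)$ on every $x \in \zo^n$. Classical simulation of~$Q$ on a single input shows that $\mathcal{E}$ is $t$-efficient with $t = \poly(2^m) \cdot \polylog(n)$. By Theorem~\ref{thm:beals}, $q^{\star}_{Q_n, \mathcal{E}} \le O(Q(f_n)^6) \le \polylog(n)$, so $n^{q^{\star}_{Q_n, \mathcal{E}}} = \quasipoly(n)$. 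The framework therefore returns $D_n$ in total time $\poly(N, t, n^{q^\star}) = \poly(2^m) \cdot \quasipoly(n)$, and $D_n$ has depth $\polylog(n)$, so the stored tree can be evaluated in $\polylog(n)$ deterministic time on any input of length $n$, matching the claim.

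The main obstacle is confirming that the instance-optimal search underlying Theorem~\ref{thm:framework} carries over from RDTs to QDTs under the error metric above; concretely, the procedure must be able to decide, for each candidate deterministic tree $D$ of depth at most $q^{\star}$, whether $\mathcal{E}(Q_n,D) \le \eps$. This reduces to classically simulating $Q_n$ at each of the $2^{q^{\star}} = \quasipoly(n)$ leaves of $D$ to compare its bounded-error output to the label assigned by~$D$, costing $\poly(2^m) \cdot \polylog(n)$ per leaf and contributing only the $\poly(2^m)$ factor in the preprocessing bound; the remainder of the argument is identical to the randomized case.
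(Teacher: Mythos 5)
There is a genuine gap in your choice of error metric. You propose $\mathcal{E}(Q,D) \coloneqq \Prx_{\bx}[D(\bx) \neq f_Q(\bx)]$, but this is not $t$-efficient in the sense of \Cref{def:natural-efficient}: to evaluate it, for each leaf $\ell$ of $D$ with label $b$ you need the probability, over all inputs $\bx$ consistent with the root-to-leaf path $\pi_\ell$, that $f_Q(\bx) \neq b$. Since $\pi_\ell$ fixes only $O(q^\star) = \polylog(n)$ of the $n$ coordinates, this conditional probability ranges over $2^{n-|\pi_\ell|}$ inputs, and there is no evident way to aggregate them short of exhaustive enumeration. Your suggestion to ``classically simulate $Q_n$ at each of the $2^{q^\star}$ leaves'' conflates a leaf (a partial assignment) with a full input; simulating $Q_n$ on one extension of the partial assignment does not reveal the conditional probability required, so the subroutine on which your instance-optimal search depends is not available.

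The paper avoids this by never thresholding inside the error metric. It first uses the polynomial-method construction of Beals et al.\ to extract, in $\poly(2^m)\cdot\quasipoly(n)$ time, a degree-$(2q)$ polynomial $p$ equal to the acceptance probability of the quantum algorithm on each $x$, and then runs the framework in its polynomial form (Extension~\#2 of \Cref{sec:extensions}) with $\mathcal{E}_{\mathrm{poly}}(p,D) \coloneqq \Ex[|p(\bx)-D(\bx)|]$. This is the analogue of $\mathcal{E}_\BayesError$: by the argument of \Cref{lemma:proxy error} applied with $p$ in place of $\mu_R$, it is uniquely minimized over $\zo$-valued $\DDT$s precisely by those $D$ computing the target function everywhere, so no $\eps < 2^{-n}$ threshold is needed (the minimum value $\eps_R$ is computed by a preliminary run with budget $O(q^6)$). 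Crucially, it is efficiently computable: since $D$ is $\zo$-valued and $p \in [0,1]$, $\Ex_{\bx\,\text{follows}\,\pi_\ell}[|p(\bx)-b|] = |\Ex_{\bx\,\text{follows}\,\pi_\ell}[p(\bx)] - b|$, and this conditional expectation can be read off the Fourier representation of the restricted degree-$(2q)$ polynomial exactly as in \Cref{lem:BO}. Your proposal has no counterpart to this step. The remainder of your argument --- unrolling the quantum machine into a query algorithm, bounding $q^\star$ via \Cref{thm:beals}, and storing a shallow $\DDT$ for fast online evaluation --- is essentially correct and matches the paper's structure once the metric is fixed.
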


\subsection{Recap and summary of our techniques} 
\label{sec:techniques}

Recapping and summarizing the discussion in our introduction, in this work we draw on a range of techniques to prove our results: 

\begin{table}[H]
\begin{adjustwidth}{-2em}{}
\renewcommand{\arraystretch}{1.7}
\centering
\begin{tabular}{|c|>{\centering\arraybackslash}p{8cm} |}
\hline
  Result & Techniques \\ \hline \hline 
  \Cref{thm:derand-alg} &~~~ PRGs and randomness samplers ~~~\\ \hline 
~~\Cref{thm:instance-opt,thm:constructive-Nisan}~~ & Instance-optimal framework  (\Cref{thm:framework}) \\ \hline
\Cref{thm:online} & Greedy top-down algorithm + \Cref{lem:OSSS-RDT} \\ \hline 
\end{tabular}
\label{table:techniques}
\end{adjustwidth}
\end{table}
\vspace{-7pt}

\begin{itemize}[leftmargin=0.5cm]
\item[$\circ$] Our algorithm for~\Cref{thm:derand-alg} and its analysis are both quite simple.  We first use two basic pseudorandomness constructs---pseudorandom generators and randomness samplers---to deterministically construct a small list of candidate $\eps$-approximating deterministic query algorithms.  We are then faced with the question: given a randomized query algorithm $R$ and a deterministic query algorithm $D$, can one efficiently and {\sl deterministically} compute their distance $\Ex_{\bx}[(D(\bx)-\Ex_{\br}[R(\bx,\br)])^2]$?  We solve this problem using elementary Fourier analysis of boolean functions. 
\item[$\circ$]  As described in the introduction, to prove~\Cref{thm:instance-opt,thm:constructive-Nisan} we develop a general framework,~\Cref{thm:framework}, for achieving instance-optimal derandomization of randomized query algorithms with respect to a broad class of error metrics. \Cref{thm:instance-opt} follows as an immediate corollary of this framework by taking the error metric to be $L_2$ distance.  For our constructivization of~\hyperlink{nisan-anchor}{Nisan's Theorem} and~\hyperlink{beals-anchor}{Beals et al.'s Theorem}, we invoke this framework with other carefully chosen error metrics.
\item[$\circ$] Our proof of \Cref{thm:online} draws on a powerful result from concrete complexity: every small-depth deterministic decision tree has an ``influential" variable~\cite{OSSS05}.  Our key lemma here shows that the~\cite{OSSS05} inequality also holds for {\sl randomized} decision trees. With this generalization in hand, we then analyze the following natural online algorithm:  on input $\underline{x}$, query $\underline{x}_i$ where $i$ is the most influential variable of $R$; restrict $R$ accordingly, and recurse.  \violet{While~\cite{AA14} had shown that the influence of variables within quantum query algorithms can be deterministically and efficiently computed under the assumption that $\mathsf{P} = \mathsf{P}^{\#\mathsf{P}}$, we give an unconditional, efficient algorithm in the case of randomized query algorithms.} 
\end{itemize}

\subsection{The works of Zimand and Shaltiel} 
\label{sec:ZS} 
In this section we compare~\Cref{thm:derand-alg} to prior work of Zimand~\cite{Zim07} and Shaltiel~\cite{Sha11}.  The following is a variant of~\Cref{fact:yao}: 

\begin{fact}
\label{fact:yao2} 
Let $R : \zo^n \times \zo^m \to \zo$ be a randomized $q$-query algorithm satisfying 
\begin{equation} \Prx_{\bx,\br}[R(\bx,\br) \ne f(\bx)] \le \delta \qquad \text{for some $f : \zo^n \to \zo$.}\label{eq:ZS-assumption}
\end{equation} 
There exists a deterministic $q$-query algorithm $D$ satisfying $\Prx_{\bx}[D(\bx)\ne f(\bx)] \le \delta$. 
\end{fact}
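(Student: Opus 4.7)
The plan is to prove Fact~\ref{fact:yao2} by the standard ``fix the randomness'' averaging argument that underlies the easy direction of Yao's lemma, applied pointwise. The key observation is that a randomized $q$-query algorithm $R : \zo^n \times \zo^m \to \zo$ is, by definition, nothing more than a distribution (induced by the uniform distribution on $\br \in \zo^m$) over deterministic $q$-query algorithms: for each fixing $r \in \zo^m$, the map $x \mapsto R(x,r)$ is computed by a deterministic $q$-query decision tree $D_r$, obtained by hard-coding the internal randomness of $R$ to $r$. In particular, $D_r$ inherits the query complexity bound $q$ from $R$, since fixing the randomness does not increase the number of queries made on any input.

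With that reformulation in hand, I would rewrite the hypothesis~(\ref{eq:ZS-assumption}) by swapping the order of expectations:
\[
\delta \;\ge\; \Prx_{\bx,\br}\bigl[R(\bx,\br) \ne f(\bx)\bigr] \;=\; \Ex_{\br}\Bigl[\Prx_{\bx}\bigl[D_{\br}(\bx) \ne f(\bx)\bigr]\Bigr].
\]
By an averaging argument, there exists some $r^\star \in \zo^m$ for which $\Prx_{\bx}[D_{r^\star}(\bx) \ne f(\bx)] \le \delta$. Taking $D \coloneqq D_{r^\star}$ gives the desired deterministic $q$-query algorithm.

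There is essentially no obstacle here; the entire content is the averaging step, which is the non-constructive move that motivates the constructive versions developed later in the paper. The only point worth flagging explicitly in the write-up is that the query complexity of each $D_r$ is at most $q$ (this is immediate from the standard definition of a randomized decision tree as a distribution over deterministic decision trees of the same depth, which I would briefly recall, perhaps pointing to~\Cref{sec:prelim}).
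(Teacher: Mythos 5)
Your proof is correct and is precisely the averaging argument the paper is gesturing at when it says Fact~\ref{fact:yao2} is ``a straightforward application of the easy direction of Yao's lemma'': view $R$ as a distribution over deterministic $q$-query trees $D_r$, swap the order of expectation over $\bx$ and $\br$, and fix a choice $r^\star$ achieving at most the average error. The paper does not spell out the argument itself, but your write-up matches the intended proof.
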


Like~\Cref{fact:yao}, the proof of~\Cref{fact:yao2} is a straightforward application of the easy direction of Yao's lemma, and is therefore also non-constructive.  Zimand~\cite{Zim07} and Shaltiel~\cite{Sha11} considered the problem of constructivizing~\Cref{fact:yao2}.   Zimand proves the following: 
\begin{theorem}[\cite{Zim07}]
\label{thm:zimand}
There is an absolute constant $\alpha < 1$ such that the following holds.  Let $R : \zo^n \times \zo^m \to \zo$ be an explicitly constructible\footnote{A $q$-query algorithm is {\sl explicitly constructible} if there is a polynomial-time Turing machine which, when given the answers to the queries made so far, computes the next query in time $\poly(q, \log n)$.  For randomized query algorithms, the machine also receives a string $r \in \zo^m$ where $m$ is the randomness complexity of the algorithm.}  randomized query algorithm for $f :\zo^n \to \zo$ satisfying (\ref{eq:ZS-assumption}) with $\delta \le \frac1{3}$.   Suppose that the randomness complexity of $R$ is $q \le n^{\alpha}$ and its randomness complexity is $m \le q$.  Then  there is an explicitly constructible deterministic $O(q^{24})$-query algorithm $D$ such that  $\Prx_{\bx}[D(\bx) \ne f(\bx)]\le O(\delta)$.
\end{theorem}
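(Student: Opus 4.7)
The plan is to derandomize $R$ by replacing its $m$ random bits with a small, explicitly constructible sample set $S \subseteq \zo^m$, and then defining $D(x)$ to be the majority of $\{R(x,r)\}_{r \in S}$. This is the standard averaging-sampler strategy, executed with enough explicitness that a single canonical seed suffices deterministically.

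First, I would pass from the joint bound (\ref{eq:ZS-assumption}) to a ``typical'' set of inputs. Let $G \coloneqq \{ x \in \zo^n : \Prx_{\br}[R(x,\br) = f(x)] \ge 9/10 \}$. Markov's inequality gives $|G^c| / 2^n \le 10\delta$, and for $x \in G$ the mean $\mu_x \coloneqq \Prx_{\br}[R(x,\br) = f(x)]$ is bounded well away from $1/2$, so majority voting over sufficiently many independent draws of $\br$ would recover $f(x)$ exactly.

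Second, I would invoke an explicit averaging sampler $\mathrm{Samp} : \zo^s \to (\zo^m)^k$ with sample complexity $k = \poly(q)$, short seed length $s \ll n$, and failure probability at most $2^{-2n}$ for additive-$1/10$ estimation of any function $\zo^m \to \zo$. Explicit constructions combining randomness extractors with expander walks attain such parameters; the precise polynomial bound here is what ultimately yields the $O(q^{24})$ query complexity. A union bound over the at most $2^n$ inputs $x \in G$ then shows that a random seed $\boldsymbol{\sigma}$ is simultaneously good for \emph{every} $x \in G$ except with probability at most $2^{-n}$; in particular, a good seed $\sigma^\star$ exists.

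Third, I would produce such a $\sigma^\star$ deterministically. Because the bad-event probability is so small, any output of an explicit pseudorandom generator of seed length $s$ that fools the relevant ``sampler-test'' family would work---equivalently, one may enumerate all $2^s$ seeds and pick any one, since $s \ll n$ and the constraint $q \le n^\alpha$ for a sufficiently small absolute $\alpha$ keeps this enumeration inside the $\poly(q, \log n)$ budget. Once $\sigma^\star$ is fixed, define $D(x) \coloneqq \mathrm{majority}\{R(x,r) : r \in \mathrm{Samp}(\sigma^\star)\}$. For each $x \in G$, the sampler guarantee yields $\tfrac{1}{k}\sum_{r \in \mathrm{Samp}(\sigma^\star)} \mathbf{1}[R(x,r) = f(x)] \ge \mu_x - 1/10 > 1/2$, so $D(x) = f(x)$; hence $\Prx_{\bx}[D(\bx) \ne f(\bx)] \le |G^c|/2^n = O(\delta)$. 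Each of $D$'s $k \cdot q = O(q^{24})$ queries is computable in $\poly(q, \log n)$ time by the explicit constructibility of $R$ and $\mathrm{Samp}$, so $D$ is itself explicitly constructible.

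The main obstacle is balancing three mutually-constraining budgets: the sampler's sample count $k$ must fit inside the $q^{24}$ query limit; the seed length $s$ must be short enough to be produced in the allotted deterministic time (which is what forces $q \le n^\alpha$ for some small constant $\alpha$); and the sampler's failure probability must beat $2^{-n}$ to survive the union bound over $G$. Calibrating these three against one another is precisely what determines both the constant $\alpha$ and the exponent $24$; neither number is intrinsic, and both simply reflect the concrete parameters of the black-boxed explicit-sampler and PRG constructions.
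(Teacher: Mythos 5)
The statement you are proving is a cited result of Zimand~\cite{Zim07}; the paper does not prove it, but in \Cref{sec:ZS} it describes the approach as an instantiation of the Goldreich--Wigderson ``derandomization by extracting randomness from the input'' framework, with Zimand using exposure-resilient extractors to tame the correlations between the input~$\bx$ and the extracted pseudo-random string. Your proposal takes an essentially unrelated route, and it does not go through.

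The core gap is in your sampler parameters and in how the seed $\sigma^\star$ is supposed to be produced. You want an averaging sampler with $k = \poly(q)$ samples and failure probability $2^{-2n}$ so that a union bound over $\le 2^n$ inputs $x \in G$ succeeds. But any averaging sampler using $k$ samples of a $\zo$-valued function achieves failure probability at best $2^{-O(k)}$ for a fixed additive error; to beat $2^{-n}$ you are forced to take $k = \Omega(n)$. Then $D$ would make $kq = \Omega(nq) \gg q^{24}$ queries whenever the theorem's conclusion is non-vacuous, i.e.\ whenever $q^{24} \ll n$. Conversely, if you keep $k = \poly(q) \le q^{23}$ so that $D$ fits the query budget, the per-input failure probability is no better than $2^{-\poly(q)}$, which is vastly larger than $2^{-n}$ when $q \le n^\alpha$ for small $\alpha$, and the union bound over the $\approx 2^n$ inputs in $G$ does not close. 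Your budgets are mutually inconsistent.

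Even granting the sampler, the third step is not a derandomization. Having a $1 - 2^{-n}$ fraction of good seeds does not mean ``any one'' seed works, and you have no deterministic procedure to recognize a good seed: testing whether $D_\sigma$ has small error requires knowing $f$ (or enumerating $\zo^n$), neither of which fits inside the ``next query in $\poly(q,\log n)$ time'' notion of explicit constructibility. A pure averaging argument here is exactly the non-constructive step the theorem is supposed to eliminate. Zimand avoids this problem entirely by not searching for a universal seed at all: $D$ queries some bits of $x$ and runs an exposure-resilient extractor on them to produce the string $r$ fed to $R$, so the ``randomness'' is generated locally from the input and no global existential choice is needed. That shift---from a universal sample set to input-extracted pseudorandomness---is the idea your proposal is missing.
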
 

Shaltiel gives the following improvement of Zimand's result: 

\begin{theorem}[\cite{Sha11}]
\label{thm:shaltiel}
There is an absolute constant $\beta < 1$ such that the following holds.  Let $R : \zo^n \times \zo^m \to \zo$ be an explicitly constructible randomized query algorithm for $f :\zo^n \to \zo$ satisfying (\ref{eq:ZS-assumption}) with $\delta \le \frac1{3}$.   Suppose the query and randomness complexities of $R$ satisfy $q + m \le \beta n$.  Then  there is an explicitly constructible deterministic $O(q+m)$-query algorithm $D$ such that  $\Prx_{\bx}[D(\bx) \ne f(\bx)]\le O(\delta)$.
\end{theorem}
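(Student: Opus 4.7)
The plan is to derandomize $R$ by using bits of the input itself as a substitute for the random tape. Since $q + m \le \beta n$, the algorithm $R$ reads only a $\beta$-fraction of $x$, so the remaining coordinates of a uniformly random $\bx$ form a source of near-full min-entropy from which one can ``extract'' an $m$-bit string that is statistically close to uniform and then feed to $R$.

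Concretely, I would fix an explicit strong seeded extractor $\mathrm{Ext} : \zo^n \times \zo^s \to \zo^m$ with seed length $s = O(\log n)$, error $O(\delta)$, applicable to sources of min-entropy $n - o(n)$, and \emph{local} in the sense that evaluating $\mathrm{Ext}(x, \sigma)$ requires querying only $O(m)$ coordinates of $x$. Equivalently, one may view this as an averaging sampler that uses the seed to select a small set of input positions and then outputs a hash of those bits. Designate a fixed subset $T \subseteq [n]$ with $|T| = s$ and define
\[ D(x) \coloneqq R\big(x,\; \mathrm{Ext}(x, x_T)\big). \]
Reading $x_T$ costs $s = O(\log n)$ queries, computing $\mathrm{Ext}(x, x_T)$ costs a further $O(m)$ queries, and running $R$ costs $q$ more, for a total of $O(q+m)$.

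For correctness: under a uniformly random $\bx$, the seed $\bx_T$ is uniform on $\zo^s$, and conditional on $\bx_T$ the rest of $\bx$ has min-entropy $n - s = n - o(n)$. The strong extractor guarantee then says that $\mathrm{Ext}(\bx, \bx_T)$ is $O(\delta)$-close to uniform even conditioned on $\bx_T$. Transferring this closeness to the event ``$R(\bx, \cdot) \neq f(\bx)$'' and combining with the hypothesis $\Prx_{\bx,\br}[R(\bx,\br) \neq f(\bx)] \le \delta$ yields $\Prx_{\bx}[D(\bx) \neq f(\bx)] \le O(\delta)$.

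The main obstacle is the correlation between the extracted ``randomness'' $\mathrm{Ext}(\bx, \bx_T)$ and the coordinates of $\bx$ that $R$ subsequently queries---both depend on the same underlying input $\bx$. The extractor must produce output close to uniform \emph{jointly} with any small adaptively chosen view of the source that $R$ might take. Ensuring that an explicit local extractor construction has this average-case robustness against a $q$-query adversary, while keeping the seed length $O(\log n)$ and the locality $O(m)$, is the technical heart of the proof---and is precisely why the hypothesis is restricted to the regime $q + m \le \beta n$, where enough un-queried bits of $\bx$ remain to furnish a high-entropy source.
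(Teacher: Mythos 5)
This theorem is not proved in the paper at all: it is cited from Shaltiel~\cite{Sha11} as prior work, solely for comparison against \Cref{thm:derand-alg}. The paper's only commentary on its proof is that it sits within the Goldreich--Wigderson~\cite{GW02} ``derandomization by extracting randomness from the input'' framework, with Shaltiel's instantiation using \emph{deterministic extractors for bit-fixing sources}, not seeded extractors.

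Your high-level strategy (feed extracted bits of the input back in as the random tape) is indeed the right framework, so you have correctly identified the paradigm. However, two things in your specific construction do not hold up. First, you propose a \emph{seeded} extractor with seed $x_T$ drawn from the same string $x$ that furnishes the source; the strong-extractor guarantee only promises that $\mathrm{Ext}(\bx, \bx_T)$ is close to uniform conditioned on $\bx_T$, which does nothing to control the joint distribution of $\mathrm{Ext}(\bx,\bx_T)$ with the $q$ adaptively chosen coordinates that $R$ then reads (some of which may lie inside $T$ or inside the $O(m)$ positions the extractor itself touched). Second, you explicitly flag this correlation as ``the technical heart of the proof'' and then stop, so the proposal does not actually resolve the one difficulty that makes the theorem nontrivial. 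Shaltiel's choice of \emph{bit-fixing-source} extractors is designed precisely for this: such extractors are seedless and remain close to uniform even when an adversary fixes an arbitrary bounded subset of source coordinates, which is exactly the situation created by the adaptive queries of $R$ together with the positions the extractor reads. Replacing your seeded extractor with a suitably local deterministic extractor for oblivious bit-fixing sources (with $k = n - O(q+m) = \Omega(n)$ bits of entropy, which is where the hypothesis $q+m \le \beta n$ is used) is what actually closes the gap.
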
 

We remark that~\Cref{thm:shaltiel} is just one of many results in~\cite{Sha11}, which considers the problem of constructive derandomization in a number of other computational models (communication complexity, streaming, constant-depth circuits, etc.) in addition to the query model.

\paragraph{Comparing our result (\Cref{thm:derand-alg}) to Zimand's and Shaltiel's (\Cref{thm:zimand,thm:shaltiel}).}

\begin{itemize}[leftmargin=0.5cm]
\item[$\circ$]  First, there is a high-level difference in terms of the overall setup: we assume that the derandomizing algorithm is given $R$ as input, and it is then expected to output the description of $D$; in~\cite{Zim07,Sha11}, $R$ is assumed to be explicitly constructible, and these works show that $D$ is also explicitly constructible.  Note that if a query algorithm $D$ is explicitly constructible, then its description can be printed in time $|D|\cdot \poly(q,\log n)$, where $|D|$ denotes the description length of $D$.  In this regard the results of~\cite{Zim07,Sha11} are stronger than ours.

\item[$\circ$] In the results of~\cite{Zim07,Sha11}, the query complexity of the resulting deterministic algorithm $D$ depends on the randomness complexity `$m$' of $R$, whereas~\Cref{thm:derand-alg} does not.  In~\Cref{thm:shaltiel} (\cite{Sha11}'s result) the query complexity of $D$ is $O(q+m)$, and in~\Cref{thm:zimand} (\cite{Zim07}'s result), $m$ is restricted to be at most $q$ to begin with.   In contrast, the query complexity of $D$ in~\Cref{thm:derand-alg} is $O(q/\eps)$ regardless of the value of~$m$.  We note that there are simple examples of randomized query algorithms for which $m \gg q$ (e.g.~the example in~\Cref{fact:pointwise-impossible} where $q=1$ and $m = \log n$).

\item[$\circ$] \Cref{thm:derand-alg} applies to general randomized query algorithms $R$ (with no assumptions about the distribution of $R(x,\br)$), and returns a deterministic $D$ that approximates $R$'s acceptance probabilities.  The results of~\cite{Zim07,Sha11} focus on $R$'s that satisfy (\ref{eq:ZS-assumption}), and return a $D$ that achieving a similar guarantee.

\item[$\circ$] The proofs of~\cite{Zim07,Sha11} are based on a general framework, due to Goldreich and Wigderson~\cite{GW02}, of ``derandomization by extracting randomness from the input".  (See~\cite{Sha10} for an excellent survey of this framework.) Both works use extractors within this framework to tame the correlations between the uniform random input ($\bx \sim \zo^n$) and the randomness employed by the query algorithm ($\br \sim \zo^m$): Zimand uses exposure resilient extractors, and Shaltiel uses extractors for bit-fixing sources.

As outlined in~\Cref{sec:techniques}, our approach to proving~\Cref{thm:derand-alg} is quite different from that of~\cite{Zim07,Sha11}: it is not based on the framework of~\cite{GW02} and does not involve extractors (though it does rely on other basic pseudorandomness constructs such as PRGs and randomness samplers).  
\end{itemize}

\section{Preliminaries} 
\label{sec:prelim} 

All probabilities and expectations are with respect to the uniform distribution; we use {\bf boldface} to denote random variables. 
Throughout this paper, we consider the most natural representation of query algorithms, as a {\sl binary decision tree}:
\begin{definition}[Randomized and deterministic decision trees] 
\label{def:DT} 
An $n$-variable \emph{randomized decision tree} $(\RDT)$ is a binary tree $R$ with two types of internal nodes: 
\begin{itemize} 
\item[$\circ$] \emph{Decision nodes} that branch on the outcome of boolean variables $x_1,\ldots,x_n$, 
\item[$\circ$] \emph{Stochastic nodes} that branch on the outcome of a $\mathrm{Bernoulli}(\frac1{2})$ random variable. 
\end{itemize} 
The leaves of $R$ are labelled by values in $[0,1]$.   The \emph{query complexity} of $R$ is the maximum number of decision nodes in any root-to-leaf path, and the \emph{randomness complexity of} of $R$ is the maximum number of stochastic nodes in any root-to-leaf path.   Please see~\Cref{fig:RDT-figure}.

A \emph{deterministic} decision tree $(\DDT)$ is a randomized decision tree with no stochastic nodes.  
\end{definition} 

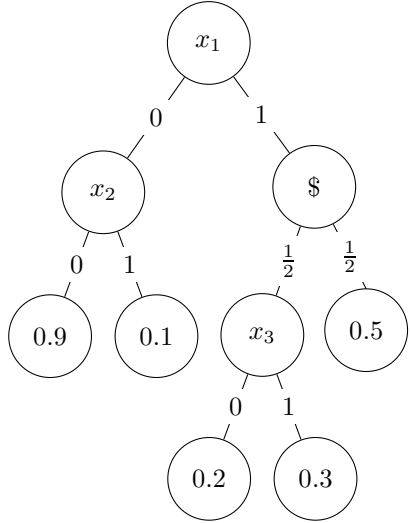
\begin{figure}[h]
\forestset{
  triangle/.style={
    node format={
      \noexpand\node [
      draw,
      shape=regular polygon,
      regular polygon sides=3,
      inner sep=0pt,
      outer sep=0pt,
      \forestoption{node options},
      anchor=\forestoption{anchor}
      ]
      (\forestoption{name}) {\foresteoption{content format}};
    },
    child anchor=parent,
  }
 }
\begin{align*}
    \hspace{3mm}{\small
    \!\begin{gathered}
    \begin{forest}
    for tree={
        grow=south,
        circle, draw, minimum size=11mm, l sep = 8mm, s sep = 3mm, inner sep = 1mm
    }
    [$x_1$
        [$x_2$, edge label = {node [midway, fill=white] {$0$} }
            [$0.9$, edge label = {node [midway, fill=white] {$0$} }]
            [$0.1$, , edge label = {node [midway, fill=white] {$1$} }]
        ]
        [$\$$, edge label = {node [midway, fill=white] {$1$} }
            [$x_3$, edge label = {node [midway, fill=white] {$\frac1{2}$} }
                [$0.2$, edge label = {node [midway, fill=white] {$0$} }]
                [$0.3$, edge label = {node [midway, fill=white] {$1$} }]
            ]
            [$0.5$, triangle, edge label = {node [midway, fill=white] {$\frac1{2}$} }]
        ]
    ]
    \end{forest}
    \end{gathered}}
\end{align*}
\caption{A randomized decision tree ($\RDT$)}
\label{fig:RDT-figure} 
\end{figure}

\begin{notation*} 
Let $R : \zo^n \times \zo^m \to [0,1]$ be a $q$-query $\RDT$.  For each $r \in \zo^m$, we define the function $R_r(x) \coloneqq R(x,r)$, and note that $R_r$ is a $q$-query $\DDT$. 
\end{notation*}

\vspace{-10pt} 

\pparagraph{Decision trees and the functions they compute.}  Every randomized decision tree can be associated with a randomized function that it computes, which we will express as $R : \zo^n \times \zo^m \to [0,1]$, where $m$ is its randomness complexity: on input $x \in \zo^n$, the output of $R$ is the random variable 
\[ R(x,\br)\quad \text{where $\br \sim \zo^m$ is uniform random.} \]    We also associate with $R$ its {\sl mean function} $\mu_R : \zo^n \to [0,1]$, where 
\[ \mu_R(x) = \Ex_{\br}[R(x,\br)].  \]  
Given two $\RDT$s $R_1$ and $R_2$, we say that $R_2$ $\eps$-approximates $R_1$ if $\| \mu_{R_1}-\mu_{R_2} \|_2^2 \le \eps$.  We will most often (though not always) use this terminology with $R_2$ being a $\DDT$.

\pparagraph{Decision trees and total decision problems.} We will also be interested in the special case of randomized decision trees that solve {\sl total decision problems} with bounded error:   

\begin{definition}[Bounded-error $\RDT$s for total decision problems] 
\label{def:decision-RDTs}
Let $f : \zo^n \to \zo$ be a boolean function and $R : \zo^n \times \zo^m \to \zo$ be an $\RDT$.  We say that $R$ is an \emph{$\RDT$ that computes $f$ with bounded error} if 
\[ \text{For all $x\in \zo^n$}, \quad \mu_R(x) = 
\begin{cases} 
[\frac{2}{3},1] & \mathrm{if }\ f(x) = 1 \\
[0,\frac{1}{3}] & \mathrm{if }\ f(x) = 0.
\end{cases} 
\] 
We write $R(f)$ to denote the \emph{randomized decision tree complexity} of $f$, 
\[ R(f) \coloneqq \min\{\,q \colon \text{there is a $q$-query $\RDT$ that computes $f$ with bounded error}\,\}, \]
and likewise $D(f)$ to denote its \emph{deterministic decision tree complexity}.
\end{definition}

\section{Proof of \Cref{thm:derand-alg}}

Our algorithm will have two conceptual steps:
\begin{enumerate}
    \item We first deterministically generate a list of not-too-many candidate $O(q/\eps)$-query $\DDT$s, with the guarantee that at least one of which must be a $\eps$-approximation of the $\RDT$ $R$.
    \item We show how to deterministically and efficiently compute the $L_2$ error $\| D-\mu_R \|_2^2$ between a $\DDT$ $D$ and $\RDT$ $R$,  allowing us to identify a candidate that is a $\varepsilon$-approximation the $\RDT$.
\end{enumerate}
\subsection{Step 1: Deterministically generating a list of candidates} 

If we do not care about the number of candidates returned, the first step is easily accomplished by applying the the algorithm implicitly defined by the proof of \Cref{fact:yao}. In that proof, we guarantee there is at least one outcome $(r_1,\ldots,r_{c})$ of $c = 1/\eps$ random strings $\br_1,\ldots,\br_c\sim \zo^m$ that can be used to construct a $O(q/\eps)$-query $\DDT$ that is an $\varepsilon$-approximation of the $\RDT$ $R$. Unfortunately, there are $2^{O(m/\eps)}$ possible candidates, and going through all of them---even assuming we can accomplish Step 2 of identifying a good candidate---would be much too slow.

In order to make this more efficient, we make the following two optimizations.
\begin{enumerate}
    \item[a.] We first use a pseudorandom generator to deterministically convert $R$ into another $\RDT$ $\tilde{R}$ that is an $\eps$-approximating of $R$ and has randomness complexity $\tilde{m} = O(\log(N/\eps))$.
    \item[b.] Rather than choosing $1/\eps$ many random strings independently and uniformly at random, we sample them only with pairwise independence. This is sufficient for our purposes and reduces the list of candidates from $N^{\Omega(1/\eps)}$ to $\poly(N,1/\eps)$. 
\end{enumerate}

We now formalize the above. First, we use a standard pseudorandom generator to reduce the randomness complexity of $R$:
\begin{lemma}[Randomness complexity reduction via PRGs]
\label{lemma: PRG}
    There is a deterministic algorithm that takes as input a $q$-query $\RDT$ $R: \zo^n \times \zo^m \to [0,1]$ with description length $N$ and error parameter $\varepsilon \in (0, \frac{1}{2})$, runs in $\poly(N,1/\eps)$ time and returns a $q$-query $\RDT$ $\tilde{R}: \zo^n \times \zo^{\tilde{m}} \to [0,1]$ with description length $O(N)$ and randomness complexity $\tilde{m} = O(\log(N/\eps))$ satisfying $\| \mu_R - \mu_{\tilde{R}} \|_2^2 \le \eps$. 
\end{lemma}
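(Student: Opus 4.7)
The plan is to construct $\tilde R$ by composing $R$ with a small-seed $\beta$-biased generator on its random bits. Concretely, I will take a $\beta$-biased distribution $G : \zo^{\tilde m} \to \zo^m$ with $\beta = \sqrt{\eps}/N$ and seed length $\tilde m = O(\log(m/\beta)) = O(\log(N/\eps))$, and define $\tilde R(x, \tilde \br) := R(x, G(\tilde \br))$. Stored implicitly as the pair $(R, G)$, this $\tilde R$ has description length $|R| + |G| = O(N)$, queries $x$ exactly as $R$ does (so query complexity $q$), and has randomness complexity $\tilde m$ as required.

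I would first expand $R$ along its root-to-leaf paths. Writing $Q_P$, $S_P$, $\ell_P$ for the query constraint on $x$, the stochastic constraint on $\br$, and the leaf label of path $P$, we have
\[
R(x, \br) \;=\; \sum_{P} \ell_P \cdot \mathbf{1}[x \vdash Q_P] \cdot \mathbf{1}[\br \vdash S_P],
\]
where there are at most $N$ paths in $R$ and each $\ell_P \in [0,1]$. Taking expectations over $\br$ and over $G(\tilde\br)$ separately yields
\[
\mu_R(x) - \mu_{\tilde R}(x) \;=\; \sum_{P} \ell_P \cdot \mathbf{1}[x \vdash Q_P] \cdot \bigl( 2^{-|S_P|} - \Prx_{\tilde \br}[G(\tilde \br) \vdash S_P] \bigr).
\]
So to control the $L_2$ error it suffices to ensure that $G$ fools every conjunction on its output bits to additive error $\beta$: then each summand is bounded by $\beta$ in absolute value, the total is at most $N\beta = \sqrt{\eps}$ for every $x$, and squaring and averaging over $\bx$ gives $\| \mu_R - \mu_{\tilde R} \|_2^2 \le \eps$.

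The final ingredient is the routine observation that a $\beta$-biased generator on $\zo^m$ fools every conjunction to error $\beta$: expanding the conjunction indicator in the Fourier basis gives $\mathbf{1}[\br \vdash S] = 2^{-|S|} \sum_{T \subseteq S} (-1)^{\langle T, s_P\rangle} \chi_T(\br)$, and the $2^{|S|}$ parity tests are each off by at most $\beta$ under $G$, which is exactly compensated by the $2^{-|S|}$ normalization out front. Such a $\beta$-biased distribution on $\zo^m$ with seed length $O(\log(m/\beta))$ and $\poly(m, 1/\beta)$-time construction is a standard pseudorandomness primitive. The step I expect to require the most care is the path-expansion bookkeeping---verifying that the $N$-fold blowup from summing over paths consistent with a given $x$ is not worse than $N$. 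This is handled by observing that $\mathbf{1}[x \vdash Q_P]$ is $\{0,1\}$-valued and each $\ell_P \le 1$, so the sum is bounded uniformly in $x$ by the total number of paths of $R$. Everything else---the existence of the biased generator, its $\poly(N, 1/\eps)$-time construction, and the implicit $(R, G)$ representation of $\tilde R$---is routine.
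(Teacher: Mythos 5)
Your proof is correct and takes essentially the same approach as the paper's: define $\tilde R(x,s) := R(x, \mathcal G(s))$ for a small-bias generator $\mathcal G$, and argue that for each fixed $x$ the function $r \mapsto R(x,r)$ is fooled because it has small Fourier $L_1$ norm. The only difference is presentational---the paper simply cites the facts that size-$N$ decision trees have Fourier $L_1$ norm at most $N$ and that small-bias spaces fool such functions, whereas you re-derive that $L_1$ bound explicitly via the path expansion and the Fourier expansion of a conjunction. (You also aim for pointwise error $\sqrt\eps$ rather than the paper's $\eps$, which is a slightly weaker guarantee but still suffices and gives the same seed length up to constants.)
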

\begin{proof}
For any fixed $x\in \zo^n$, the function $r\mapsto R(x,r)$ size-$N$ $\RDT$ comprising only of stochastic nodes.  It is a basic fact from derandomization theory that size-$N$ decision trees can be ``$\eps$-fooled with seed length $\tilde{m} = O(\log(N/\eps))$", meaning that there is an explicit and efficiently computable function $\mathcal{G} : \zo^{\tilde{m}} \to \zo^m$ such that 
    \begin{equation} \Big| \Ex_{\br\sim \zo^m}[R(x,\br)] - \Ex_{\bs\sim \zo^{\tilde{m}}}[R(x,\mathcal{G}(\bs))]\Big| \le \eps. \label{eq:fool}
    \end{equation} 
    This follows from the fact that size-$N$ decision trees (with output values in $[0,1]$) have Fourier $L_1$ norm at most $N$~\cite{KM93}, along with standard constructions of small-bias probability spaces~\cite{NN93,AGHP92}. 
    
    We define the function $\tilde{R} : \zo^n \times \zo^{\tilde{m}}\to [0,1]$, 
    \[ \tilde{R}(x,s) \coloneqq R(x,\mathcal{G}(s)), \]
    and note that $\tilde{R}$ is a $q$-query $\RDT$ with description length $O(N)$.  Note also that the bound (\ref{eq:fool}) can be expressed as $|\mu_R(x)- \mu_{\tilde{R}}(x)| \le \eps$.  Since this holds for all $x \in \zo^n$, the lemma follows. 
    \end{proof}

Next, we show how to use samplers to efficiently generate candidates.
\begin{lemma}[A short list of candidates via pairwise independent samplers]
\label{lemma: pairwise}
    There is a deterministic algorithm that takes as input a $q$-query $\RDT$ $\tilde{R}: \zo^n \times \zo^{\tilde{m}}  \to [0,1]$ with description length $N$, runs in
    \begin{align*}
        \poly(N, 2^{q/\eps}, 2^{\tilde{m}})
    \end{align*}
    time, and returns a list of $L = 2^{O(\tilde{m})}$ many $O(q/\eps)$-query $\DDT$s $\{ D_1,\ldots, D_L \}$ such that $\| D_{i^\star} - \mu_R \|_2^2 \le \eps$ for at least one $i^{\star} \in [L]$. 
 \end{lemma}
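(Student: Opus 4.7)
The plan is to exploit the reduced randomness complexity $\tilde{m}$ of $\tilde{R}$ by sampling $c = \Theta(1/\eps)$ random strings only with pairwise independence, using a standard pairwise independent family $\mathcal{H} = \{h_s \colon [c] \to \zo^{\tilde m}\}_{s \in \zo^\sigma}$ whose seed length $\sigma = O(\tilde m + \log c)$ is $O(\tilde m)$ provided $c \le 2^{O(\tilde m)}$ (which holds since $c = \Theta(1/\eps)$ and $\tilde m$ can be taken to be $\Omega(\log(1/\eps))$ without loss). For each seed $s \in \zo^\sigma$ let $r_1^s,\ldots,r_c^s \in \zo^{\tilde m}$ denote the images of $h_s$, and define the candidate $\DDT$
\[ D_s(x) \coloneqq \frac{1}{c}\sum_{i=1}^{c} \tilde{R}_{r_i^s}(x). \]
Since each $\tilde{R}_{r_i^s}$ is a $q$-query $\DDT$, stacking them yields a $(cq)$-query $\DDT$ whose leaves are labelled with the corresponding averages, so $D_s$ has query complexity $O(q/\eps)$. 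The list of candidates is $\{D_s : s\in \zo^\sigma\}$, giving $L = 2^{O(\tilde m)}$ trees as required.

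Next I would prove existence of a good candidate by a second-moment argument exactly analogous to the proof of \Cref{fact:yao}, but executed in the pairwise-independent sample space. Fix any $x \in \zo^n$. Under the uniform distribution over seeds, the strings $r_1^{\bs},\ldots,r_c^{\bs}$ are pairwise independent and each is marginally uniform on $\zo^{\tilde m}$, so the random variables $\tilde{R}(x,r_i^{\bs})$ are pairwise independent with common mean $\mu_{\tilde R}(x)$ and variance at most $\tfrac{1}{4}$. Hence
\[ \Ex_{\bs}[D_{\bs}(x)] = \mu_{\tilde R}(x), \qquad \Var_{\bs}[D_{\bs}(x)] = \frac{\Var_{\br}[\tilde R(x,\br)]}{c} \le \frac{1}{4c}, \]
which gives $\Ex_{\bs}\big[(D_{\bs}(x)-\mu_{\tilde R}(x))^2\big] \le 1/(4c)$. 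Averaging over $\bx \sim \zo^n$ and swapping expectations,
\[ \Ex_{\bs}\big[\|D_{\bs} - \mu_{\tilde R}\|_2^2\big] \le \frac{1}{4c}, \]
and choosing $c = \Theta(1/\eps)$ makes this at most $\eps/4$; so some seed $s^\star$ achieves $\|D_{s^\star}-\mu_{\tilde R}\|_2^2 \le \eps/4$. Combined with the $\|\mu_R - \mu_{\tilde R}\|_2^2 \le \eps/4$ guarantee inherited from \Cref{lemma: PRG} (invoked with error $\eps/4$), the triangle inequality $\|D_{s^\star} - \mu_R\|_2^2 \le 2(\|D_{s^\star}-\mu_{\tilde R}\|_2^2 + \|\mu_{\tilde R}-\mu_R\|_2^2) \le \eps$ yields the desired bound (after absorbing the factor of $2$ into the $\Theta(1/\eps)$).

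Finally, the runtime is dominated by enumerating the $2^{O(\tilde m)}$ seeds and, for each, writing down the candidate tree. Writing out $D_s$ requires producing all $2^{O(q/\eps)}$ leaves, each of which can be computed in $\poly(N)$ time by reading off the corresponding leaf labels of the $c$ restricted trees $\tilde R_{r_i^s}$; this gives per-candidate cost $\poly(N)\cdot 2^{O(q/\eps)}$ and total runtime $\poly(N,2^{q/\eps},2^{\tilde m})$, matching the statement. I do not anticipate a real obstacle: the only subtlety is bookkeeping the two ``budgets'' of error (one for the PRG step, one for the pairwise-independent empirical average) and confirming that the seed length of a pairwise independent family over $\zo^{\tilde m}$ is $O(\tilde m)$ rather than $\Omega(c\tilde m)$; both are standard.
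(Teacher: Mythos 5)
Your proof is correct and takes essentially the same approach as the paper: replace independent samples by pairwise-independent ones generated from an $O(\tilde m)$-bit seed, note that the second-moment calculation in the proof of Fact~\ref{fact:yao} only uses pairwise independence, stack the $c = \Theta(1/\eps)$ restricted trees into one $O(q/\eps)$-query $\DDT$, and enumerate all $2^{O(\tilde m)}$ seeds. One small organizational note: the $\mu_R$ appearing in the lemma statement is evidently a typo for $\mu_{\tilde R}$ (the algorithm only receives $\tilde R$, and Corollary~\ref{cor:candidates} explicitly treats the output of this lemma as an $\eps$-approximation of $\tilde R$ and then performs the triangle inequality against $R$ there). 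Your final paragraph, which re-invokes Lemma~\ref{lemma: PRG} with error $\eps/4$ and applies $(a+b)^2 \le 2(a^2+b^2)$ to pass from $\mu_{\tilde R}$ to $\mu_R$, is therefore folding the corollary's bookkeeping into the lemma's proof; it is correct, but unnecessary for what the lemma is actually meant to assert, and as written it conflates the lemma (whose sole input is $\tilde R$) with the two-step pipeline.
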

\begin{proof}
    We use pairwise independent samplers~\cite{CG89}: this is an efficiently computable deterministic function that maps a seed of $O(\tilde{m})$ random bits into $\br_1,\ldots, \br_c \sim \zo^{\tilde{m}}$ that are pairwise independent. It is easily verified that the proof of \Cref{fact:yao} only requires $\br_1,\ldots, \br_c$ to be picked with pairwise independence (since it is based only on first and second moment calculations). Hence, we can just try all possible choices for the seed, of which there are $2^{O(\tilde{m})}$, and for each include the resulting stacked tree as a candidate.
\end{proof}

Combining the above two lemmas with triangle inequality yields the following: 
\begin{corollary}[Combining~\Cref{lemma: PRG,lemma: pairwise}]
    \label{cor:candidates}
    There is a deterministic algorithm that takes as input a $q$-query $\RDT$ $R: \zo^n \times \zo^m \to [0,1]$ with description length $N$, runs in
    \begin{align*}
        \poly(N, 2^{q/\eps})
    \end{align*}
   time and returns a list of $L = \poly(N, 1/\eps)$ many $O(q/\eps)$-query $\DDT$s $\{ D_1,\ldots, D_L\}$ such that $\| D_{i^\star}-\mu_R\|_2^2 \le 4\eps$ for at least one $i^\star\in [L]$. 
    \end{corollary}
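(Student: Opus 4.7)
The plan is to chain the two lemmas in the obvious way, using the $L_2$ triangle inequality to combine the two sources of error. First I would invoke \Cref{lemma: PRG} on the input $\RDT$ $R$ with error parameter $\eps$, in $\poly(N,1/\eps)$ time, to obtain an $\RDT$ $\tilde{R} : \zo^n \times \zo^{\tilde{m}} \to [0,1]$ of description length $O(N)$ and randomness complexity $\tilde{m} = O(\log(N/\eps))$, satisfying $\|\mu_R - \mu_{\tilde{R}}\|_2^2 \le \eps$. Then I would feed $\tilde{R}$ into \Cref{lemma: pairwise}, also with error parameter $\eps$, in $\poly(N, 2^{q/\eps}, 2^{\tilde{m}}) = \poly(N, 2^{q/\eps})$ time, to obtain a list $\{D_1,\ldots,D_L\}$ of $L = 2^{O(\tilde{m})} = \poly(N,1/\eps)$ many $O(q/\eps)$-query $\DDT$s such that $\|D_{i^\star} - \mu_{\tilde{R}}\|_2^2 \le \eps$ for at least one index $i^\star \in [L]$.

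To conclude, I would apply the triangle inequality for the $L_2$ norm to this good index $i^\star$:
\[ \|D_{i^\star} - \mu_R\|_2 \le \|D_{i^\star} - \mu_{\tilde{R}}\|_2 + \|\mu_{\tilde{R}} - \mu_R\|_2 \le \sqrt{\eps} + \sqrt{\eps} = 2\sqrt{\eps}, \]
and squaring yields $\|D_{i^\star} - \mu_R\|_2^2 \le 4\eps$, as required. The parameters compose cleanly: the overall runtime is dominated by the second step, which is $\poly(N, 2^{q/\eps})$, and $L$ is the list size guaranteed by \Cref{lemma: pairwise}, which is $\poly(N,1/\eps)$. Each $D_i$ inherits its query complexity $O(q/\eps)$ from \Cref{lemma: pairwise} directly, since the $\RDT$ $\tilde{R}$ still has query complexity $q$.

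There is essentially no obstacle here; this is a short routine composition argument whose only subtlety is to pass the two lemmas compatible error parameters (we use $\eps$ for both, which is why the final guarantee carries a constant factor $4$ instead of $1$). If a clean constant of $1$ were desired, one could simply apply each lemma with parameter $\eps/4$, paying an extra constant in the runtime and list length.
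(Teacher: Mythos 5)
Your proof is correct and matches the paper's own argument: both apply \Cref{lemma: PRG} to reduce the randomness complexity to $\tilde{m}=O(\log(N/\eps))$, then \Cref{lemma: pairwise} to $\tilde{R}$, and finally the $L_2$ triangle inequality to combine the two $\sqrt{\eps}$ errors into the stated $4\eps$ squared-error bound. The parameter accounting ($L=2^{O(\tilde{m})}=\poly(N,1/\eps)$ and runtime dominated by $\poly(N,2^{q/\eps},2^{\tilde m})=\poly(N,2^{q/\eps})$) is likewise identical.
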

\begin{proof}
    Using \Cref{lemma: PRG}, we first deterministically convert $R$ into  $\tilde{R}$, a $q$-query $\RDT$ that $\eps$-approximates $R$ and has randomness complexity $\tilde{m} = O(\log(N/\eps))$.   Then, we use \Cref{lemma: pairwise} to generate $L= 2^{O(\tilde{m})} = \poly(N,1/\eps)$ many $O(q/\eps)$-query $\DDT$s $D_1,\ldots,D_L$, at least one of which, $D_{i^\star}$ is an $\eps$-approximation of $\tilde{R}$.   Since the $L_2$ distance between $R$ and $\tilde{R}$ is $\sqrt{\eps}$, and the $L_2$ distance between $\tilde{R}$ and $D_{i^\star}$ is $\sqrt{\eps}$, by the triangle inequality, the $L_2$ distance between $D_{i^\star}$ and $R$ is at most~$2\sqrt{\eps}$.  Squaring this gives the desired result.  
\end{proof}

\subsection{Step 2: Deterministically identifying a good candidate}

With~\Cref{cor:candidates} in hand, we are now faced with the following task: given an $\RDT$ $R$, a list of $L$ many $\DDT$s $\{ D_1,\ldots,D_L\}$, and the promise that at least one of the $D_i$'s $\eps$-approximates $R$, find one such $\DDT$ deterministically.  This in turn reduces to the task of computing $\| D - \mu_R \|_2^2$ deterministically, which we solve in this subsection. The key idea is to take advantage of the fact that $\RDT$s can be efficiently and deterministically converted into polynomials; specifically, the Fourier representation of~$\mu_R$. 

Let $\mathcal{A}_{\mathrm{Fourier}}$ be the algorithm that takes as input an $\RDT$ $R$ and returns the Fourier representation of $\mu_R$: 
\begin{align*}
    \mathcal{A}_{\mathrm{Fourier}}(R) =  
    \begin{cases} 
       \ \ds \left(\frac{1-x_r}{2}\right)  \mathcal{A}_{\mathrm{Fourier}}(R_0) + \left(\frac{1+x_r}{2} \right) \mathcal{A}_{\mathrm{Fourier}}(R_1)& \text{if $R$'s root queries $x_r$} \vspace{5pt} \\
       \  \frac{1}{2} \cdot (\mathcal{A}_{\mathrm{Fourier}}(R_0) +  \mathcal{A}_{\mathrm{Fourier}}(R_1))& \text{if $R$'s root is a stochastic node} \vspace{5pt} \\
        \  \ell & \text{if $R$ is a leaf $\ell \in [0,1]$,}
    \end{cases} 
\end{align*}
where $R_0$ and $R_1$ are the left and right subtrees of $R$.  It is straightforward to verify by induction that the polynomial $p_R : \{\pm 1\}^n \to [0,1]$, 
\[ p_R(x) = \sum_{S \sse [n]} \wh{p_R}(S) \prod_{i\in S} x_i \] 
returned by $\mathcal{A}_{\mathrm{Fourier}}(R)$ is indeed the Fourier representation of $\mu_R : \zo^n \to [0,1]$: 
\begin{align*}
    p_R(\hat{x}) = \mu_R(x) \quad \text{for all $x\in \zo^n$},
\end{align*}
where $\hat{x}$ denotes that $\{\pm 1\}^n$ representation of $x$. It takes $\poly(N, 2^q)$-time for $\mathcal{A}_{\mathrm{Fourier}}$ to compute all of the nonzero coefficients of the Fourier polynomial representing a $q$-query $\RDT$ with description length $N$. By elementary Fourier analysis, the following two basic properties of $\mu_R$ can be easily ``read off" its Fourier spectrum: 
\begin{align}
   \text{\sl Expectation:}\quad  \E[\mu_R] &= \wh{p_R}(\emptyset) 
    \label{eq:Fourier-expectation}\\
   \text{\sl $2$-norm squared:} \quad \| \mu_R \|_2^2 &= \sum_{S\sse [n]} \wh{p_R}(S)^2. \label{eq:Fourier-variance} 
\end{align}
(The identity (\ref{eq:Fourier-variance}) is commonly known as Parseval's identity.)  The following lemma is now straightforward: 
\begin{lemma}[Deterministic computation of $L_2$ distance]
\label{lemma:compute L2 distance}
    There is a deterministic algorithm with the following guarantee: Given as input a $q_R$-query $\RDT$ $R: \zo^n \times \zo^m \to [0,1]$ and $q_D$-query $\DDT$ $D: \zo^n \to [0,1]$ with description lengths $N_R$ and $N_D$ respectively, it runs in time
    \begin{align*}
        \mathrm{poly}(N_R, N_D, 2^{q_R}, 2^{q_D})
    \end{align*}
    and returns $\| D-\mu_R \|_2^2$. 
\end{lemma}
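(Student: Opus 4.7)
The plan is to expand the squared $L_2$ distance into three terms via bilinearity and compute each one from the Fourier representations produced by $\mathcal{A}_{\mathrm{Fourier}}$. Concretely, I would use the identity
\[ \| D - \mu_R\|_2^2 = \|D\|_2^2 - 2\, \E_{\bx}[D(\bx)\cdot \mu_R(\bx)] + \| \mu_R\|_2^2. \]
A $\DDT$ is, by definition, an $\RDT$ with no stochastic nodes, so $\mathcal{A}_{\mathrm{Fourier}}$ can be applied to both $R$ and $D$ to obtain the Fourier polynomials $p_R$ and $p_D$ representing $\mu_R$ and $D$ respectively.

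First I would bound the sizes of $p_R$ and $p_D$. A $q$-query decision tree with description length $N$ has at most $N$ leaves, and each root-to-leaf path has length at most $q$, so its indicator polynomial is a sum of at most $N$ monomials each of degree $\le q$. Expanding out and collecting terms, $p_R$ has at most $N_R \cdot 2^{q_R}$ nonzero Fourier coefficients, and $p_D$ has at most $N_D\cdot 2^{q_D}$; moreover, $\mathcal{A}_{\mathrm{Fourier}}$ produces these sparse representations in time $\poly(N_R,2^{q_R})$ and $\poly(N_D,2^{q_D})$ respectively.

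Next I would read off each of the three terms from the Fourier spectra. By Parseval (equation (\ref{eq:Fourier-variance})),
\[ \|\mu_R\|_2^2 = \sum_{S\sse [n]} \wh{p_R}(S)^2 \qquad \text{and} \qquad \|D\|_2^2 = \sum_{S\sse[n]} \wh{p_D}(S)^2, \]
and by Plancherel's identity (which follows from orthonormality of the characters $\prod_{i\in S}x_i$ and is proved identically to Parseval),
\[ \E_{\bx}[D(\bx)\cdot \mu_R(\bx)] = \sum_{S\sse[n]} \wh{p_D}(S)\, \wh{p_R}(S). \]
Each sum ranges only over the (polynomially many) $S$ in the support of the respective polynomials, so each can be computed in time $\poly(N_R,N_D,2^{q_R},2^{q_D})$; combining them gives $\|D-\mu_R\|_2^2$ in the same time bound.

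There isn't really a conceptual obstacle here: the lemma is essentially bookkeeping once one observes that both $R$ and $D$ can be handed to the same Fourier-converting procedure. The only thing to verify carefully is the sparsity bound on the Fourier representations produced by $\mathcal{A}_{\mathrm{Fourier}}$ (so that the inner-product sums can be evaluated in the claimed running time); this follows immediately from the recursive definition, since a decision tree of description length $N$ produces a sum of at most $N$ monomials at the leaves, each of degree bounded by the query complexity.
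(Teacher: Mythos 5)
Your proof is correct and takes essentially the same approach as the paper: compute the Fourier representations of $D$ and $\mu_R$ via $\mathcal{A}_{\mathrm{Fourier}}$ and evaluate the $L_2$ distance from the coefficients. The paper subtracts the two Fourier representations and applies Parseval once to the difference, whereas you expand the square into three inner products and apply Parseval/Plancherel to each; these are algebraically equivalent reorganizations of the same coefficient-level arithmetic.
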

\begin{proof}
    The algorithm uses $\mathcal{A}_\mathrm{Fourier}$ to compute the Fourier representations for $D$ and $\mu_R$, and then subtracts them to compute the representation for $D - \mu_R$. Then, we use Parseval's identity (\ref{eq:Fourier-variance}) to compute the desired result.
\end{proof}

\Cref{thm:derand-alg} follows from \Cref{cor:candidates} and \Cref{lemma:compute L2 distance}.


\renewcommand{\qbudget}{\overline{q}}

\section{Proof of \Cref{thm:instance-opt,thm:framework}: Instance-optimal derandomization}
\label{sec:instance optimal}

In this section we develop a general framework,~\Cref{thm:framework}, for achieving instance-optimal derandomization.  Our framework will apply to a broad class of error metrics (for measuring the distance between an RDT and a DDT), and we will show that~\Cref{thm:instance-opt} follows as an easy corollary by instantiating this framework with the error metric being $L_2$ error.  Looking ahead, in~\Cref{sec:Nisan} we will show that our instance-optimal constructivization of~\hyperlink{nisan-anchor}{Nisan's Theorem} can also be captured within this framework (though that application requires slightly more technical work). 


The following is the key definition for our framework: 

\begin{definition}[Natural and efficient error metric]
\label{def:natural-efficient} 
We say that an error metric $\mathcal{E} : \{ \RDT\text{s} \} \times \{ \DDT\text{s}\} \to [0,1]$ is \emph{natural} if there is a some $d : [0,1] \times [0,1] \to [0,1]$ such that 
\begin{equation}\mathcal{E}(R,D) = \Ex[d(\mu_R(\bx),D(\bx))]. \label{eq:distance}
\end{equation} 
For $t = t(q)$ a function of $q$, we say that $\mathcal{E}$ is \emph{$t$-efficient} if for all $q$-query $\RDT$s $R$ and $\DDT$s $D$ of description lengths $N_R$ and $N_D$ respectively, 
\begin{enumerate}
\item There is a deterministic $\poly(t,N_R,N_D)$-time algorithm that computes $\mathcal{E}(R,D)$. 
\item There is a deterministic $\poly(t,N_R)$-time algorithm that computes the constant $c \in [0,1]$ that minimizes $\mathcal{E}(R,c)$. 
\end{enumerate}  
\end{definition}



\begin{theorem}[General framework for instance-optimal derandomization]
    \label{thm:framework-formal}
    Let $\mathcal{E}$ be a natural $t$-efficient error metric. There is a deterministic algorithm, $\mathcal{A}_{\mathrm{InstanceOpt},\mathcal{E}}$ with the following guarantee: Given as input a $q$-query $\RDT$ $R$ with description length $N$ and an error parameter $\eps \in (0,1)$, for
    \[ q^\star_{R,\mathcal{E}} \coloneqq \min \{\,q' \colon \text{there is a $q'$-query $\DDT$ $D$ such that $
    \mathcal{E}(R,D) \leq \epsilon$\,}\}. \] 
    $\mathcal{A}_{\mathrm{InstanceOpt},\mathcal{E}}$ runs in
    \[ \poly(N,t, n^{q^{\star}_{R,\mathcal{E}}})  \] 
    time and returns a $q^{\star}_{R,\mathcal{E}}$-query $\DDT$ $D$ satisfying $\mathcal{E}(R,D) \leq \epsilon$.  
\end{theorem}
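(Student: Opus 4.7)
The plan is a straightforward dynamic program whose states are pairs $(\rho,k)$, with $\rho$ a partial assignment (equivalently, a subcube) of at most $q^\star_{R,\mathcal{E}}$ coordinates and $k$ a remaining query budget. The key enabling observation is that because $\mathcal{E}$ is natural, i.e.\ $\mathcal{E}(R,D)=\E_{\bx}[d(\mu_R(\bx),D(\bx))]$, it decomposes cleanly along the root of any candidate $\DDT$. Writing $\mathcal{E}_\rho(R,D)\coloneqq \E_{\bx\mid \bx \text{ consistent with } \rho}[d(\mu_R(\bx),D(\bx))]$ for the conditional version of the metric over the subcube indexed by $\rho$, one has
\[
\mathcal{E}_\rho(R,D)\;=\;\tfrac12\,\mathcal{E}_{\rho\cup\{x_i=0\}}(R,D_0)+\tfrac12\,\mathcal{E}_{\rho\cup\{x_i=1\}}(R,D_1)
\]
whenever the root of $D$ queries a variable $x_i\notin\rho$ with children $D_0,D_1$, and $\mathcal{E}_\rho(R,c)=\E_{\bx\mid\rho}[d(\mu_R(\bx),c)]$ if $D$ is a constant leaf $c$. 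Moreover $\mu_R$ restricted to the $\rho$-subcube equals $\mu_{R_\rho}$, so each subproblem is really a derandomization problem on the smaller $\RDT$ $R_\rho$, which has description length at most $N$.

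Given this decomposition, I define
\[
\mathrm{OPT}(\rho,k)\;\coloneqq\;\min\bigl\{\mathcal{E}_\rho(R,D)\,:\,D\text{ is a $\DDT$ of depth }\le k\bigr\},
\]
which satisfies the Bellman recurrence
\[
\mathrm{OPT}(\rho,k)=\min\Bigl\{\min_{c\in[0,1]}\mathcal{E}_\rho(R,c),\ \ \min_{i\notin\rho}\tfrac12\,\mathrm{OPT}(\rho\cup\{x_i=0\},k-1)+\tfrac12\,\mathrm{OPT}(\rho\cup\{x_i=1\},k-1)\Bigr\}.
\]
I iterate $q'=0,1,2,\ldots$ and, for each $q'$, fill in $\mathrm{OPT}(\rho,k)$ for all $(\rho,k)$ with $|\rho|+k\le q'$, stopping at the first $q'$ with $\mathrm{OPT}(\emptyset,q')\le\epsilon$. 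Since the recurrence is exact, this $q'$ is exactly $q^\star_{R,\mathcal{E}}$, and standard backpointer tracing through the DP table recovers a witness $\DDT$ of depth $q^\star_{R,\mathcal{E}}$ attaining error at most $\epsilon$.

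For the runtime, the number of subcubes $\rho$ of size at most $q'$ is $\binom{n}{q'}2^{q'}\le (2n)^{q'}$, and each state has $n$ inductive children to compare, so the inductive work is $n^{O(q')}$ table lookups. The base cases $\mathrm{OPT}(\rho,0)$ are exactly the optimization guaranteed by property (2) of $t$-efficiency applied to $R_\rho$, so each one is computable in $\poly(t,N)$ time. Summing $q'$ from $0$ to $q^\star_{R,\mathcal{E}}$ is geometric and gives the stated $\poly(N,t,n^{q^\star_{R,\mathcal{E}}})$ bound. The main obstacle I expect is purely conceptual: verifying that the natural form $\mathcal{E}(R,D)=\E[d(\mu_R(\bx),D(\bx))]$ is precisely what makes the root-decomposition exact, and noticing that property (2) alone (optimal constant fit on the restricted $\RDT$) suffices for the base case without ever invoking the more expensive property (1) of computing $\mathcal{E}(R,D)$ for general $\DDT$s $D$ — this is what makes the framework flexible enough to later accommodate the error metrics used for constructivizing Nisan's and Beals et al.'s theorems.
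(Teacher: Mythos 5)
Your proposal is correct and matches the paper's approach in essence: the paper's \textsc{Find} procedure is exactly the recursive, non-memoized evaluation of your Bellman recurrence, and the correctness argument in both cases hinges on the same root-decomposition identity for natural error metrics (Lemma~\ref{lemma:find correctness} in the paper), with the same outer loop iterating $q' = 0, 1, 2, \ldots$ until the error drops below $\epsilon$. The memoized DP formulation is a cosmetic (and arguably cleaner) variant; it doesn't change the $n^{O(q^\star_{R,\mathcal{E}})}$ bound. One small inaccuracy worth flagging: you claim property (2) of $t$-efficiency alone suffices, but property (2) as stated only returns the optimal \emph{constant} $c$, not its error value $\mathcal{E}(R_\rho, c)$; to populate the base entries $\mathrm{OPT}(\rho, 0)$ of your table you must also evaluate that error, which is a (trivial, constant-sized-$D$) invocation of property (1). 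The paper, by contrast, uses property (1) more substantively in Step~3 of \textsc{Find} to compare full candidate subtrees $D_i$ rather than carrying error values through the recursion; your DP does make that particular use of property (1) unnecessary, which is a genuine if minor simplification.
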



The algorithmic core of \Cref{thm:framework-formal} is the deterministic recursive backtracking procedure {\sc Find} shown in \Cref{fig:find}, the goal of which is to finds a $\qbudget$-query decision tree that achieves minimal error relative to a given error metric~$\mathcal{E}$.

The assumptions that $\mathcal{E}$ is natural and $t$-efficient will both play crucial roles in our analysis of {\sc Find}: the former is the key criterion for establishing its correctness (\Cref{lemma:find correctness}), and the latter is the key criterion for analyzing its runtime (\Cref{lemma:find efficiency}). 
\begin{figure}[h!]
  \captionsetup{width=.9\linewidth}
\begin{tcolorbox}[colback = white,arc=1mm, boxrule=0.25mm]
    \textsc{Find}$(R, \mathcal{E}, \qbudget, \pi)$:
    \begin{itemize}[align=left]
        \item[\textbf{Input:}]  An $\RDT$ $R$, an error metric $\mathcal{E}$, query budget $\qbudget$, and restriction $\pi$.
        \item[\textbf{Output:}]  A $\qbudget$-query $\DDT$ $D$ that minimizes $\mathcal{E}(R_\pi,D)$ among all $\qbudget$-query $\DDT$s.
    \end{itemize}
    \begin{enumerate}
        \item If $\qbudget = 0$, return the constant $c\in [0,1]$ that minimizes $\mathcal{E}(R_\pi, c)$.
        \item For every $i \in [n]$, let $D_i$ be the $\DDT$ defined as follows: 
        \begin{itemize}
        \item[$\circ$] $D_i$ queries $x_i$ at the root; 
        \item[$\circ$] Has $\textsc{Find}(R, \mathcal{E}, \qbudget-1, \pi \cup \{x_i \leftarrow 0\})$ as its left subtree;
        \item[$\circ$] Has $\textsc{Find}(R, \mathcal{E}, \qbudget-1, \pi \cup \{x_i \leftarrow 1\})$ as its right subtree.
        \end{itemize}
Here $\pi \cup \{ x_i \leftarrow b\}$ denotes the extension of $\pi$ where $x_i$ is set to $b$. 
        \item Return the tree $D_{i^\star}$ that minimizes $\mathcal{E}(R_\pi, D_{i^\star})$ among all $i^\star \in [n]$.
    \end{enumerate}
\end{tcolorbox} 
\caption{A deterministic recursive backtracking algorithm for finding a $\qbudget$-query $\DDT$ of minimal error relative to an error metric $\mathcal{E}$.} 
\label{fig:find}
\end{figure}

\begin{lemma}[Correctness of \textsc{Find}]
    \label{lemma:find correctness}
    Consider any $\RDT$ $R$, natural error metric $\mathcal{E}$, query budget $\qbudget \in \Z_+$, and restriction $\pi$.   The algorithm $\textsc{Find}(R, \mathcal{E}, \qbudget, \pi)$ of~\Cref{fig:find} returns a $\qbudget$-query $\DDT$ $D$ that minimizes $\mathcal{E}(R_\pi,D)$ among all $\qbudget$-query $\DDT$s.
\end{lemma}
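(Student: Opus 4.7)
The plan is to prove the lemma by induction on the query budget $\qbudget$, where the key algebraic fact we need is that for any natural error metric, the error on $R_\pi$ against a decision tree rooted at variable $x_i$ splits as an equally-weighted sum of the errors on its two children against the appropriate restrictions of $R_\pi$. Granting this decomposition, the recursion tree of \textsc{Find} mirrors the structure of the optimization exactly, and a greedy ``best root, then best subtrees'' choice is in fact globally optimal.

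For the base case $\qbudget = 0$, a $0$-query $\DDT$ is simply a constant in $[0,1]$, and Step~1 of the algorithm returns exactly the constant that minimizes $\mathcal{E}(R_\pi, c)$, so correctness is immediate from the definition. (That such a minimizing constant can actually be found is guaranteed by the second efficiency clause of \Cref{def:natural-efficient}, though that only matters for the runtime analysis, not for correctness.)

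For the inductive step, suppose the claim holds for $\qbudget - 1$ and every restriction. Fix any $\qbudget$-query $\DDT$ $D$ whose root queries some variable $x_i$ with left and right subtrees $D_0$ and $D_1$. Using the naturalness hypothesis $\mathcal{E}(R_\pi, D) = \E_{\bx}[d(\mu_{R_\pi}(\bx), D(\bx))]$ and splitting the expectation according to $\bx_i$, we obtain the decomposition
\[
\mathcal{E}(R_\pi, D) \;=\; \tfrac{1}{2}\,\mathcal{E}\bigl(R_{\pi \cup \{x_i \leftarrow 0\}},\, D_0\bigr) \;+\; \tfrac{1}{2}\,\mathcal{E}\bigl(R_{\pi \cup \{x_i \leftarrow 1\}},\, D_1\bigr).
\]
This is the single substantive observation driving the proof, and it relies crucially on the fact that $\mathcal{E}$ is averaged pointwise so that no cross-terms couple the two subtrees. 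Consequently, among all $\qbudget$-query $\DDT$s rooted at $x_i$, the error is minimized precisely by choosing $D_0$ and $D_1$ to be independent optimal $(\qbudget-1)$-query $\DDT$s for $R_{\pi \cup \{x_i \leftarrow 0\}}$ and $R_{\pi \cup \{x_i \leftarrow 1\}}$ respectively. By the inductive hypothesis, the recursive calls $\textsc{Find}(R,\mathcal{E},\qbudget-1,\pi\cup\{x_i \leftarrow 0\})$ and $\textsc{Find}(R,\mathcal{E},\qbudget-1,\pi\cup\{x_i \leftarrow 1\})$ return exactly such optimal subtrees, so the tree $D_i$ constructed in Step~2 is optimal among all $\qbudget$-query $\DDT$s rooted at $x_i$.

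The final step is then immediate: every $\qbudget$-query $\DDT$ is rooted at \emph{some} variable $x_i \in [n]$ (or is a constant, handled by the $\qbudget = 0$ base case since a $\qbudget$-query tree can always be treated as having fewer queries), so taking the minimum of $\mathcal{E}(R_\pi, D_i)$ over $i \in [n]$ in Step~3 yields a globally optimal $\qbudget$-query $\DDT$. The only subtlety worth spelling out carefully is the boundary case where the optimal tree does not actually use all $\qbudget$ queries; this is handled by noting that any $\qbudget'$-query $\DDT$ with $\qbudget' < \qbudget$ can be padded to a $\qbudget$-query $\DDT$ with identical output function, so restricting to trees of depth exactly $\qbudget$ loses nothing. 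I do not expect any genuine obstacle here — the only place a reader might pause is verifying the pointwise expectation decomposition above, which is a one-line calculation from \Cref{def:natural-efficient}.
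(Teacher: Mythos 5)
Your proof is correct and follows essentially the same route as the paper: induction on $\qbudget$, with the pointwise decomposition of a natural error metric over the two children as the key step, and the min over roots in Step~3 closing the argument. The one place you go beyond the paper is in explicitly flagging the padding argument for optimal trees of depth less than $\qbudget$ — a reasonable clarification the paper leaves implicit.
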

\begin{proof}
    We proceed by induction on $\qbudget$. If $\qbudget = 0$, then $\textsc{Find}$ returns at Step 1 and is clearly correct.
    For the inductive step, suppose that $\qbudget \geq 1$. For any $i\in [n]$, we first claim that the tree $D_i$ defined in Step 2 is a $\qbudget$-query $\DDT$ for $R_\pi$ that achieves minimal error among those that query $x_i$ at the root. Let $(D_i)_{\mathrm{left}}$ and $(D_i')_{\mathrm{right}}$ be its left and right subtrees respectively.  Now our assumption that $\mathcal{E}$ is a natural error metric, we have that: 
    \begin{align*}
        \mathcal{E}(R_\pi, D_i) = \lfrac{1}{2}\big(\mathcal{E}(R_{\pi \cup \{x_i\leftarrow 0\}}, (D_i)_{\mathrm{left}}) + \mathcal{E}(R_{\pi \cup \{x_i\leftarrow 1\}}, (D_i)_{\mathrm{right}})\big).
    \end{align*}
    By the inductive hypothesis, the left and right subtrees $(D_i)_{\mathrm{left}}$ and $(D_i)_{\mathrm{right}}$ are $(\qbudget-1)$-query $\DDT$s that have minimal error with respect to $R_{\pi \cup \{x_i \leftarrow 0\}}$ and $R_{\pi \cup \{x_i\leftarrow 1\}}$ respectively. Hence indeed, $D_i$ is a $\qbudget$-query $\DDT$ for $R_\pi$ that achieves a minimal error among those that query $x_i$ at the root.  
    
    Since $\textsc{Find}$ returns the $D_{i^\star}$ that minimizes $\mathcal{E}(R_\pi, D_{i^\star})$ among all $i^\star \in [n]$ in Step 3, and each $D_i$ is $\qbudget$-query $\DDT$ for $R_\pi$ that achieves minimal error tree among those that query $x_i$ at the root, we conclude that $\textsc{Find}$ returns a minimal error tree among all $\qbudget$-query $\DDT$s.
\end{proof}

\begin{lemma}[Efficiency of {\sc Find}]
\label{lemma:find efficiency} 
    Consider any $q$-query $\RDT$ $R$ with description length $N$, error function $\mathcal{E}$ that is $t$-efficient, $\qbudget \in \Z_+$, and restriction $\pi$.   The algorithm $\textsc{Find}(R, \mathcal{E}, \qbudget, \pi)$ of~\Cref{fig:find} takes time $\poly(N,t,n^{\qbudget})$.   
\end{lemma}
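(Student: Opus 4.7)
The plan is to prove the bound by induction on $\qbudget$, unfolding the runtime level by level and using the two efficiency guarantees of Definition~\ref{def:natural-efficient} as black boxes.

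For the base case $\qbudget = 0$, the algorithm makes a single call to the subroutine that finds the best constant $c \in [0,1]$ minimizing $\mathcal{E}(R_\pi, c)$. By the second efficiency guarantee this takes $\poly(t, N_{R_\pi})$ time, and I would first observe (as a small standalone fact) that restricting an $\RDT$ by $\pi$ can only shrink its description, so $N_{R_\pi} \leq N$. Thus the base case is bounded by $\poly(N, t)$.

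For the inductive step, assume the bound for $\qbudget - 1$. The $\textsc{Find}$ call with budget $\qbudget$ spawns $2n$ recursive calls at budget $\qbudget - 1$, each on a further restriction of $R$ and each running in time $\poly(N, t, n^{\qbudget - 1})$ by the inductive hypothesis; summing contributes $\poly(N, t, n^{\qbudget})$. After the recursive calls return, the algorithm assembles the $n$ candidate trees $D_1, \ldots, D_n$; since each $D_i$ is a depth-$\qbudget$ DDT with at most $2^{\qbudget} \leq n^{\qbudget}$ leaves, its description length is $\poly(n^{\qbudget})$, and assembly takes $\poly(n^{\qbudget})$ time. Finally, the algorithm invokes the $t$-efficient subroutine for $\mathcal{E}(R_\pi, D_i)$ on each $i$, which by the first efficiency guarantee costs $\poly(t, N, n^{\qbudget})$ per call, hence $\poly(N, t, n^{\qbudget})$ overall. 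Adding the three contributions closes the induction.

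The only step I expect to require any care beyond routine bookkeeping is ensuring that the $t$-efficiency guarantees can be applied uniformly along every branch of the recursion: specifically that the description length appearing in the cost bound remains $\poly(N)$ rather than blowing up with the depth of $\pi$. Once that monotonicity of $|R_\pi|$ is in hand, the rest is arithmetic on polynomially bounded quantities. I would also remark briefly that the total number of recursive calls is at most $(2n)^{\qbudget}$, which on its own is absorbed into the claimed $\poly(N, t, n^{\qbudget})$ bound since $2^{\qbudget} \leq n^{\qbudget}$ for $n \geq 2$; this gives an alternative, more explicit ``sum over levels'' accounting that produces the same bound without invoking induction.
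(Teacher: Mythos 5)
Your proposal is correct and mirrors the paper's proof essentially exactly: both proceed by induction on $\qbudget$, bound the base case by $\poly(N,t)$ using the $t$-efficiency subroutine for the best constant, and bound the inductive step by summing the $2n$ recursive calls with the $n$ error-metric evaluations on candidate $\DDT$s of size $\le 2^{\qbudget}$, giving the recurrence $T(\qbudget) \le 2n\cdot T(\qbudget-1) + n\cdot\poly(N,t,2^{\qbudget})$. Your side remarks — that $|R_\pi| \le N$ so the $t$-efficiency costs stay uniformly bounded, and that $(2n)^{\qbudget}$ total recursive calls are absorbed into $\poly(n^{\qbudget})$ — are the same considerations the paper handles implicitly, so this is the same argument with slightly more explicit bookkeeping.
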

\begin{proof}
   Let $T(\qbudget)$ denote the running time of {\sc Find} when run with query budget $\qbudget$.   If $\qbudget = 0$ then the algorithm only executes Step 1, which we claim can be done in time $\poly(N,t)$. In time $\poly(N)$ we can convert $R$ to $R_\pi$ by skipping any decision nodes restricted by $\pi$ and replacing them with the subtree on the side specified by $\pi$. Then, since $\mathcal{E}$ is $t$-efficient, we can compute the constant $c\in [0,1]$ that minimizes $\mathcal{E}(R_\pi, c)$ in time $\poly(N,t)$.
    
Next we consider the case of $\qbudget \ge 1$.  In step 2, $\textsc{Find}$ recurses $2n$ times, each with $\qbudget$ decremented by one. By induction, all of these recursive calls takes total time $2n \cdot T(\qbudget-1).$  In step 3, $\textsc{Find}$ must compute $\mathcal{E}(R_\pi, D_i)$ for up to $n$ different coordinates $i$, where each $D_i$ has size at most $2^{\qbudget}$. Since $\mathcal{E}$ is $t$-efficient, this takes time at most $n \cdot \poly(N,t,2^{\qbudget}).$  We therefore have the recurrence relation: 
\[ T(\qbudget) \le 2n \cdot T(\qbudget -1) + n\cdot \poly(N,t,2^{\qbudget}).\] 
 Solving this recurrence relation gives us the claimed bound $T(\qbudget) \le \poly(N, t,n^{\qbudget}).$
\end{proof}

Now that we have proved the correctness and runtime of $\textsc{Find}$, we show how to use it in our framework for instance-optimal derandomization: 
\begin{proof}[Proof of~\Cref{thm:framework-formal}]
    Let $\mathcal{A}_{\mathrm{InstanceOpt},\mathcal{E}}$ be the algorithm that runs
    \begin{align*}
        \textsc{Find}(R, \mathcal{E}, \qbudget = j, \pi = \emptyset)
    \end{align*} 
    for $j = 0,1,2,\ldots$ and returns the first output of $\textsc{Find}$ that has error at most $\varepsilon$ relative to $\mathcal{E}$. By \Cref{lemma:find correctness}, $\mathcal{A}_{\mathrm{InstanceOpt},\mathcal{E}}$ will return a $q^{\star}_{R,\mathcal{E}}$-query $\DDT$ $D$ satisfying $\mathcal{E}(R,D) \leq \epsilon$.  By~\Cref{lemma:find efficiency}, the runtime of $\mathcal{A}_{\mathrm{InstanceOpt},\mathcal{E}}$ is     \begin{align*}
        \sum_{j=0}^{q^\star_{R,\mathcal{E}}} \poly(N,t,n^{j}) \le  \poly(N,t,n^{q^\star_{R,\mathcal{E}}}).
    \end{align*}
 This completes the proof of~\Cref{thm:framework-formal}. 
\end{proof}

\subsection{Using this framework to prove~\Cref{thm:instance-opt}: $L_2$ error is natural and efficient}

In order to apply our general framework,\Cref{thm:framework-formal}, we need to show that squared $L_2$ error is natural and efficient, as defined in \Cref{def:natural-efficient}. Clearly, it is natural for $d(x,y) = d(x-y)^2$. The following Lemma, combined with \Cref{lemma:compute L2 distance}, shows it is efficient.

\begin{lemma}
    \label{lemma:min l2 distance}
    There is a deterministic algorithm with the following guarantee: Given as input a $q$-query $\RDT$ $R: \zo^n \times \zo^m \to [0,1]$ with description length $N$, it runs in time
    \begin{align*}
        \mathrm{poly}(N, 2^{q})
    \end{align*}
    and finds the constant $c \in [0,1]$ that minimizes $\E[(c-\mu_R(\bx))^2]$. 
\end{lemma}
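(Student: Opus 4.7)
The plan is to exploit the standard fact that for any real-valued random variable $Y$, the constant minimizing $\E[(c-Y)^2]$ is $c^\star = \E[Y]$ (this follows by expanding $\E[(c-Y)^2] = c^2 - 2c\,\E[Y] + \E[Y^2]$ and differentiating in $c$). Applied to $Y = \mu_R(\bx)$ with $\bx \sim \zo^n$ uniform, the minimizer is
\[ c^\star = \Ex_{\bx}[\mu_R(\bx)]. \]
Moreover, since $\mu_R$ takes values in $[0,1]$, we have $c^\star \in [0,1]$ automatically, so the constraint $c \in [0,1]$ is not binding. The entire task therefore reduces to deterministically computing the single number $\Ex_{\bx}[\mu_R(\bx)]$.

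To do this, I would invoke the algorithm $\mathcal{A}_{\mathrm{Fourier}}$ described just before \Cref{lemma:compute L2 distance}. On input $R$, this algorithm returns the Fourier representation $p_R(x) = \sum_{S \subseteq [n]} \widehat{p_R}(S)\,\prod_{i\in S} x_i$ of the mean function $\mu_R$, and it does so in time $\poly(N, 2^q)$ (as noted in the text, only the nonzero Fourier coefficients are produced, and there are at most $2^q$ of these since $R$ has at most $2^q$ root-to-leaf paths). Then by identity~(\ref{eq:Fourier-expectation}),
\[ \Ex_{\bx}[\mu_R(\bx)] = \widehat{p_R}(\emptyset), \]
so the algorithm simply reads off the empty-set coefficient from the output of $\mathcal{A}_{\mathrm{Fourier}}$ and returns it as $c$.

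The correctness follows immediately from the two ingredients above, and the runtime is dominated by the call to $\mathcal{A}_{\mathrm{Fourier}}$, giving the claimed $\poly(N, 2^q)$ bound. There is no real obstacle here: the only point worth double-checking is that the unconstrained minimizer over $\R$ lies in $[0,1]$, which is immediate from $\mu_R(\zo^n) \subseteq [0,1]$. Combined with \Cref{lemma:compute L2 distance}, this establishes that squared $L_2$ error satisfies both efficiency conditions of \Cref{def:natural-efficient}, and naturality is witnessed by $d(x,y) = (x-y)^2$, so \Cref{thm:instance-opt} follows from \Cref{thm:framework-formal} with $t = 2^q$.
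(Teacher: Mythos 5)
Your proof is correct and takes essentially the same route as the paper: identify the unconstrained minimizer as $c^\star = \E_{\bx}[\mu_R(\bx)]$ via the first-order condition, then compute this expectation by running $\mathcal{A}_{\mathrm{Fourier}}$ and reading off the empty-set Fourier coefficient via~(\ref{eq:Fourier-expectation}), in $\poly(N, 2^q)$ time. Your additional remark that $c^\star$ automatically lands in $[0,1]$ (so the box constraint is inactive) is a small point the paper leaves implicit, and is worth stating.
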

\begin{proof}
The quantity $\Ex_{\bx}[(c - \mu_R(\bx))^2]$ is a convex function of $c$ with derivative, with respect to $c$, of the following expression.
\begin{align*}
    \E\big[2(c - \mu_R(\bx))\big].
\end{align*}
This is equal to $0$ only when $c = \E[\mu_R(\bx)]$, which is the unique minimum of $\E[(c-\mu_R(\bx))^2]$. To find it, we use $\mathcal{A}_{\mathrm{Fourier}}$ to convert $R$ to a polynomial and then use (\ref{eq:Fourier-expectation}) to compute the optimal $c$. This takes time $\mathrm{poly}(N, 2^{q})$.
\end{proof}

Since $L_2$ error is natural and efficient, \Cref{thm:instance-opt} is a consequence of our general framework, \Cref{thm:framework-formal}.

\subsection{Extensions and variants of our framework} 
\label{sec:extensions} 
The framework of~\Cref{thm:framework-formal} seems fairly versatile and amendable to variants; we will rely on this versatility for a couple of applications in this work: 

\begin{enumerate}
\item {\sl $\zo$-valued DTs and constructivizing~\hyperlink{nisan-anchor}{Nisan's Theorem}:}  In order to apply~\Cref{thm:framework-formal} to constructivize~\hyperlink{nisan-anchor}{Nisan's Theorem}, which concerns query algorithms for {\sl decision} problems, we will need to specialize it to $\zo$-valued $\RDT$s and $\DDT$s.  In this context, an error function $\mathcal{E}$ is {\sl natural} if the condition (\ref{eq:distance}) holds for some $d : \zo \times \zo \to [0,1]$ and $\zo$-valued $R$ and $D$. Similarly, it is {\sl $t$-efficient} if there are corresponding deterministic algorithms for $\zo$-valued $\RDT$s and $\DDT$s that satisfy the requirements of \Cref{def:natural-efficient}.
\item {\sl Instance-optimal DTs for polynomials and constructivizing~\hyperlink{beals-anchor}{Beals et al.'s Theorem}:}  In order to apply~\Cref{thm:framework-formal} to constructivize~\hyperlink{beals-anchor}{Beals et al.'s Theorem} (\Cref{thm:beals}), which concern {\sl quantum} query algorithms, we will need the following generalization of it: while~\Cref{thm:framework-formal} as gives an algorithm for finding an instance-optimal $\DDT$ for a $q$-query $\RDT$, it can in fact be used to find an instance-optimal $\DDT$ for an arbitrary degree-$q$ {\sl polynomial} $p : \zo^n \to [0,1]$ (again with respect to an error metric $\mathcal{E}$).\footnote{To see the relationship between $\RDT$s and polynomials, note that if $R$ is a $q$-query $\RDT$ then $\mu_R$ is a degree-$q$ polynomial.}  For this generalization, one just has to make the corresponding adjustments to~\Cref{def:natural-efficient} (natural and efficient error metrics), so that $\mathcal{E}$ now measures the distance between an arbitrary degree-$q$ polynomial and a $\DDT$. 

\item {\sl Beyond the uniform distribution.}  While we have stated~\Cref{def:natural-efficient} so that $\mathcal{E}$ is defined with respect to a uniform random $\bx \sim \zo^n$,~\Cref{thm:framework-formal} in fact applies to all other distributions.  (We do not explore this generalization in this work.) 
\end{enumerate} 





\section{Proof of~\Cref{thm:online}: Online derandomization}

In this section we will prove \Cref{thm:online}.  We will actually prove the following ``high probability version" of~\Cref{thm:online}, which yields~\Cref{thm:online} (the ``expectation version") as an immediate corollary:  

\begin{theorem}[Online derandomization] 
\label{thm:online-high-probability} 
There is a deterministic algorithm $\mathcal{A}_{\mathrm{Online}}$ with the following guarantee.  Given as input a randomized $q$-query algorithm $R : \zo^n \times \zo^m \to [0,1]$ with description length $N$, an error parameter $\eps \in (0,\frac1{2})$, and an input $x \in \zo^n$, this algorithm $\mathcal{A}_{\mathrm{Online}}$ runs in 
\[ \poly(N, q,1/\eps) \] 
 time, makes $O(q^2/\eps^2\delta^2)$ queries to $x$,   and returns a value $\mathcal{A}_{\mathrm{Online}}(x) \in [0,1]$.  The output values of $\mathcal{A}_{\mathrm{Online}}$ satisfy:  
 \[ \Prx_{\bx}\Big[ \big|\mathcal{A}_{\mathrm{Online}}(\bx) - \Ex_{\br} [R(\bx, \br)]\big| \ge \eps \Big] \le \delta. \] 
\end{theorem}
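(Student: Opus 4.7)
The plan is to implement and analyze a natural deterministic online greedy algorithm $\mathcal{A}_{\mathrm{Online}}$, mirroring the top-down decision-tree construction of~\cite{AA14} but with Lemma~\ref{lem:OSSS-RDT} playing the role of the (conjectural, in their setting) Aaronson--Ambainis inequality. On input $\underline{x}$, $\mathcal{A}_{\mathrm{Online}}$ maintains a partial restriction $\pi$ (initially empty) and repeatedly: compute $\Var(\mu_{R_\pi})$; if it falls below a termination threshold, or if the query budget $T = \Theta(q^2/(\eps\delta)^2)$ is reached, halt and output $\E[\mu_{R_\pi}]$; otherwise, invoke Lemma~\ref{lem:OSSS-RDT} to identify a variable $i$ with $\Inf_i(\mu_{R_\pi}) \geq \Var(\mu_{R_\pi})/q$, query $\underline{x}_i$, and extend~$\pi$.

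For correctness, I would observe that at each leaf of $\mathcal{A}_{\mathrm{Online}}$ (viewed as a DDT), the returned value is exactly the conditional expectation $\E[\mu_R(\bx)\mid \pi(\bx)]$, so the expected squared error $\Ex_{\bx}[(\mathcal{A}_{\mathrm{Online}}(\bx)-\mu_R(\bx))^2]$ equals the expected leaf-variance $\Ex_{\bx}[\Var(\mu_R(\bx) \mid \pi(\bx))]$. A potential-function argument combining Lemma~\ref{lem:OSSS-RDT} with Cauchy--Schwarz (used to pass between the absolute-value influence $\Ex_{\bx}[|\mu_R(\bx)-\mu_R(\bx^{\oplus i})|]$ and the squared-difference influence $\Ex_{\bx}[(\mu_R(\bx)-\mu_R(\bx^{\oplus i}))^2]$) then drives a harmonic-type recurrence $1/V_{t+1} \geq 1/V_t + \Omega(1/q^2)$ on the expected leaf-variance, yielding $\Ex_{\bx}[\Var(\mu_R(\bx)\mid\pi_T(\bx))] \leq O(q^2/T)$. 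Finally, combining Markov's inequality on $|\mathcal{A}_{\mathrm{Online}}-\mu_R|$ with Cauchy--Schwarz to bound $L_1$ error by $L_2$ gives $\Prx_{\bx}[|\mathcal{A}_{\mathrm{Online}}(\bx)-\mu_R(\bx)|\geq\eps] \leq O(q/(\eps\sqrt{T}))$, which is at most $\delta$ for the chosen $T$.

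For efficiency, the two key subroutines---computing $\Var(\mu_{R_\pi})$ and identifying an influential variable---can both be carried out deterministically in $\poly(N)$ time per iteration. The first moment $\E[\mu_{R_\pi}]$ is a weighted sum over leaves of $R_\pi$. The second moment $\E[\mu_{R_\pi}^2] = \Ex_{\bx,\br,\br'}[R_\pi(\bx,\br)\cdot R_\pi(\bx,\br')]$, the expectation of a shared-input independent-randomness double-run of $R_\pi$, is computable by a dynamic program over pairs of nodes of $R_\pi$ in $\poly(N^2)$ time. The same trick, taking the second copy on input $\bx^{\oplus i}$, computes $\Ex_{\bx}[(\mu_R(\bx)-\mu_R(\bx^{\oplus i}))^2]$ for each candidate~$i$; by Cauchy--Schwarz this lower-bounds $\Inf_i(\mu_R)^2$ and so serves as a deterministic, efficiently computable surrogate for the influential variable promised by Lemma~\ref{lem:OSSS-RDT}. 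This is exactly the step at which~\cite{AA14} needed $\mathsf{P}=\mathsf{P}^{\#\mathsf{P}}$ in the quantum setting; the classical RDT structure lets us sidestep that assumption entirely.

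The main obstacle will be executing the potential argument tightly enough to yield the stated $O(q^2/(\eps\delta)^2)$ query bound. The tension is that Lemma~\ref{lem:OSSS-RDT} guarantees largeness of the \emph{absolute-value} influence $\Ex_{\bx}[|\mu_R(\bx)-\mu_R(\bx^{\oplus i})|]$, whereas the expected leaf-variance decreases per query by the \emph{linear Fourier coefficient squared} $\wh{\mu_R}(\{i\})^2$, and these quantities are related in only one direction by Cauchy--Schwarz. Choosing the right potential (possibly an intermediate one between $\Var(\mu_R)$ and the Fourier-weighted variant $\sum_S |S|\,\wh{\mu_R}(S)^2$) and threading OSSS--RDT through Cauchy--Schwarz twice---once converting the absolute-value bound into a variance-influence bound, once converting $L_2$ error into $L_1$ error at the end---is where most of the technical effort will lie. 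A secondary concern is ensuring that paths reaching the worst-case query-budget cap (rather than the termination threshold) contribute only a small total error; this is again handled by the potential bound, since outputting $\E[\mu_{R_\pi}]$ at any leaf charges exactly the leaf-variance to the expected squared error.
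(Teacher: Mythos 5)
Your algorithmic idea matches the paper's exactly: greedily build a single path by repeatedly querying the most influential variable of the current subfunction (using Theorem~\ref{thm:OSSS-RDT} as the influential-variable guarantee), then output the conditional mean at the leaf. The gap is in the potential-function analysis, and you have correctly identified where the trouble is but not how to resolve it.

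You propose tracking the expected leaf-variance $V_t$ and driving a harmonic recurrence $1/V_{t+1} \geq 1/V_t + \Omega(1/q^2)$. But the per-query reduction in expected leaf-variance when querying $x_i$ in a subfunction $\mu_\pi$ is $\wh{\mu_\pi}(\{i\})^2$, i.e., only the degree-one Fourier weight on $i$, whereas Theorem~\ref{thm:OSSS-RDT} controls the $L_1$-influence $\Ex_{\bx}[|\mu_\pi(\bx)-\mu_\pi(\bx^{\oplus i})|]$. Neither Cauchy--Schwarz pass you describe bridges this: Cauchy--Schwarz gives $\Inf_i(\mu_\pi)^2 \leq \Ex[(\mu_\pi(\bx)-\mu_\pi(\bx^{\oplus i}))^2] = 4\sum_{S\ni i}\wh{\mu_\pi}(S)^2$, which lower-bounds the \emph{total} Fourier weight on sets containing $i$, not the degree-one weight $\wh{\mu_\pi}(\{i\})^2$ that the variance actually drops by. So the recurrence you want to set up does not follow from the tools you've assembled, and you flagged this yourself as "where most of the technical effort will lie."

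The paper avoids the issue entirely by choosing a different potential: the \emph{average subfunction total influence} $\AvgInf_d(D) := \Ex_{|\bpi|=d}[\Inf(\mu_{\bpi})]$. The identity $\Inf(f) = \Inf_i(f) + \tfrac{1}{2}(\Inf(f_{x_i=0}) + \Inf(f_{x_i=1}))$ shows that querying $x_i$ decreases this potential by exactly $\Inf_i(\mu_\pi)$ --- the very quantity Theorem~\ref{thm:OSSS-RDT} lower-bounds --- with no $L_1$/$L_2$ conversion needed. The analysis then splits into two cases by whether $\Ex_{\bpi}[\Var(\mu_{\bpi})]$ is below $\eps^2\delta^2$: if so, Markov plus Chebyshev give the pointwise guarantee; if not, the potential drops by at least $\eps^2\delta^2/q$ that round. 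Since $\AvgInf_0(D) = \Inf(\mu_R) \leq q$ (Corollary~\ref{cor: total influence}) and the potential is nonnegative, the large-variance case can occur at most $q^2/\eps^2\delta^2$ times, and the law of total variance ensures the small-variance case, once entered, persists. Your proposal also differs in the influence-computation subroutine: you suggest computing the squared-difference influence as a surrogate, but since $\Ex[(\mu(\bx)-\mu(\bx^{\oplus i}))^2]$ lower-bounds $\Inf_i(\mu)^2$ rather than $\Inf_i(\mu)$, selecting by this surrogate costs a square relative to the OSSS--RDT bound; the paper instead computes $\Inf_i(\mu_R)$ exactly in $O(N^2)$ time by a direct path-counting routine, which is needed to get the stated query bound.
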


In~\Cref{sec:OSSS} we prove a key new structural fact, a generalization of the O'Donnell, Saks, Schramm, Servedio inequality~\cite{OSSS05} from deterministic to randomized decision trees.  In~\Cref{sec:AA-analysis}, we use this structural fact to prove~\Cref{thm:online-high-probability}.


\subsection{Every randomized DT has an influential variable} 
\label{sec:OSSS}

We need a few basic definitions in order to state the new structural fact that we prove.  

\begin{definition}[Probability of querying a coordinate]
Let $D$ be a $\DDT$.  For each $i\in [n]$, we define $\delta_i(D)$ to be the probability that $D$ queries $\bx_i$ where $\bx \sim \zo^n$ is a uniform random input.  For an $\RDT$ $R : \zo^n \times \zo^m \to [0,1]$, we define $\delta_i(R)$ analogously: 
\[ \delta_i(R) \coloneqq \Ex_{\br}[\delta_i(R_{\br})].\] 
\end{definition}

\begin{definition}[Influence of variables] 
Let $f : \zo^n \to [0,1]$.  For each $i\in [n]$, we define the \emph{influence of variable $i$ on $f$} to be the quantity 
\[ \Inf_i(f) \coloneqq \Ex_{\bx}[|f(\bx)-f(\bx^{\oplus i})|], \]
where $\bx^{\oplus i}$ denotes $\bx$ with its $i$-th coordinate flipped.  The \emph{total influence} of $f$ is $\Inf(f) \coloneqq \sum_{i=1}^n \Inf_i(f)$. 
\end{definition}

The following powerful inequality from the analysis of boolean functions is due to O'Donnell, Saks, Schramm, and Servedio~\cite{OSSS05}.  It relates the influences of variables to query complexity: 

\begin{theorem}[\cite{OSSS05} inequality: Every $\DDT$ has an influential variable]  
Let $D : \zo^n \to \zo$ be a $q$-query $\DDT$.  Then 
\[ \Var[D] \le \sum_{i=1}^n \delta_i(D) \cdot \Inf_i(D).  \] 
Consequently, there must exist an $i^\star\in [n]$ such that 
\[ \Inf_{i^\star}(D) \ge \frac{\Var[D]}{\Delta(D)} \ge \frac{\Var[D]}{q},\]
where $\Delta(D) \coloneqq \sum_{i=1}^n \delta_i(D)$ is the average depth of $D$. 
\end{theorem}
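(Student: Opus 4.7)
The plan is to prove the main inequality by a coupling argument between two independent uniform inputs $\bx, \by \sim \zo^n$, and then derive the ``influential variable'' consequence by a one-line averaging. Since $D$ is Boolean-valued, the identity $2\Var[D] = \Prx_{\bx,\by}[D(\bx) \ne D(\by)]$ holds, so it suffices to exhibit the upper bound
\[ \Prx_{\bx,\by}[D(\bx) \ne D(\by)] \le 2 \sum_{i=1}^n \delta_i(D)\cdot \Inf_i(D). \]

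\textbf{Hybrid construction and union bound.} Run $D$ on $\by$ and record the (data-dependent) query sequence $i_1, i_2, \ldots, i_k$, where $k$ is the depth of the realized path. Define hybrids $\bz^{(0)} = \bx$ and, for $j \ge 1$, let $\bz^{(j)}$ be obtained from $\bz^{(j-1)}$ by overwriting coordinate $i_j$ with $\by_{i_j}$. The crucial property is that $\bz^{(k)}$ agrees with $\by$ on every coordinate queried by $D(\by)$, so re-executing $D$ on $\bz^{(k)}$ traverses the same path as on $\by$ and therefore $D(\bz^{(k)}) = D(\by)$. Hence $D(\bx) \ne D(\by)$ forces some adjacent pair to disagree, and a union bound gives
\[ \Prx[D(\bx) \ne D(\by)] \le \sum_{j \ge 1} \Prx[D(\bz^{(j-1)}) \ne D(\bz^{(j)})]. \]

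\textbf{Extracting the influence (the main obstacle).} For each $j$, the hybrids $\bz^{(j-1)}$ and $\bz^{(j)}$ differ only at coordinate $i_j$, and only when $\bx_{i_j} \ne \by_{i_j}$, so I will decompose the $j$-th term by which coordinate gets queried at step $j$. The subtle point is that $\bz^{(j-1)}$ has a mixed marginal distribution (queried coordinates are drawn from $\by$, others from $\bx$), so one must verify that after conditioning on the realized path $\pi = (i_1, \by_{i_1}, \ldots, i_{j-1}, \by_{i_{j-1}}, i_j = i)$, the ``free'' coordinates of $\bz^{(j-1)}$ (in particular coordinate $i$) remain uniform and independent. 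With this verified, the conditional probability of disagreement factors as $\tfrac{1}{2}$ (for $\bx_i \ne \by_i$) times a conditional ``flip changes output'' probability that averages, over the randomness of the path, to $\Inf_i(D)$. After swapping summation order and using the identity $\sum_{j \ge 1} \Prx[i_j = i \text{ on input } \by] = \delta_i(D)$, the double sum collapses to $2\sum_i \delta_i(D)\Inf_i(D)$, as required. This ``conditional-distribution to global influence'' aggregation is the technical heart of the argument.

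\textbf{Consequence.} Once the main inequality is in hand, observe that since $\Inf_i(D) \ge 0$ for every $i$,
\[ \Var[D] \le \sum_{i=1}^n \delta_i(D)\cdot \Inf_i(D) \le \Big(\max_{i\in [n]} \Inf_i(D)\Big) \cdot \sum_{i=1}^n \delta_i(D) = \Big(\max_{i\in [n]} \Inf_i(D)\Big) \cdot \Delta(D). \]
Thus some $i^\star$ attains $\Inf_{i^\star}(D) \ge \Var[D]/\Delta(D)$, and the trivial bound $\Delta(D) \le q$ (average depth is at most max depth) yields the stated $\Var[D]/q$ estimate.
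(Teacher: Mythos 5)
The paper does not prove this statement; it cites it directly from \cite{OSSS05} (and later derives the RDT generalization, Theorem~\ref{thm:OSSS-RDT}, from a cited ``two-function'' version). So your argument has to be judged on its own merits, and unfortunately it has a gap at exactly the step you flagged as the technical heart.

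The reduction to bounding $\Prx_{\bx,\by}[D(\bx)\ne D(\by)]$, the hybrid construction, and the union bound are all fine. But the claim that the conditional ``flip changes output'' probability ``averages, over the randomness of the path, to $\Inf_i(D)$'' is false, and the resulting ``collapse to $2\sum_i \delta_i(D)\Inf_i(D)$'' does not hold. Conditioned on the path $\pi$ of $D$ run on $\by$ up to the moment coordinate $i$ is about to be queried, the hybrid $\bz^{(j-1)}$ is uniform only on the subcube $S_\pi$ fixed by $\pi$, so the conditional flip probability is the \emph{restricted} influence $\Inf_i(D|_{S_\pi})$, which can be far larger than the global $\Inf_i(D)$. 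Concretely, for $D$ that queries $x_0$ and outputs $x_1$ if $x_0=0$ and $x_2$ if $x_0=1$: coordinate $1$ is queried only when $\by_0=0$, and on that subcube the restricted influence of coordinate $1$ is $1$, whereas $\Inf_1(D)=\tfrac12$. If you carry the computation through honestly, the disjointness of the events $\{\bi_j=i\}$ over $j$ plus the observation that $D(\bu)\ne D(\bu^{\oplus i})$ already forces $D$ to query $i$ on $\bu$ give the identity
\[
\sum_{j\ge 1}\Prx\big[\bi_j=i \text{ and } D(\bz^{(j-1)})\ne D(\bz^{(j)})\big] \;=\; \tfrac12\,\Inf_i(D),
\]
so the union bound collapses to $\tfrac12\sum_i \Inf_i(D)$, \emph{not} to $2\sum_i \delta_i(D)\Inf_i(D)$. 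The factors $\delta_i(D)$ simply do not appear. What your argument actually proves is the Poincar\'e-type inequality $\Var[D]\le \tfrac14\Inf(D)$, which (since $\Inf(D)\le q$ always) is uninformative and does not yield the $\Var[D]/q$ lower bound on some $\Inf_{i^\star}(D)$ — from $\Var[D]\le\tfrac14\sum_i\Inf_i(D)$ alone you only get $\max_i \Inf_i(D)\ge 4\Var[D]/n$.

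The missing idea is that the correct OSSS proof couples the two runs of $D$ \emph{dynamically}: one resamples the queried coordinate at each step and lets the path be re-determined by the resampled values, then telescopes $\Cov[D,\cdot]$ across the hybrids rather than taking a union bound over a path fixed in advance by $\by$. That is what lets the per-coordinate contribution be tied to $\delta_i(D)\cdot\Inf_i(D)$ instead of $\Inf_i(D)$ alone. The closing ``Consequence'' paragraph of your write-up is correct as a deduction from the main inequality once that inequality is actually established.
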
 

Our first main result in this subsection,~\Cref{thm:OSSS-RDT}, is a generalization of the~\cite{OSSS05} inequality from to $\DDT$s to $\RDT$s.  We will show that this generalization follows from a {\sl different} generalization of their inequality, the ``two-function version" of the~\cite{OSSS05} inequality.  

The following is a special case of Theorem 3.2 of~\cite{OSSS05} (see the discussion right before their Section 3.4), rewritten in notation that will be especially convenient for us:

\begin{theorem}[Two-function version of OSSS]
\label{thm:OSSS}
Let $D : \zo^n \to [0,1]$ be a $\DDT$ and $\mu : \zo^n \to [0,1]$ be any function.\footnote{A remarkable feature of~\Cref{thm:OSSS} is that $D$ and $\mu$ can be two arbitrary functions, completely unrelated to each other.}  Then 
\[ |\Cov[D,\mu]| \le \sum_{i=1}^n \delta_i(D) \cdot \Inf_i(\mu). \]
\end{theorem}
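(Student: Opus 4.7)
The plan is to prove Theorem~\ref{thm:OSSS} by the coupling argument of~\cite{OSSS05}. Let $\bx, \by \sim \zo^n$ be independent uniform inputs. Since $\bx$ and $\by$ are independent, $\E[D(\bx)]\E[\mu] = \E[D(\bx)\mu(\by)]$, and so
\[ \Cov[D,\mu] \;=\; \Ex_{\bx,\by}\bigl[D(\bx)(\mu(\bx)-\mu(\by))\bigr]. \]

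Next, I would introduce a hybrid interpolating between $\by$ and $\bx$ along the query path of $D$ on $\bx$. Let $I = (I_1,\ldots,I_T)$ be the (random) sequence of coordinates queried by $D$ when run on $\bx$. Define $z^{(0)} \coloneqq \by$ and, for $t=1,\ldots,T$, obtain $z^{(t)}$ from $z^{(t-1)}$ by overwriting coordinate $I_t$ with $\bx_{I_t}$; thus $z^{(T)}$ agrees with $\bx$ on $I$ and with $\by$ elsewhere. Two observations drive the argument: (a) $D(\bx) = D(z^{(T)})$, because $D$ depends only on the coordinates in $I$, which agree between $\bx$ and $z^{(T)}$; and (b) $\E[D(\bx)\mu(\bx)] = \E[D(\bx)\mu(z^{(T)})]$, because once we condition on $\bx_I$ (which determines both $D(\bx)$ and $I$), the ``free'' coordinates $\bx_{\bar I}$ and $\by_{\bar I}$ have the same uniform distribution.

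Combining (a)--(b) and telescoping yields
\[ |\Cov[D,\mu]| \;=\; \Bigl|\Ex\Bigl[D(\bx)\sum_{t=1}^{T}(\mu(z^{(t)})-\mu(z^{(t-1)}))\Bigr]\Bigr| \;\le\; \sum_{t\ge 1}\Ex\bigl[\mathbf{1}[T\ge t]\cdot|\mu(z^{(t)})-\mu(z^{(t-1)})|\bigr], \]
where I used $|D(\bx)|\le 1$. Conditional on $I_t=i$ and $T\ge t$, the strings $z^{(t-1)}$ and $z^{(t)}$ differ only at coordinate $i$, with values $\by_i$ and $\bx_i$ respectively (uniform and independent of all other bits, since $\bx,\by$ are independent uniform and a $\DDT$ does not re-query along a path). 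A short case analysis over the four equally likely outcomes of $(\bx_i,\by_i)$ bounds each conditional expectation by $\Inf_i(\mu)$, and summing with the identity $\sum_{t\ge 1}\Pr[I_t=i,\,T\ge t] = \delta_i(D)$ yields the claim.

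The main obstacle is the bookkeeping in observation (b) and in the per-step estimate: at each point one must verify that, after conditioning on the path information $I$, the relevant bits of $\bx$ and $\by$ remain uniform and independent. This is what makes the argument work, and it follows from the independence of $\bx,\by$ together with the fact that $I$ is measurable with respect to $\bx_I$ alone, so conditioning on $I$ or on $\bx_I$ does not disturb the distribution of the remaining bits.
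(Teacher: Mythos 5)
The paper does not prove this theorem; it is cited as a special case of Theorem~3.2 of~\cite{OSSS05}, so there is no ``paper's proof'' to compare against. Your proposal reconstructs the OSSS coupling argument, and most of it is sound (observations (a) and (b) in particular are correct), but the per-step estimate has a genuine gap.

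The problem is the claim that, conditional on $\{I_t = i,\, T \ge t\}$, one can bound $\Ex\bigl[|\mu(z^{(t)})-\mu(z^{(t-1)})|\bigr]$ by $\Inf_i(\mu)$. You correctly observe that $\bx_i$ and $\by_i$ remain uniform and independent of everything else after the conditioning. But the conditioning event \emph{also} fixes the values of $z^{(t-1)}$ on the path coordinates $I_1,\dots,I_{t-1}$ (to the specific assignment that reaches the node querying $i$), so the common string $\bw = z^{(t-1)}|_{[n]\setminus\{i\}} = z^{(t)}|_{[n]\setminus\{i\}}$ is \emph{not} uniform on $\zo^{[n]\setminus\{i\}}$: it is uniform only on the off-path coordinates and frozen on the path. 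Hence the four-case analysis gives $\Ex\bigl[|\mu(z^{(t)})-\mu(z^{(t-1)})|\mid I_t=i,\,T\ge t\bigr] = \tfrac12\,\Ex_{\bw}\bigl[|\mu(\bw,0)-\mu(\bw,1)|\bigr]$ with a \emph{conditioned} $\bw$, and this can be much larger than $\Inf_i(\mu)=\Ex_{\bw\ \mathrm{uniform}}\bigl[|\mu(\bw,0)-\mu(\bw,1)|\bigr]$. Concretely, take $n=3$, $\mu=\mathbf{1}[x=(1,1,1)]$, and $D$ querying $x_1$, then $x_2$ if $x_1=1$, then $x_3$ if $x_2=1$. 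Conditional on reaching the node that queries $x_3$ (path $x_1=x_2=1$), $\Ex\bigl[|\mu(z^{(3)})-\mu(z^{(2)})|\bigr] = \Prx[\bx_3\neq\by_3]=\tfrac12$, whereas $\Inf_3(\mu)=\tfrac14$. So the advertised per-step bound is false.

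The standard fix, which is what~\cite{OSSS05} actually do, is to build the hybrid by \emph{re-randomization from} $\bx$ rather than by overwriting $\by$: set $x^{(0)}=\bx$ and obtain $x^{(t)}$ from $x^{(t-1)}$ by replacing coordinate $I_t$ with a fresh independent uniform bit. One checks that $x^{(T)}$ is uniform and independent of $D(\bx)$, so $\Ex[D(\bx)\mu(x^{(T)})]=\Ex[D]\,\Ex[\mu]$, and the telescoping is as before. The crucial difference is that, conditional on the path reaching a node querying $i$, every already-visited coordinate of $x^{(t-1)}$ carries a fresh uniform bit rather than the conditioned value of $\bx$, so $x^{(t-1)}$ (and $x^{(t)}$) is genuinely uniform on $\zo^n$ given the conditioning; the per-step expectation is then exactly $\tfrac12\Inf_i(\mu)$, and summing over $t$ with $\sum_t\Pr[I_t=i,\,T\ge t]=\delta_i(D)$ gives $|\Cov[D,\mu]|\le\tfrac12\sum_i\delta_i(D)\Inf_i(\mu)$, which is even slightly stronger than the stated inequality.
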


We now derive the following as a corollary of~\Cref{thm:OSSS}: 

\begin{theorem}[Every $\RDT$ has an influential variable]
\label{thm:OSSS-RDT}
Let $R : \zo^n \times \zo^m \to [0,1]$ be a $q$-query $\RDT$ and $\mu_R : \zo^n \to [0,1]$ be its mean function.  Then 
\[ \Var[\mu_R] \le \sum_{i=1}^n \delta_i(R)\cdot \Inf_i(\mu_R). \] 
Consequently, there must exist an $i^\star \in [n]$ such that 
\[ \Inf_{i^\star}(\mu_R) \ge \frac{\Var[\mu_R]}{\Delta(R)} \ge \frac{\Var[\mu_R]}{q}.\] 
where $\Delta(R) \coloneqq \sum_{i=1}^n \delta_i(R)$. 
\end{theorem}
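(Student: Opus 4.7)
The plan is to derive this theorem as a direct corollary of the two-function OSSS inequality (the cited Theorem from~\cite{OSSS05}) by averaging over the internal randomness of $R$. The key insight is that, for each fixed $r \in \zo^m$, the restriction $R_r : \zo^n \to [0,1]$ is a $q$-query $\DDT$, so the two-function OSSS inequality applies with $D \coloneqq R_r$ and $\mu \coloneqq \mu_R$, yielding
\[ |\Cov[R_r, \mu_R]| \le \sum_{i=1}^n \delta_i(R_r) \cdot \Inf_i(\mu_R). \]
Importantly, here we are taking advantage of the fact that in the two-function version, $D$ and $\mu$ need not be related to one another---this is exactly what lets us plug in a ``component'' $R_r$ on one side and the mean function $\mu_R$ (involving an average over all $r$'s) on the other.

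Next I take the expectation of both sides over $\br \sim \zo^m$. On the right-hand side, $\Ex_{\br}[\delta_i(R_{\br})] = \delta_i(R)$ by definition, so averaging gives $\sum_{i=1}^n \delta_i(R) \cdot \Inf_i(\mu_R)$. For the left-hand side, I first drop the absolute value via the trivial bound $|\Cov[R_{\br}, \mu_R]| \ge \Cov[R_{\br}, \mu_R]$, and then use linearity of covariance in its first argument to compute
\[ \Ex_{\br}\big[\Cov[R_{\br}, \mu_R]\big] = \Cov\!\big[\Ex_{\br}[R_{\br}], \mu_R\big] = \Cov[\mu_R, \mu_R] = \Var[\mu_R]. \]
Chaining these inequalities yields the desired bound $\Var[\mu_R] \le \sum_{i=1}^n \delta_i(R) \cdot \Inf_i(\mu_R)$.

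For the ``consequently'' part, if $i^\star \coloneqq \argmax_{i \in [n]} \Inf_i(\mu_R)$, then
\[ \Var[\mu_R] \le \sum_{i=1}^n \delta_i(R) \cdot \Inf_i(\mu_R) \le \Inf_{i^\star}(\mu_R) \cdot \sum_{i=1}^n \delta_i(R) = \Inf_{i^\star}(\mu_R) \cdot \Delta(R), \]
giving $\Inf_{i^\star}(\mu_R) \ge \Var[\mu_R]/\Delta(R)$. The inequality $\Delta(R) \le q$ follows because each deterministic restriction $R_r$ has depth at most $q$, so $\sum_i \delta_i(R_r) \le q$ for every $r$, and averaging preserves this.

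I do not anticipate any serious obstacle: once one notices that the two-function version of~\cite{OSSS05} allows plugging in $R_r$ and $\mu_R$ independently, the rest is a short computation using linearity of covariance and convexity ($|\cdot| \ge \mathrm{id}$). The only subtlety worth flagging is that attempting to generalize the single-function version of OSSS directly would fail, because $\Var[R_r]$ averaged over $r$ is \emph{not} equal to $\Var[\mu_R]$ (the law of total variance gives an extra nonnegative term); routing through the two-function version avoids this issue entirely.
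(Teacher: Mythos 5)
Your proposal is correct and takes essentially the same approach as the paper: apply the two-function OSSS inequality to each deterministic restriction $R_r$ paired with the fixed function $\mu_R$, average over $\br$, and identify $\Ex_{\br}[\Cov[R_{\br},\mu_R]]$ with $\Var[\mu_R]$. The only cosmetic difference is that you drop the absolute value via $|\Cov| \ge \Cov$ whereas the paper uses $\Ex[|\cdot|] \ge |\Ex[\cdot]|$, and you invoke bilinearity of covariance where the paper expands the definition and swaps expectations---these are the same computation.
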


\begin{proof} 
For clarity, we drop the subscript on $\mu_R$.  Viewing $R$ as a distribution over $q$-query $\DDT$s $R_{\br}$ for $r\sim \zo^m$, we begin by applying~\Cref{thm:OSSS} to each $\DDT$ in the support of $R$: 
\[ \Ex_{\br}[|\Cov[R_{\br},\mu]|] \le \Ex_{\br} \Bigg[ \sum_{i=1}^n \delta_i(R_{\br}) \cdot \Inf_i(\mu)\Bigg] =  \sum_{i=1}^n \delta_i(R) \cdot \Inf_i(\mu).\] 
Rewriting the LHS of the above, 
\begin{align*}
\Ex_{\br}[|\Cov[R_{\br},\mu]|]  &\ge \big| \Ex_{\br}[\Cov[R_{\br},\mu]] \big| \tag*{($\E[|\bX|] \ge |\E[\bX]|$ for all r.v.'s $\bX$)}\\
&= \bigg|\Ex_{\br}\Big[\Ex_{\bx}[(R_{\br}(\bx)-\Ex_{\bx}[R_{\br}(\bx)])(\mu(\bx)-\Ex_{\bx}[\mu(\bx)])]\Big]\bigg| \tag*{(Definition of covariance)} \\
&= \bigg|\Ex_{\bx}\Big[\Ex_{\br}[(R_{\br}(\bx)-\Ex_{\bx}[R_{\br}(\bx)])(\mu(\bx)-\Ex_{\bx}[\mu(\bx)])]\Big]\bigg| \tag*{(Swapping expecations)} \\
&= \bigg|\Ex_{\bx}\Big[(\mu(\bx)-\Ex_{\bx}[\mu(\bx)])(\mu(\bx)-\Ex_{\bx}[\mu(\bx)])\Big]\bigg| \tag*{(Definition of $\mu$)} \\
&=  \Ex_{\bx}\Big[(\mu(\bx)-\Ex_{\bx}[\mu(\bx)])^2\Big]   \\
&= \Var(\mu).
\end{align*}
This completes the proof of~\Cref{thm:OSSS-RDT}. 
\end{proof} 

\begin{remark}[Other known extensions of the~\cite{OSSS05} inequality]
In~\cite{OSSS05} the authors show that their inequality extend to randomized decision trees that compute functions $f: \zo^n\to\zo$ with {\sl zero error}.  In our notation, these are functions $R : \zo^n \times \zo^m \to \zo$ that are promised to satisfy $\mu_R(x) = f(x)$ for all $x\in \zo^n$ (cf.~\Cref{def:decision-RDTs}). 

\violet{For $\RDT$s $R$ that compute functions $f : \zo^n \to \zo$ with $\eps$ error, Jain and Zhang~\cite{JZ11} proved the following variant of the~\cite{OSSS05} inequality:
\[ \min\{\Pr[f(\bx)=1],\Pr[f(\bx)=0]\} - \eps \le \sum_{i=1}^n \delta_i(R) \cdot \Inf_i(f). \] 
This does not apply to general $\RDT$s where no assumptions are made about the distribution of output values of $R$ on a given input $x$ (in particular, where $\mu_R(x)$ is not assumed to be close to $0$ or $1$). 

To our knowledge, our extension of the~\cite{OSSS05} inequality to general $\RDT$s,~\Cref{thm:OSSS-RDT}, was not known previously known (though as we just showed, it is a fairly straightforward consequence of the two-function version generalization of the~\cite{OSSS05} inequality). }
\end{remark}

\vspace{-5pt} 

\pparagraph{Total influence of RDTs.}   We complement~\Cref{thm:OSSS-RDT} with an upper bound on the {\sl total} influence of $\RDT$s.  The following is a basic fact in concrete complexity and is easy to verify: 

\begin{fact}[Total influence of $\DDT$s] 
\label{fact:DT-influence} 
Let $D : \zo^n \to [0,1]$ be a $q$-query $\DDT$.  Then $\Inf(D) \le q$. 
\end{fact}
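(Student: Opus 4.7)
The plan is to bound $\Inf_i(D)$ coordinatewise by $\delta_i(D)$, and then sum over $i$ using the fact that $\sum_i \delta_i(D)$ equals the expected root-to-leaf depth when $D$ is run on a uniform random input, which is at most $q$.

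The key structural observation I would use is that for any $x \in \zo^n$ and any coordinate $i \in [n]$, if the computation path of $D$ on $x$ does not query coordinate $i$, then $D(x) = D(x^{\oplus i})$. This is because the paths traversed by $x$ and $x^{\oplus i}$ must be identical up to the first decision node that queries $x_i$; if no such node is encountered along $x$'s path, both inputs reach the same leaf, and hence receive the same label. Combined with the trivial bound $|D(\bx)-D(\bx^{\oplus i})| \le 1$, which holds since $D$ is $[0,1]$-valued, this yields
\[ \Inf_i(D) = \Ex_{\bx}[|D(\bx)-D(\bx^{\oplus i})|] \le \Prx_{\bx}[\text{coordinate } i \text{ is queried on the path of } \bx] = \delta_i(D). \]

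Summing this inequality over $i\in [n]$ gives $\Inf(D) \le \sum_{i=1}^n \delta_i(D) = \Delta(D)$. The final step is to note that by linearity of expectation, $\Delta(D) = \sum_i \delta_i(D)$ equals the expected depth of the path taken by a uniform random input, which is bounded above by the worst-case depth $q$.

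I do not anticipate any significant obstacle: the argument is a one-step consequence of the structural observation about path-agreement, combined with the elementary identity expressing $\Delta(D)$ as an expected depth. The same proof would give the slightly stronger coordinatewise bound $\Inf_i(D) \le \delta_i(D)$, which could be stated separately if useful downstream (indeed, this is already implicit in the statement and proof of~\Cref{thm:OSSS}).
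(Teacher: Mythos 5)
Your proof is correct, and since the paper leaves this as an unproved "basic fact ... easy to verify," your argument supplies the standard verification: the pointwise bound $\Inf_i(D) \le \delta_i(D)$ (because $D(\bx)=D(\bx^{\oplus i})$ whenever $\bx$'s path avoids $x_i$), followed by summing $\sum_i \delta_i(D) = \Delta(D) \le q$. This is exactly the argument the paper implicitly has in mind, and your observation that the same reasoning gives the coordinatewise bound is accurate and consistent with how $\delta_i$ is used in \Cref{thm:OSSS}.
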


We will need the following generalization of~\Cref{fact:DT-influence} from $\DDT$s to $\RDT$s: 

\begin{corollary}[Total influence of $\RDT$s]
\label{cor: total influence}
Let $R : \zo^n \times \zo^m \to [0,1]$ be a $q$-query $\RDT$ and $\mu_R : \zo^n \to [0,1]$ be its mean function. 
Then $\Inf(\mu_R) \le q$. 
\end{corollary}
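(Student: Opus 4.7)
The plan is to reduce the corollary to the $\DDT$ version (Fact on total influence of $\DDT$s) by exploiting the fact that $\mu_R$ is an average of $\DDT$s. Writing $\mu_R(x) = \Ex_{\br}[R_{\br}(x)]$, where each $R_{\br}$ is a $q$-query $\DDT$, I would first bound each individual influence $\Inf_i(\mu_R)$ in terms of $\Ex_{\br}[\Inf_i(R_{\br})]$ via the triangle inequality, and then sum over $i$.

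Concretely, the first step is the pointwise manipulation
\[
\Inf_i(\mu_R) = \Ex_{\bx}\bigl[\bigl|\Ex_{\br}[R_{\br}(\bx)] - \Ex_{\br}[R_{\br}(\bx^{\oplus i})]\bigr|\bigr] \le \Ex_{\bx}\Ex_{\br}\bigl[\bigl|R_{\br}(\bx) - R_{\br}(\bx^{\oplus i})\bigr|\bigr],
\]
where the inequality is $|\E[\bX]| \le \E[|\bX|]$ applied inside the outer expectation. Swapping the order of expectations (Fubini), the right-hand side equals $\Ex_{\br}[\Inf_i(R_{\br})]$.

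The second step is to sum over $i \in [n]$ and apply the $\DDT$ version to each $R_{\br}$:
\[
\Inf(\mu_R) = \sum_{i=1}^n \Inf_i(\mu_R) \le \sum_{i=1}^n \Ex_{\br}[\Inf_i(R_{\br})] = \Ex_{\br}[\Inf(R_{\br})] \le \Ex_{\br}[q] = q,
\]
where the penultimate inequality is Fact (Total influence of $\DDT$s) applied to the $q$-query $\DDT$ $R_{\br}$ for each fixed $r$.

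There is no real obstacle here; the only thing to be careful about is the direction of the triangle inequality — it goes the right way precisely because $|\cdot|$ is convex, so that the absolute value of an average is at most the average of absolute values. Everything else is just linearity and Fubini.
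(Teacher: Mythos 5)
Your proof is correct and follows essentially the same route as the paper's: write $\mu_R = \Ex_{\br}[R_{\br}]$, pull the absolute value inside via $|\E[\bX]| \le \E[|\bX|]$, swap expectations, and apply \Cref{fact:DT-influence} to each $R_{\br}$. The only cosmetic difference is that you bound each $\Inf_i(\mu_R)$ individually before summing, whereas the paper carries the sum over $i$ through the whole chain; the substance is identical.
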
 

\begin{proof}
Again, for clarity we drop the subscript on $\mu_R$.  We have that: 
\begin{align*}
\Inf(\mu) &= \sum_{i=1}^n  \Ex_{\bx}[|\mu(\bx) - \mu(\bx^{\oplus i})|]  \tag*{(Definition of total influence)} \\
&= \sum_{i=1}^n \Ex_{\bx}\Big[ \big| \Ex_{\br}[R_{\br}(\bx)] - \Ex_{\br}[R_{\br}(\bx^{\oplus i})]  \big|\Big]  \tag*{(Definition of $\mu$)}\\
&= \sum_{i=1}^n \Ex_{\bx}\Big[ \big| \Ex_{\br}[R_{\br}(\bx)- R_{\br}(\bx^{\oplus i})] \big| \Big] \\
&\le \sum_{i=1}^n \Ex_{\bx} \Big[ \Ex_{\br} [ | R_{\br}(\bx) - R_{\br}(\bx^{\oplus i})|] \Big]  \tag*{($|\E[\bX]|\le \E[|\bX|]$ for all r.v.'s $\bX$)} \\
&= \sum_{i=1}^n \Ex_{{\br}} \Big[ \Ex_{\bx} [ | R_{\br}(\bx) - R_{\br}(\bx^{\oplus i})|] \Big]  \tag*{(Swapping expectations)} \\
&= \Ex_{{\br}} \Bigg[  \sum_{i=1}^n \Ex_{\bx} [ | R_{\br}(\bx) - R_{\br}(\bx^{\oplus i})|]\Bigg] \le q, 
\end{align*}
where the final inequality holds by applying~\Cref{fact:DT-influence} to each $R_{\br}$. 
\end{proof}

\subsection{Most-influential-at-the-root algorithm}
\label{sec:AA-analysis} 

We will first show an algorithm for building a deterministic decision tree $D$ that approximates a randomized decision tree $R$ by iteratively querying the most influential variable of $\mu_R$. This is {\sl not} the online algorithm described in \Cref{thm:online-high-probability}, but due to the ``top-down" fashion in which it constructs $D$, it can be easily modified to yield an online variant.  Indeed, the actual algorithm of~\Cref{thm:online-high-probability} and its analysis will follow very easily from our analysis of this algorithm.

\begin{lemma}
\label{lem:most influential}
Let $D$ be the $(q^2/\eps^2\delta^2)$-query deterministic algorithm returned by the algorithm {\sc BuildTopDownDT}($R$, $\eps$, $\delta$) described in~\Cref{fig:most-influential}.  Then 
\[ \Pr[|D(\bx) -\mu_R(\bx)| \ge \eps] \le 2 \delta.\] 
\end{lemma}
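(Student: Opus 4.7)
The plan is to reduce the claim to a bound on the expected squared error $\Ex_{\bx}[(D(\bx)-\mu_R(\bx))^2]$ via Markov's inequality and then track the expected leaf-variance of $\mu_R$ via a layer-wise progress argument.  Assuming each leaf $\ell$ of the greedy tree is labeled with $\Ex[\mu_R|_{\pi_\ell}]$---the squared-error-minimizing constant on that leaf---we have
\[ \Prx_{\bx}[|D(\bx)-\mu_R(\bx)| \ge \eps] \;\le\; \frac{\Ex_{\bx}[(D(\bx)-\mu_R(\bx))^2]}{\eps^2} \;=\; \frac{\Ex_{\boldsymbol{\ell}}[\Var[\mu_R|_{\pi_{\boldsymbol{\ell}}}]]}{\eps^2}, \]
where $\boldsymbol{\ell}$ is the random leaf reached by $\bx$.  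It therefore suffices to show $\Ex_{\boldsymbol{\ell}}[\Var[\mu_R|_{\pi_{\boldsymbol{\ell}}}]] \le 2\delta\eps^2$.

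My main move is a layer-wise variance-progress argument.  Let $V_t := \Ex[\Var[\mu_R|_{\pi_{\boldsymbol{v}_t}}]]$ be the expected restricted variance at depth $t$, where $\boldsymbol{v}_t$ is the layer-$t$ node reached by a uniform input.  By the law of total variance applied at each internal node, $V_{t+1} = V_t - \Ex[\Delta_{\boldsymbol{v}_t}]$, where $\Delta_v = \Var_{\boldsymbol{b}}[\Ex[\mu_R|_{\pi_v \cup \{x_{i_v}=\boldsymbol{b}\}}]]$ is the variance drop at $v$ upon querying the greedily chosen coordinate $i_v = \argmax_i \Inf_i(\mu_R|_{\pi_v})$.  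The heart of the proof is the per-layer progress inequality
\[ V_t - V_{t+1} \;\ge\; c \cdot V_t^2 / q^2 \]
for an absolute constant $c > 0$.  Once established, the standard reciprocal trick gives $1/V_{t+1} \ge 1/V_t + c/q^2$, hence $V_T \le q^2/(cT)$; substituting the depth $T = q^2/(\eps^2\delta^2)$ yields $V_T \le \eps^2\delta^2/c \le 2\delta\eps^2$ (using $\delta \le 1$ and an appropriately chosen $c$), completing the proof.

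For the per-layer inequality, I would combine \Cref{thm:OSSS-RDT} with a Fourier-analytic step bridging $L_1$ influence and variance drop.  At each node $v$, \Cref{thm:OSSS-RDT} gives $\Inf_{i_v}(\mu_R|_{\pi_v}) \ge \Var[\mu_R|_{\pi_v}]/q$; since $\mu_R \in [0,1]$, Cauchy--Schwarz then upgrades this to
\[ \tfrac14 \Ex[(\mu_R|_{\pi_v}(\bx) - \mu_R|_{\pi_v}(\bx^{\oplus i_v}))^2] \;\ge\; (\Inf_{i_v}(\mu_R|_{\pi_v}))^2/4 \;\ge\; \Var[\mu_R|_{\pi_v}]^2/(4q^2), \]
i.e.\ the degree-$(\ge 1)$ Fourier weight of $\mu_R|_{\pi_v}$ at coordinate $i_v$ is $\Omega(V_t^2/q^2)$.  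Averaging over $\boldsymbol{v}_t$ via Jensen then packages this into the claimed per-layer drop.

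The main obstacle lies in the last link of this chain: \Cref{thm:OSSS-RDT} lower-bounds the $L_1$ influence, whereas the variance drop $\Delta_v$ equals only the squared \emph{degree-$1$} Fourier coefficient $\wh{\mu_R|_{\pi_v}}(\{i_v\})^2$, which can a priori be much smaller than the full $L_2$ influence $\sum_{S \ni i_v}\wh{\mu_R|_{\pi_v}}(S)^2$ produced by the Cauchy--Schwarz step.  I expect to address this either by modifying the greedy rule to pick the coordinate of maximum variance drop directly (together with a variance-drop analogue of \Cref{thm:OSSS-RDT} obtained by plugging such a coordinate into the two-function OSSS inequality), or by an amortized ``charging'' argument that pays off the unused higher-degree Fourier weight at $v$ against variance drops at descendants of $v$ along each root-to-leaf path.
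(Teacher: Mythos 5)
Your proposal has a genuine gap, which you yourself identified, and it is fatal to the variance-potential approach as written.  The variance drop at a node $v$ upon querying coordinate $i$ is $\Var_{\boldsymbol b}[\Ex[\mu_R|_{\pi_v\cup\{x_i=\boldsymbol b\}}]]$, which equals the squared \emph{degree-one} Fourier coefficient $\wh{\mu_R|_{\pi_v}}(\{i\})^2$.  This can be identically zero even when $\Var[\mu_R|_{\pi_v}]$ is large: take $\mu_R(x) = \frac{1}{2}(1 + x_1x_2)$ (realizable as the mean of a $2$-query $\RDT$).  Then $\Var(\mu_R)=\frac14$ and every $\Inf_i(\mu_R) = \frac12$, but $\wh{\mu_R}(\{i\}) = 0$ for all $i$, so the variance drop is $0$ no matter which coordinate you query at the root.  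Your proposed per-layer inequality $V_t - V_{t+1} \ge c V_t^2/q^2$ therefore fails at $t=0$, and neither of your suggested fixes rescues it: switching the greedy rule to maximize variance drop still picks a coordinate with zero drop, and a ``charging'' argument has nothing to charge when every single-coordinate drop is zero at the current level.

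The paper avoids this trap by making \emph{total influence}, not variance, the potential.  The exact identity $\Inf(f) = \Inf_i(f) + \frac12(\Inf(f_{x_i=0}) + \Inf(f_{x_i=1}))$ means that querying coordinate $i$ reduces the (path-averaged) total influence by exactly $\Inf_i$, and \Cref{thm:OSSS-RDT} lower-bounds $\Inf_i$ of the greedily chosen coordinate by $\Var/q$.  Crucially, this $L_1$ influence captures all Fourier weight on sets containing $i$, not just degree one, so the potential is guaranteed to drop whenever the variance is large.  In the example above, $\Inf(\mu_R)=1$ drops to $\frac12$ after one query even though the variance does not budge; after a second query both drop to zero.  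The paper then runs a two-case argument per depth: either $\Ex_{\bpi}[\Var(\mu_{\bpi})]<\eps^2\delta^2$ (and a Markov-then-Chebyshev tail bound closes the proof at the leaves), or $\AvgInf$ drops by at least $\eps^2\delta^2/q$; since $\AvgInf_0 = \Inf(\mu_R)\le q$ by \Cref{cor: total influence}, the latter case can occur at most $q^2/(\eps^2\delta^2)$ times.  Your setup (Markov on squared error, law of total variance, OSSS as the per-step workhorse) is close in spirit, but the choice of potential is not a cosmetic difference: it is precisely the point where the paper's argument succeeds and yours does not.
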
 

\begin{figure}[H]
  \captionsetup{width=.9\linewidth}
\begin{tcolorbox}[colback = white,arc=1mm, boxrule=0.25mm]
\vspace{3pt} 
{\sc BuildTopDownDT}($R$, $\eps$, $\delta$):  \vspace{6pt} 

\ \ Let $\mu = \mu_R$ denote the mean function of $R$, and initialize $D$ to be the empty tree. \vspace{4pt} 

\ \ for $d = 0,  \ldots, q^2/\eps^2\delta^2$: 
\vspace{-3pt}

\begin{itemize}

\item[] {\sl Query most influential variable}: For each of the $2^{\ell}$ leaves $\ell$ in $D$, let $x_{i(\ell)}$ denote the most influential variable of the subfunction $\mu_\ell$ of $\mu$: 
\[ \Inf_{i(\ell)}(\mu_\ell) \ge \Inf_j(\mu_\ell) \quad \text{for all $j\in [n]$.} \] 
Grow $D$ by replacing $\ell$ with a query to $x_{i(\ell)}$. 
\end{itemize}

\ \ for each leaf $\ell$ of $D$: 
\vspace{-3pt}
\begin{itemize}
\item[] Assign $\ell$ the value $\E[\mu_\ell]$. 
\end{itemize}
\end{tcolorbox}
\caption{Most-influential-at-the-root algorithm}
\label{fig:most-influential} 
\end{figure}

\begin{proof}
We define the {\sl average subfunction influence at depth $d$ of $D$} to be: 
\[ \AvgInf_d(D) \coloneqq \mathop{\Ex_{\text{paths $\bpi$ in $D$}}}_{|\bpi|=d}[\Inf(\mu_\pi)], \] 
where the expectation is taken over a random path $\bpi$ from the root of $D$ to a node at depth $d$.  
The proof proceeds via a potential function argument, using average subfunction influence as our progress measure.  We will need a simple observation: for all functions $f : \zo^n \to [0,1]$ and coordinates $i\in [n]$,
\begin{equation} \Inf(\mu) = \Inf_i(f) + \lfrac1{2}(\Inf(f_{x_i=0}) + \Inf(f_{x_i=1})). \label{eq:query-most-inf}
\end{equation} 
Writing $x(\bpi)$ to denote that variable queried at the end of $\bpi$ in $D$ (equivalently, the variable queried at the root of $D_{\bpi}$),  we have that: 
\begin{align*}
\AvgInf_{d+1}(D) &\le \AvgInf_{d}(D) - \mathop{\Ex_{\text{paths $\bpi$ in $D$}}}_{|\bpi|=d}[\Inf_{x(\bpi)}(\mu_{\bpi})]  \tag*{(Equation (\ref{eq:query-most-inf}))} \\
&\le \AvgInf_{d}(D) - \mathop{\Ex_{\text{paths $\bpi$ in $D$}}}_{|\bpi|=d}\Bigg[ \frac{\Var(\mu_{\bpi})}{q}\Bigg]. \tag*{(\Cref{thm:OSSS-RDT})} 
\end{align*} 
%
%
%
%
%

At each depth $d$, we must have one of two cases: either the following equation holds, or it does not. 

\begin{equation}
\label{eq:variance bound}
\mathop{\Ex_{\text{paths $\bpi$ in $D$}}}_{|\bpi|=d}[\Var(\mu_{\bpi})] < \eps^2\delta^2
\end{equation}

\begin{enumerate}
    \item  (\Cref{eq:variance bound} holds):  By Markov's inequality, we have $$\Pr[\Var(\mu_{\bpi}) \geq \eps^2\delta] \leq \frac{\Ex[\Var(\mu_{\bpi})]}{\eps^2\delta} \le \delta.$$ 
    
For the $(1 -\delta)$-fraction of paths $\pi$ that satisfy $\Var(\mu_\pi) \leq \eps^2\delta$, we apply Chebyshev's inequality to get: 
   \[ \Pr[|\mu_\pi - \E[\mu_\pi]| \geq \eps] \leq \delta.\]
    
    \item (\Cref{eq:variance bound} does not hold):  By \Cref{eq:query-most-inf}, we have the following: 
\[ \AvgInf_{d+1}(D) \leq \AvgInf_d(D) -  \frac{\eps^2 \delta^2}{q}.\] 
    
\end{enumerate}

The following is a consequence of the law of total variance:

\[ \mathop{\Ex_{\text{paths $\bpi$ in $D$}}}_{|\bpi|=d + 1}[\Var(\mu_\pi)]\le \mathop{\Ex_{\text{paths $\bpi$ in $D$}}}_{|\bpi|=d}[\Var(\mu_\pi)].\] 

Therefore, if there is some depth $d^*$ for which Case 1 applies, then Case 1 continues to apply for all $d \ge d^*$. By \Cref{cor: total influence}, we know that the total influence $\Inf(\mu_R) \leq q$, and so we start with $\AvgInf_0(D) = \Inf(\mu_R) \leq q$. Since average influence is a non-negative quantity, we can have Case 2 for only $\leq q^2/\eps^2\delta^2$ depths before we reach a $d^*$ which is in Case 1. 
The lemma follows by running {\sc BuildTopDownDT} for $q^2/\eps^2\delta^2 + 1$ levels.
\end{proof}

\subsection{Deterministic quadratic-time algorithm for computing influence}

\begin{lemma}[Algorithm for computing influence]
Given a description of an $\RDT$ $R$ with description length $N$, for any $i\in [n]$ the influence of variable $i$ on $\mu_R$, 
\[ \Inf_i(\mu_R) \coloneqq \E[|\mu(\bx)-\mu(\bx^{\oplus i})|] \] 
can be computed deterministically in time $O(N^2)$.
\end{lemma}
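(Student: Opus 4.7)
The plan is to reduce via anti-symmetry to an $L_1$ distance between two restricted $\RDT$s, then construct a deterministic product tree of size $O(N^2)$ that computes this difference pointwise, and finally read off the expected absolute value by summing over leaves. First, let $R_0 := R|_{x_i=0}$ and $R_1 := R|_{x_i=1}$, each an $\RDT$ of description length at most $N$, obtainable in $O(N)$ time by collapsing $x_i$-queries in $R$. Since $\mu(\bx) - \mu(\bx^{\oplus i})$ is anti-symmetric under flipping $x_i$, we have
\[ \Inf_i(\mu_R) \;=\; \E_{\bx_{-i}}\bigl[\bigl|\mu_{R_0}(\bx_{-i}) - \mu_{R_1}(\bx_{-i})\bigr|\bigr]. \]

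Next, I would construct a $\DDT$ $D$ on $x_{-i}$ of size $O(N^2)$ satisfying $D(x_{-i}) = \mu_{R_0}(x_{-i}) - \mu_{R_1}(x_{-i})$ identically. The construction walks $R$ top-down, mirroring $R$'s non-$x_i$ decisions in $D$ and, at each $x_i$-query node $u$ of $R$ with subtrees $R_{u,L}$ and $R_{u,R}$, grafting in $D$ the product $\DDT$ of the two subtrees, whose leaves are labeled by the differences $v_L - v_R$. Since distinct $x_i$-query nodes of $R$ root disjoint subtrees, the total size is bounded by $\sum_u |R_{u,L}|\cdot|R_{u,R}| = O(N^2)$. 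The final step computes $\E_{\bx_{-i}}[|D(\bx_{-i})|] = \sum_\ell 2^{-\mathrm{depth}(\ell)}|v_\ell|$ by summing over the $O(N^2)$ leaves of $D$.

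The main obstacle is handling the stochastic nodes of $R$ in the construction of $D$: at a stochastic node, averaging two sub-$\DDT$s into a single $\DDT$ normally requires a common refinement that multiplies the sizes, and in the worst case even an individual $\DDT$ for $\mu_{R_b}$ can blow up exponentially in $N$. The key observation needed is that because $R_0$ and $R_1$ are two restrictions of the same $R$, they share all structure outside of $x_i$-query subtrees, so the stochastic nodes above those subtrees appear identically in $\mu_{R_0}$ and $\mu_{R_1}$ and can be handled jointly in a single $\DDT$ without multiplicative blow-up. I would formalize this by an induction on $R$ that carries a joint $\DDT$ representation of $(\mu_{R_0}|_{\mathrm{subtree}}, \mu_{R_1}|_{\mathrm{subtree}})$ sharing nodes wherever $R$'s subtree contains no $x_i$-query, paying the product cost only at the $x_i$-query nodes, so that the bound $\sum_u |R_{u,L}|\cdot|R_{u,R}| = O(N^2)$ controls the whole construction.
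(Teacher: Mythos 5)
Your approach is genuinely different from the paper's, which never attempts to materialize a $\DDT$ for $\mu_{R_0}-\mu_{R_1}$: it instead directly sums over pairs of root-to-leaf paths (a path $\pi$ in $R_{\mathrm{left}}$ and a path $\sigma$ in the $\pi$-restricted $R_{\mathrm{right}}$), accumulating the weighted absolute differences of leaf values, and then handles non-root variables by summing the root-influences of all subtrees rooted at $x_i$ with appropriate reach-probability weights.

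There is a genuine gap in your Step~2: a $\DDT$ $D$ with $D(x)=\mu_{R_0}(x)-\mu_{R_1}(x)$ pointwise can be exponentially larger than $N$, and the blow-up occurs exactly at the stochastic nodes you flag as the obstacle but do not actually resolve. The ``key observation'' you invoke---that stochastic nodes above the $x_i$-query subtrees appear identically in $\mu_{R_0}$ and $\mu_{R_1}$---is true but does not let you merge the sub-$\DDT$s cheaply. At a stochastic node whose two children both contain $x_i$-queries, your induction produces two sub-$\DDT$s $D^{(0)}$ and $D^{(1)}$ and must output a single $\DDT$ computing $\frac{1}{2}\bigl(D^{(0)}+D^{(1)}\bigr)$; when $D^{(0)}$ and $D^{(1)}$ branch on disjoint variables this forces a common refinement whose size is the \emph{product} of the two, so the stochastic nodes above the $x_i$-frontier cause multiplicative, not additive, growth. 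Concretely, let $R$ have a balanced tree of $k$ stochastic nodes at the top, each of whose $k$ leaves is replaced by an $x_i$-query node $u_j$ whose left subtree is the constant leaf $0$ and whose right subtree queries a fresh variable $x_j$ with leaves $0,1$. Then $N=O(k)$ and your accounting gives $\sum_u |R_{u,L}|\cdot|R_{u,R}|=O(k)$, yet $\mu_{R_0}\equiv 0$ and $\mu_{R_1}(x)=\frac{1}{k}\sum_{j=1}^{k}x_j$, so $\mu_{R_0}-\mu_{R_1}$ depends on all $k$ variables and any $\DDT$ computing it exactly has $2^k$ leaves---exponentially larger than the claimed $O(N^2)$. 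Since representing the difference function exactly as a $\DDT$ is the crux of your plan, the approach does not go through; you would need to compute $\Ex_{\bx}\bigl[|\mu_{R_0}(\bx)-\mu_{R_1}(\bx)|\bigr]$ without first building such a $D$.
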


\begin{proof}
We write $\mu$ for $\mu_R$. We first consider the simpler problem of deterministically computing the influence of the variable queried at the root of $R$.  Suppose that $x_i$ is queried at the root of $R$.  Let $R_{\mathrm{left}}$ and $R_{\mathrm{right}}$ denote the left and right subtrees of $R$, and $\mu_{\mathrm{left}}$ and $\mu_{\mathrm{right}}$ be their mean functions.  In this case, we have that
\begin{align*}
\Inf_i(\mu) &= \Ex_{\bx}[|\mu(\bx) - \mu(\bx^{\oplus i})|] \\
&= \Ex_{\bx}[|\mu_{\mathrm{left}}(\bx) - \mu_{\mathrm{right}}(\bx)|]  \\
&= \mathop{\sum_{\mathrm{paths}}}_{\pi \in R_{\mathrm{left}}} \Prx_{\bx}[\,\text{$\bx$ follows $\pi$}\,] \cdot\Ex_{\bx}[|(R_{\mathrm{right}})_\pi(\bx)-\ell(\pi)|]  \\
&= \mathop{\sum_{\mathrm{paths}}}_{\pi \in R_{\mathrm{left}}} 2^{-|\pi|} \cdot\Ex_{\bx}[|(R_{\mathrm{right}})_\pi(\bx)-\ell(\pi)|],
\end{align*} 
where $\ell(\pi)$ denotes the value of leaf at the end of path $\pi$.  This quantity can be computed deterministically using the algorithm given in~\Cref{fig:TopDown}.

\begin{figure}[H]
  \captionsetup{width=.9\linewidth}
\begin{tcolorbox}[colback = white,arc=1mm, boxrule=0.25mm]
\vspace{3pt} 

{\sc RootInfluence}($R$):  \vspace{6pt} 

\ \ Inititialize $\Inf_{\mathrm{root}}$ to 0. \vspace{4pt} 

\ \ for each path $\pi$ in $R_{\mathrm{left}}$: 
\vspace{-3pt} 
\begin{enumerate}
\item {\sl Restrict $R_{\mathrm{right}}$ by $\pi$:} Compute $(R_{\mathrm{right}})_\pi$ as follows: for each decision node $x_j$ restricted by $\pi$, replace every occurrence of $x_j$ in $R_{\mathrm{right}}$ by its subtree on the side specified by $\pi$.   

\item {\sl Path counting:} Let $\ell(\pi)$ be the value of the leaf at the end of $\pi$. Compute $p = \E[|(R_{\mathrm{right}})_\pi(\bx) - \ell(\pi)|]$ as follows: \vspace{4pt} 

\ \ Initialize $p$ to 0. 

\ \ for each path $\sigma$ in $(R_{\mathrm{right}})_\pi$: 
\vspace{-3pt} 
\begin{enumerate}
     \item[] Increment $p = p+2^{-|\sigma|} \cdot |\ell(\sigma)-\ell(\pi)|$. 
\end{enumerate}

\item {\sl Update:} Increment $\Inf_{\mathrm{root}} = \Inf_{\mathrm{root}} + p \cdot 2^{-|\pi|}$. 
\end{enumerate}

\ \ Output: $\Inf_{\mathrm{root}}$. 
\end{tcolorbox}
\caption{Deterministic algorithm to compute the influence of the root of an $\RDT$.}
\label{fig:TopDown}
\end{figure}

%
%
%
%

Since $R_{\mathrm{left}}$ and $R_{\mathrm{right}}$ each have at most $N$ paths, the total runtime of {\sc RootInfluence} is $O(N^2)$.   With {\sc RootInfluence} in hand, the influence of a variable that is not queried at the root of $R$ is easy to compute.  First note that: 
\begin{align*}
 \Inf_i(\mu) &= \mathop{\sum_{\text{Subtrees $T$}}}_{\text{rooted at $x_i$}} \Pr[\,\text{$\bx$ visits $T$}\,] \cdot \textsc{RootInfluence}(T) \\
 &= \mathop{\sum_{\text{Subtrees $T$}}}_{\text{rooted at $x_i$}} 2^{-\mathrm{depth}(T,R)} \cdot \textsc{RootInfluence}(T),
 \end{align*} 
where $\mathrm{depth}(T,R)$ is the depth of the root of $T$ (which queries $x_i$) within $R$.  Therefore, we can compute $\Inf_i(\mu)$ simply by calling {\sc RootInfluence} on each subtree rooted at each occurrence of $x_i$ in $R$. The sum of sizes of these subtrees is at most $N$. Since $a^2 + b^2 \leq (a + b)^2$ for any positive $a$ and $b$, the sum of the runtimes of {\sc RootInfluence} on these subtrees is $O(N^2)$ as well. 
\end{proof}

\subsection{Efficient computation of paths} 

We now show that \Cref{thm:online-high-probability} follows from the following algorithm.

\begin{figure}[H]
  \captionsetup{width=.9\linewidth}
\begin{tcolorbox}[colback = white,arc=1mm, boxrule=0.25mm]
\vspace{3pt} 

{\sc BuildTopDownPath}($\underline{x}$, $R$, $\eps$, $\delta$):  \vspace{6pt} 

\ \ Let $\mu = \mu_R$ denote the mean function of $R$, and initialize $\pi$ to be the empty path. \vspace{4pt} 

\ \ for $d = 0,  \ldots, q^2/\eps^2\delta^2$: 
\vspace{-3pt} 
\begin{enumerate}
\item {\sl Compute influences:}  Compute the variable influences of $\mu_\pi$, and let $i^\star$ be the most influential variable. 
\item Extend $\pi$ by restricting the $i^\star$-th coordinate to $\underline{x}_{i^\star}$. 
\end{enumerate}
\ \ Output $\E[\mu_{\pi}]$. 
\end{tcolorbox}
\caption{Deterministic online algorithm for approximating $\mu_R(\underline{x})$.}
\label{fig:TopDown}
\end{figure}

\begin{proof}

The correctness and accuracy guarantees of this algorithm follow directly from \Cref{lem:most influential}. The algorithm runs for $q^2/\eps^2\delta^2$ iterations, computing variable influences on each iteration, for each variable which appears in $R$. Computing all relevant influences takes $O(nN^2) = O(N^3)$ time. Thus the full algorithm takes $\poly(N, q, 1/\eps, 1/\delta)$ time, which concludes the proof of \Cref{thm:online-high-probability}. 
\end{proof}

\begin{remark}
We observe that {\sc BuildTopDownPath} is also highly memory efficient.  It uses only $O(q+m)$ space: this is the maximum number of bits that may be needed to store the influence of a variable in a $q$-query RDT with randomness complexity $m$. 
\end{remark}


\newcommand{\BayesError}{\mathrm{BayesError}}
\section{Constructivization of Nisan's Theorem}
\label{sec:Nisan}
In this section we prove \Cref{thm:constructive-Nisan}, our constructivization of~\hyperlink{nisan-anchor}{Nisan's Theorem}.   We accomplish this using our instance-optimal framework,~\Cref{thm:framework-formal}.  An immediate qualitative difference between~\Cref{thm:instance-opt} and~\Cref{thm:constructive-Nisan} is one sees is that ``there is no $\eps$" in the statement of~\Cref{thm:constructive-Nisan}.  And yet, when applying the framework of~\Cref{thm:framework-formal}, one has to supply the meta-algorithm $\mathcal{A}_{\mathrm{InstanceOpt},\mathcal{E}}$ with an $\eps$ parameter.   Therefore, in order to apply~\Cref{thm:framework-formal} to constructivize~\hyperlink{nisan-anchor}{Nisan's Theorem} (i.e.~to prove~\Cref{thm:constructive-Nisan}), we first have to compute the ``appropriate value of~$\eps$" (\Cref{lemma:proxy error}).

Consider the error metric $\mathcal{E}_{\BayesError}$ defined as follows: 
\begin{align}
    \label{eq: error nisan}
    \mathcal{E}_{\BayesError}(R,D) \coloneqq \Prx_{\bx,\br}[R(\bx, \br) \neq D(\bx)].
\end{align}

The following lemma shows why this this is a useful error function for the purposes of constructivizing~\hyperlink{nisan-anchor}{Nisan's Theorem}: 
\begin{lemma}
    \label{lemma:proxy error}
    For every $\RDT$ $R: \zo^n \times \zo^m \rightarrow \zo$ computing a function $f:\zo^n \to \zo$ with bounded error, there is a unique $\varepsilon_R \in [0,1]$ with the following property. For any $\DDT$ $D:  \zo^n\rightarrow \zo$, if $D \equiv f$\footnote{Meaning that $D(x) = f(x)$ for all $x \in \zo^n$.} then $\mathcal{E}_\BayesError(R,D) = \varepsilon_R$, and $\mathcal{E}_\BayesError(R,D) > \varepsilon_R$ otherwise.
\end{lemma}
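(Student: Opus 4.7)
The plan is to rewrite $\mathcal{E}_\BayesError(R,D)$ as a sum of per-input contributions, and then to observe that the bounded-error promise on $R$ forces $f$ to be the Bayes-optimal classifier at every single $x$, with a strict margin. First I would swap the order of expectations in~(\ref{eq: error nisan}) to obtain
\[ \mathcal{E}_\BayesError(R,D) = \Ex_{\bx}\Big[\Prx_{\br}[R(\bx,\br) \neq D(\bx)]\Big]. \]
For each fixed $x \in \zo^n$, the inner probability equals $\mu_R(x)$ if $D(x) = 0$ and equals $1 - \mu_R(x)$ if $D(x) = 1$. Writing $\mathrm{err}(x,b) \coloneqq b(1-\mu_R(x)) + (1-b)\mu_R(x)$ for $b \in \zo$, this gives
\[ \mathcal{E}_\BayesError(R,D) = \Ex_{\bx}[\mathrm{err}(\bx, D(\bx))]. \]

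Next I would use the bounded-error hypothesis on $R$. For every $x$ with $f(x)=1$ we have $\mu_R(x) \geq \tfrac{2}{3}$, so $\mathrm{err}(x,1) \leq \tfrac{1}{3} < \tfrac{2}{3} \leq \mathrm{err}(x,0)$; similarly for $x$ with $f(x)=0$ we have $\mathrm{err}(x,0) < \mathrm{err}(x,1)$. Thus for every $x$,
\[ \mathrm{err}(x, f(x)) < \mathrm{err}(x, 1 - f(x)), \]
and in particular $\mathrm{err}(x, D(x)) \ge \mathrm{err}(x, f(x))$ with equality iff $D(x) = f(x)$. Defining
\[ \varepsilon_R \coloneqq \Ex_{\bx}[\mathrm{err}(\bx, f(\bx))], \]
any $\DDT$ $D$ with $D \equiv f$ yields $\mathcal{E}_\BayesError(R,D) = \varepsilon_R$, establishing existence.

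For strictness, if $D \not\equiv f$ then there is at least one $x^\star \in \zo^n$ with $D(x^\star) \neq f(x^\star)$. Since $\bx$ is uniform on $\zo^n$, the singleton $\{x^\star\}$ has positive measure $2^{-n}$, and on this point the per-input error strictly increases by at least $|2\mu_R(x^\star)-1| \geq \tfrac{1}{3}$. On all other $x$, the per-input error weakly increases. Averaging, $\mathcal{E}_\BayesError(R,D) > \varepsilon_R$. Uniqueness of $\varepsilon_R$ follows immediately, since any value satisfying the stated property must equal $\mathcal{E}_\BayesError(R,f)$.

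No step here looks like a real obstacle; the whole argument is a standard Bayes-optimality calculation, and the only thing worth being careful about is invoking the $\tfrac{1}{3}/\tfrac{2}{3}$ bounded-error margin to get the strict (rather than merely weak) inequality on every $x$. The main payoff of the lemma is that it pins down the exact error target $\varepsilon_R$ that one must feed into~\Cref{thm:framework-formal}: with this $\varepsilon_R$ in hand, the instance-optimal framework will return precisely a $q^\star_R$-query $\DDT$ that computes $f$ exactly, which is what is needed for~\Cref{thm:constructive-Nisan}.
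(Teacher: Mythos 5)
Your proof is correct and is essentially the same Bayes-optimality argument as the paper's: both decompose $\mathcal{E}_\BayesError(R,D)$ into a per-input error, use the $1/3$-vs-$2/3$ margin from the bounded-error promise to show $f$ is the strict per-input minimizer, and then invoke the fact that the uniform distribution gives every point positive mass to lift per-input strictness to strictness of the expectation. The paper phrases the decomposition via a margin function $\Delta(x)$ and an indicator term rather than your $\mathrm{err}(x,b)$, but this is only a cosmetic difference.
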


\begin{proof}
    Since $R$ computes $f$ with bounded error, we have that for all $x \in \zo^n$, 
    \begin{align*}
        \underbrace{\Prx[R(x, \br) = f(x)] - \Prx[(R(x, \br) \neq f(x)]}_{\coloneqq \Delta(x)} \geq \lfrac{2}{3} - \lfrac{1}{3} = \lfrac{1}{3}.
    \end{align*}
    Denote the quantity on the left side of the above equation as $\Delta(x)$, which is always at least $\frac{1}{3}$. For any $D$, we can write $\mathcal{E}_\BayesError(R,D)$ as follows: 
    \begin{align*}
        \mathcal{E}_\BayesError(R,D) = \Prx[R(\bx, \br) \neq f(\bx)] + \Ex\big[\Ind(D(\bx) \neq f(\bx)) \cdot \Delta(\bx)\big].
    \end{align*}
    Define $\varepsilon_R \coloneqq \mathcal{E}_\BayesError(R, f)$, which is the first term in the above equation. Clearly, if $D\equiv f$, then $\mathcal{E}(R,D) = \varepsilon_R$. Otherwise, since $\Delta(x) > 0$ for all $x$, $\mathcal{E}_\BayesError(R,D) > \varepsilon_R$.
\end{proof}

(Note that $\eps_R$ is precisely the Bayes optimal error of $R$, with $f$ being its Bayes classifier.)  By~\Cref{lemma:proxy error}, if we can find a $\DDT$ $D: \zo^n \to \zo$ minimizing $\mathcal{E}_{\BayesError}(R,D)$ over all $\DDT$s, then $\mathcal{E}_\BayesError(R,D) = \varepsilon_R$ and therefore $D \equiv f$, accomplishing our goal. To apply our instance-optimal framework,~\Cref{thm:framework}, to this error metric $\mathcal{E}_\BayesError$, we need to show that it is natural and efficient (recall~\Cref{def:natural-efficient}): 


\begin{lemma}[$\mathcal{E}_\BayesError$ is natural and efficient]
\label{lem:BO} 
$\mathcal{E}_\BayesError$ is natural and $2^q$-efficient.
\end{lemma}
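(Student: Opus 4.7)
My plan is to verify the two requirements of \Cref{def:natural-efficient} (as adapted to $\zo$-valued $\RDT$s and $\DDT$s per the first extension in \Cref{sec:extensions}) directly, using the Fourier-analytic machinery from~\Cref{sec:extensions} (specifically $\mathcal{A}_{\mathrm{Fourier}}$, together with identities~(\ref{eq:Fourier-expectation}) and~(\ref{eq:Fourier-variance})).

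\medskip
\textbf{Naturalness.} I would decompose $\mathcal{E}_\BayesError$ pointwise over $\bx$. Conditioning on $\bx$ and using that $D(\bx)\in\zo$,
\[
\Prx_{\br}[R(\bx,\br)\neq D(\bx)] \;=\; \mu_R(\bx)\cdot(1-D(\bx)) + (1-\mu_R(\bx))\cdot D(\bx).
\]
Hence $\mathcal{E}_\BayesError(R,D) = \Ex_{\bx}[d(\mu_R(\bx),D(\bx))]$ with $d(a,b) \coloneqq a+b-2ab$. I would then check that $d$ maps $[0,1]\times[0,1]\to[0,1]$ (it is bilinear and takes values $0,1,1,0$ at the four corners of $[0,1]^2$, hence lies in $[0,1]$ throughout). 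This is exactly the ``natural'' form required in~(\ref{eq:distance}).

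\medskip
\textbf{Computing $\mathcal{E}_\BayesError(R,D)$ deterministically in $\poly(N_R,N_D,2^q)$ time.}  Expanding,
\[
\mathcal{E}_\BayesError(R,D) \;=\; \Ex[\mu_R] + \Ex[D] - 2\,\Ex[\mu_R\cdot D].
\]
I would apply $\mathcal{A}_{\mathrm{Fourier}}$ (from the proof of~\Cref{lemma:compute L2 distance}) to both $R$ and $D$ to obtain their Fourier expansions in time $\poly(N_R,2^q)$ and $\poly(N_D)$, respectively (using that a $\DDT$ of description length $N_D$ has at most $N_D$ nonzero Fourier coefficients). Then $\Ex[\mu_R] = \wh{\mu_R}(\emptyset)$, $\Ex[D] = \wh{D}(\emptyset)$, and by Parseval's identity, $\Ex[\mu_R\cdot D] = \sum_{S\subseteq[n]} \wh{\mu_R}(S)\,\wh{D}(S)$, which can be evaluated by going through the nonzero coefficients of the two polynomials.

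\medskip
\textbf{Finding the best constant $c\in\zo$ in $\poly(N_R,2^q)$ time.}  Here the search is only over $c\in\{0,1\}$ since we are in the $\zo$-valued setting. We have $\mathcal{E}_\BayesError(R,0) = \Ex[\mu_R] = \wh{\mu_R}(\emptyset)$ and $\mathcal{E}_\BayesError(R,1) = 1 - \wh{\mu_R}(\emptyset)$, so the optimum is $c^\star = \mathbf{1}[\wh{\mu_R}(\emptyset) \ge \tfrac{1}{2}]$. I would compute $\wh{\mu_R}(\emptyset)$ via $\mathcal{A}_{\mathrm{Fourier}}$ in $\poly(N_R,2^q)$ time and output $c^\star$. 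I do not foresee a real obstacle: everything reduces to a short Fourier calculation. The only mild subtlety worth flagging explicitly is that we need the adapted $\zo$-valued version of~\Cref{def:natural-efficient} (the constant $c$ ranges over $\zo$, not $[0,1]$), which is exactly what the extension in~\Cref{sec:extensions} specifies.
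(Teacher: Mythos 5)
Your naturalness argument is correct and, though phrased differently, equivalent to the paper's: the paper takes $d(x,y)=|x-y|$, which coincides with your $d(a,b)=a+b-2ab$ on the relevant domain $[0,1]\times\{0,1\}$, and both correctly encode $\Prx_{\br}[R(x,\br)\neq D(x)]$. Your handling of the best constant $c\in\{0,1\}$ also matches the paper. The gap is in your efficiency argument for computing $\mathcal{E}_\BayesError(R,D)$: the claim that ``a $\DDT$ of description length $N_D$ has at most $N_D$ nonzero Fourier coefficients'' is false. The $\DDT$ computing $\mathrm{AND}(x_1,\ldots,x_k)$ is a single path with description length $O(k)$ but has all $2^k$ Fourier coefficients nonzero; in general a leaf of $D$ at depth $d$ contributes up to $2^d$ monomials, so $\mathcal{A}_{\mathrm{Fourier}}(D)$ costs $\poly(N_D,2^{q_D})$ where $q_D$ is the depth of $D$, and $q_D$ can be as large as $N_D$. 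Since $t$-efficiency requires time $\poly(t,N_R,N_D)$ with $t$ a function only of $R$'s query complexity $q$, your analysis does not establish $2^q$-efficiency.

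The paper sidesteps this by never Fourier-analyzing $D$. It writes the error as a sum over the at most $N_D$ leaves of $D$, namely $\mathcal{E}_\BayesError(R,D)=\sum_{i} 2^{-|\pi_i|}\cdot\big|\Ex_{\bx\text{ follows }\pi_i}[\mu_R(\bx)]-\mathrm{label}(\ell_i)\big|$, and computes each conditional mean by running $\mathcal{A}_{\mathrm{Fourier}}$ on the restricted $\RDT$ $R_{\pi_i}$, which is still a $q$-query $\RDT$ of description length at most $N_R$; this gives $\poly(N_R,N_D,2^q)$ total time with no dependence on $q_D$. Your Parseval decomposition $\mathcal{E}_\BayesError(R,D)=\Ex[\mu_R]+\Ex[D]-2\Ex[\mu_R D]$ is correct as an identity and can be repaired along the same lines, by evaluating the cross term leaf-by-leaf as $\Ex[\mu_R D]=\sum_i 2^{-|\pi_i|}\cdot\mathrm{label}(\ell_i)\cdot\Ex[\mu_{R_{\pi_i}}]$ instead of convolving the two Fourier spectra.
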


\begin{proof}
$\mathcal{E}_\BayesError$ is natural since $d(x,y) = |x - y|$ satisfies (\ref{eq:distance}) for $\zo$-valued $R$ and $D$.  (Recall our discussion in Extension \#1 of~\Cref{sec:extensions}.) 
    
    We next show how to efficiently compute $\mathcal{E}_\BayesError(R,D)$. Let the leaves of $D$ be $\ell_1,\ldots \ell_m$ and $\pi_i$ and $\mathrm{label}(\ell_i)$ be defined as follows:
    \begin{align*}
        \pi_i &\coloneqq \text{Path from the root of $D$ to $\ell_i$} \\
        \mathrm{label}(\ell_i) &\coloneqq \text{Leaf value of $\ell_i$.}
    \end{align*}
We can express $\mathcal{E}_\BayesError(R,D)$  as follows:
    \begin{align*}
        \mathcal{E}_\BayesError(R,D) = \sum_{i=1}^m \Prx_{\bx}\big[\bx \text{ follows } \pi_i\big] \cdot 
        \Prx_{\substack{\bx \text{ follows } \pi_i\\ \br \sim \zo^m}} \big[R(\bx, \br) \neq \mathrm{label}(\ell_i) \big]. 
    \end{align*}
    We will show that each of the above terms can be computed efficiently and deterministically. The first term, the probability that $\bx$ follows $\pi_i$ is just $2^{-|\pi_i|}$ where $|\pi_i|$ is the depth of $\ell_i$ in $D$. The second term can be computed using the following relation, which holds since $D$ and $R$ are both $\zo$-valued: 
    \begin{align*}
       \Prx_{\substack{\bx \text{ follows } \pi_i\\ \br \sim \zo^m}} \big[R(\bx, \br) \neq \mathrm{label}(\ell_i) \big] = \Big|\Ex_{\bx \text{ follows } \pi_i}\big[\mu_R(\bx)\big] - \mathrm{label}(\ell_i) \Big|.
    \end{align*}
    The above can be computed efficiently and deterministically by first converting $R$ to $R_{\pi_i}$ and then computing its mean as in the proof of \Cref{lemma:min l2 distance}. Combining each of these steps, we see that Criteria 1 of $2^q$-efficiency (in~\Cref{def:natural-efficient}) is met.  As for Criteria 2, we observe that the constant $c$ minimizing $\mathcal{E}_\BayesError(R,c)$ must either be the constant $0$ or constant $1$ function. We can simply compute the error for both and take whichever is better.
\end{proof}

With~\Cref{lem:BO} in hand, we are now ready to apply our framework,~\Cref{thm:framework-formal}, to give an instance-optimal constructivization of~\hyperlink{nisan-anchor}{Nisan's Theorem}.  

\nisan*

\begin{proof}
    \hyperlink{nisan-anchor}{Nisan's Theorem} guarantees the existence of a $O(q^3)$-query $\DDT$ $D$ that computes $f$ exactly. By \Cref{lemma:proxy error}, we have that $\mathcal{E}_\BayesError(R,D) = \varepsilon_R$, and furthermore this the minimum possible error achievable by any $\DDT$. Therefore, by running $\textsc{Find}(R, \mathcal{E}_\BayesError, \qbudget = O(q^3), \pi = \emptyset)$ we can find a $\DDT$ that achieves error $\varepsilon_R$. Running $\textsc{Find}$ and computing the error of the resulting tree takes time $\mathrm{poly}(N) \cdot n^{O(q^3)}$, at which point our algorithm ``knows" $\varepsilon_R$.  Therefore, we can then use the algorithm of~\Cref{thm:framework-formal}, to find the minimum query $\DDT$ with error $\varepsilon_R$ relative to the error metric $\mathcal{E}_\BayesError$. This step takes time $\mathrm{poly}(N) \cdot 2^{O(q)} \cdot n^{O(q^\star_R)} \le \poly(N)\cdot n^{O(q^3)}$ and returns a $q^\star_R$-query $\DDT$ $D^\star$ with error $\varepsilon_R$ relative to $\mathcal{E}_\BayesError$. By \Cref{lemma:proxy error}, we have that $D^\star$ computes $f$ exactly.
\end{proof}

\begin{remark}
We remark that our \textsc{Find} algorithm (\Cref{fig:find}) as initialized in the proof of~\Cref{thm:constructive-Nisan} can be viewed as a generalization of an algorithm by Mehta and Raghavan \cite{MR02}.   The algorithm of~\cite{MR02} allows one to find a minimal error $q$-query $\DDT$ for a given $\DDT$, where error is measured with respect to Hamming distance.  Our {\sc Find} algorithm initialized with the error metric being $\mathcal{E}_\BayesError$ can be viewed as a generalization of~\cite{MR02}'s algorithm from $\DDT$s to $\RDT$s; indeed, the Bayes error as captured by $\mathcal{E}_\BayesError$ is a natural analogue of Hamming distance for randomized functions.   Without our instance-optimal framework,~\Cref{thm:framework-formal},~\cite{MR02}'s algorithm could also be combined with \Cref{lemma:proxy error} can also be used to constructivize~\hyperlink{nisan-anchor}{Nisan's Theorem}, though not achieving instance optimality.
\end{remark}


\subsection{Consequences of Turing machine computation: Proofs of \Cref{cor:derandomization-with-preprocessing,cor:dequantization-with-preprocessing}}

Our constructivization of~\hyperlink{nisan-anchor}{Nisan's Theorem} (\Cref{thm:constructive-Nisan}) has direct implications for derandomization in the Turing machine model of computation: 

\derandomPreprocess*
\begin{proof}
    Let $\mathcal{A}$ be the randomized $\polylog(n)$-time Turing machine computing $L$.  Note that $\mathcal{A}$ queries at most $\polylog(n)$ coordinates of the input and has randomness complexity at most $\polylog(n)$.  Our preprocessing step first writes down an $\RDT$ $R : \zo^{\polylog(n)} \times \zo^{\polylog(n)} \to \zo$ simulating $\mathcal{A}$, which has size $2^{\polylog(n)} = \quasipoly(n)$, in time $\quasipoly(n)$.  We then apply the algorithm of \Cref{thm:constructive-Nisan} to produce $\polylog(n)$-query $\DDT$ $D$ computing the same function $f : \zo^n \to \zo$ as $R$.  By the guarantees of~\Cref{thm:constructive-Nisan}, doing so also takes time $\quasipoly(n)$.   With this $\polylog(n)$-query DDT $D$ in hand, we can then compute $f(x)$ for any input $x$ in time $\polylog(n)$.
\end{proof}

\dequantizePreprocess*
\begin{proof}
    \cite{BBCMdW01} prove that for any quantum algorithm that makes at most $q$ queries to the input, there is a polynomial of degree at most $2q$ computing the acceptance probability of any $x$. Given a quantum algorithm, their proof implies a method for recovering this polynomial in 
    \begin{align*}
        \poly(2^m, \text{number of terms in the polynomial})
    \end{align*}
    time. Since the polynomial must have degree at most $2q$, that algorithm runs in $\poly(2^m, n^q)$ time. Since we aim to dequantize a quantum algorithm that runs in time at most $\polylog(n)$, it can make at most $\polylog(n)$ queries to the input, so in time $\poly(2^m, \quasipoly(n))$, we can recover a polynomial computing its acceptance probability.
    
    \cite{BBCMdW01} also guarantee that there is an $O(q^6)$ $\DDT$ computing the same Boolean function as a $q$-query quantum algorithm with bounded error. This means there is is $\polylog(n)$-query $\DDT$ deciding $L$ for any particular $n$. Given $p$, a polynomial computing the acceptance probability of the quantum algorithm, we find this $\DDT$ using \Cref{thm:constructive-Nisan} with the following minor modifications. In that proof, we used the following error metric.
    \begin{align*}
    \mathcal{E}_{\BayesOpt}(R,D) \coloneqq \Prx_{\bx,\br}[R(\bx, \br) \neq D(\bx)].
    \end{align*}
    Here, we instead use an error metric that takes in a polynomial and $\DDT$ (as suggested in~\Cref{sec:extensions}, Extension \#2), defined as follows:
    \begin{align*}
        \mathcal{E}_{\mathrm{poly}}(p, D) \coloneqq \Ex[|p(\bx) - D(\bx)|].
    \end{align*}
    These two error metrics would be equivalent if $p$ were a polynomial computing the acceptance probability of $R$, so the proof goes through. Furthermore, when computing what we called $\varepsilon_R$ in \Cref{thm:constructive-Nisan}, we set the query budget to $\qbudget = O(q^6)$ instead of $\qbudget = O(q^3)$. This change affects the time our algorithm takes, but it still runs in the time bounds specified by this lemma. 
    
    The output of the preprocessing is a $\DDT$ that allows us to compute $L$ in $\polylog(n)$ time on any input $x$ of length $n$.  
\end{proof}

\section*{Acknowledgments} 
 
We thank Mika G\"o\"os, Charlotte Peale, and Omer Reingold for enjoyable discussions and helpful suggestions.  The third author is supported by NSF grant CCF-1921795.

\bibliography{most-influential}{}
\bibliographystyle{alpha}


\appendix

\section{Perspectives from learning theory} 
\label{ap:learn}

In this section we briefly discuss a couple of alternative interpretations of the problem of constructively derandomizing query algorithms.  These perspectives come from learning theory, where we adopt the equivalent view of query algorithms as {\sl decision trees} (\Cref{def:DT}).  

Decision trees are an extremely popular model for representing labelled data.  They pervade both the theory and practice of machine learning---their simple structure makes them easy to interpret and fast to evaluate, and they generalize well.  A {\sl  random forest} is a collection of decision trees: to determine the label for an input $x$, the forest simply averages the labels of its trees' labels for $x$.  In other words, if we represent a collection of trees $T_1,\ldots,T_M : \zo^n \to [0,1]$ as $R : \zo^n \times [M] \to [0,1]$ where $R(x,r) = T_r(x)$, then a random forest $\mathscr{F} : \zo^n \to [0,1]$ is the function: 
\[ \mathscr{F}(x) \coloneqq \Ex_{\br \sim [M]}[R(x,\br)] = \Ex_{\br \sim [M]}[T_{\br}(x)] = \mu_R(x). \] 
The motivation for using a collection of trees instead of a single one, supported by the empirical success of random forest algorithms and classifiers, is that its diversity enhances accuracy and stability. 

From this perspective, the task of derandomizing query algorithms corresponds to that of converting a random forest $\mathscr{F}$ into a {\sl single} decision tree that closely approximates $\mathscr{F}$.\footnote{From this perspective---where randomized forests and decision trees are viewed as classifiers rather than a model of computation---it is less common and less natural to make assumptions about the distribution of $R(x,\br)$ (e.g.~that it is concentrated on a certain value), and so the first strand of our results as discussed on~\cpageref{two-strands} is more relevant.}  If one were to do so, one naturally seeks a conversion algorithm that (i) runs quickly, and (ii) preserves the efficiency of the original random forest $\mathscr{F}$, meaning that if $\mathscr{F}$ is a collection of depth-$q$ trees, then the resulting single decision tree has depth $q'$ where $q'$ is not much larger than $q$.  These correspond exactly to the two basic criteria for the efficiency of derandomization that we discuss on~\cpageref{two-criteria} and that we focus on in this work. 

Yet another learning-theoretic interpretation of randomized decision trees is as {\sl latent variable models}: one views randomized decision trees $R : \zo^n\times \zo^m \to [0,1]$ is as deterministic decision trees over $n$ observable variables and $m$ latent variables, where the uncertainty concerning the latent variables is modeled as apparent probabilistic behavior: 
   \[ R : \text{($n$ observable variables)} \times \text{($m$ latent variables)} \to [0,1]. \]  
       This interpretation of randomized decision trees as latent variable models dates back to the original work of Kearns and Shapire~\cite{KS94} extending Valiant's PAC model from deterministic to randomized concepts (which they term ``$p$-concepts"); see Section 3.3 of~\cite{KS94} and their subsequent work with Sellie~\cite{KSS94} for a detailed discussion.         With this interpretation in mind, the algorithmic task of derandomizing randomized decision trees can be viewed as that of efficiently converting a latent variable model into one without any latent variables, while preserving its accuracy as a representation of the data set.
       

\section{Proofs deferred from~\Cref{sec:Yao}}
\label{ap:yao} 

(In this section it will be convenient for us to use notation and terminology, such as ``$\RDT$", ``$\DDT$", and ``$\mu_R$", that we introduce in the Preliminaries section,~\Cref{sec:prelim}.)

\begin{proof}[Proof of~\Cref{fact:yao}] 
    Suppose we pick random strings $\br_1,\ldots, \br_c \sim \zo^{m}$ independently and uniformly at random.  For each $x \in \zo^n$, consider the following random variable:
    \begin{align*}
        \mathbf{est}(x) \coloneqq \Ex_{\bs \in \{\br_1, \ldots, \br_c\}}[R(x, \bs)].
    \end{align*}
    Note that 
    \begin{align*} \E[\mathbf{est}(x)] &= \mu_R(x) = \Ex_{\br \sim\zo^m}[R(x,\br)] \\
    \Var[\mathbf{est}(x)] &= \lfrac{1}{c}\cdot \Varx_{\br\sim\zo^m}[R(x,\br)], 
    \end{align*} 
    where in both cases above, $\br \sim \zo^m$ on the RHS denotes $\br$ chosen uniformly at random from $\zo^m$. 
    Since $R$ has output on the range $[0,1]$, it has variance at most $\frac{1}{4}$. Hence, the variance of $\mathbf{est}(x)$ is at most $\frac{1}{4c}$. If we take $c = \frac{1}{\varepsilon}$, the following holds for any $x \in \zo^n$: 
    \begin{align*}
        \Ex_{\br_1, \ldots, \br_c \sim \zo^m}\Big[\big(\mathbf{est}(x) - \mu_R(x)\big)^2\Big] \leq \frac{\varepsilon}{4}.
    \end{align*}
Next, averaging over $\bx \sim \zo^n$ and swapping expectations, we get: 
    \begin{align*}
        \Ex_{\br_1, \ldots, \br_c \sim \zo^m}\bigg[\Ex_{\bx \sim \zo^n}\Big[\big(\mathbf{est}(\bx) - \mu_R(\bx)\big)^2\Big] \bigg] \leq \frac{\varepsilon}{4}.
    \end{align*}
Therefore, there must exist outcomes $r_1^{\star}, \ldots, r_c^{\star} \in \zo^m$ of $\br_1,\ldots,\br_c$ such that 
    \begin{align}
    \label{eq: existence small error}
        \Ex_{\bx \sim \zo^n}\bigg[\bigg(\Ex_{\bs \sim \{r_1^\star, \ldots, r_c^\star\}}[R(\bx, \bs)] - \mu_R(\bx)\bigg)^2\bigg] \leq \frac{\varepsilon}{4}.
    \end{align}
    For each $i \in [c]$, we consider the $q$-query $\DDT$ computing $x \mapsto R(x, r_i)$ by fixing the stochastic nodes of $R$ according to $r_i^\star \in \zo^m$.  Stacking these $c$ many $q$-query $\DDT$s on top of one another, we have a $\DDT$ that computes $x \mapsto \Ex_{\bs \sim \{r_1^\star, \ldots, r_c^\star\}}[R(x, \bs)]$, which by \Cref{eq: existence small error}, has sufficiently small error.  Since this $\DDT$ makes $q \cdot c = O(q/\eps)$ queries, the proof of~\Cref{fact:yao} is complete.
\end{proof}

\begin{proof}[Proof of~\Cref{fact:lower-bound}]
We first prove the claim for $q=1$.  Consider the $1$-query $\RDT$ $R$ which on input $x$, outputs $x_{\bi}$ where $\bi \sim [n]$ is uniform random.  Let $\eps = \frac1{10n}$ and consider any $q'$-query $\DDT$ $D$. We will show that $\|D - \mu_R\|_2^2 \geq \frac{n-q'}{4n^2}$, which implies that in order for $D$ to $\eps$-approximate $R$, it has to be the case that $q' = \Omega(1/\eps)$.

For $\bx \sim \zo^n$ a uniform random input, the random variable $\mu_R(\bx)$ conditioned on $D$ observing $t$ ones after $q'$ queries is distributed according to 
\[ \frac{\mathrm{Bin}(n-q', \frac1{2})}{n} + \frac{t}{n}.\]  
The variance of this distribution is $\frac{n-q'}{4n^2}$.  Since this lower bounds the approximation error of $D$ with respect to $R$, we have the desired result.

As for $q > 1$, consider the generalization of our construction where we partition the $n$ coordinates into blocks of size $q$. Our $\RDT$ $R$ algorithm picks one of these blocks $\bi \in [\frac{n}{q}]$ uniformly at random and returns the parity of the input coordinates in that block.  An analogous calculation as the one we did for the $q=1$ case above gives the desired lower bound.
\end{proof} 

%
%

\begin{proof}[Proof of~\Cref{fact:pointwise-impossible}]
Let $D$ be any $q$-query that satisfies the pointwise approximation guarantee of~\Cref{fact:pointwise-impossible}, where $q$ is $\le cn$ for some universal constant $c \in (0,1)$ to be chosen later.   As in our proof of~\Cref{fact:lower-bound}, we observe that if $\bx \sim \zo^n$ is a uniform random input, $\mu_R(\bx)$ conditioned on the first $q$ queries of $D$ is distributed according to $\lfrac{\mathrm{Bin}(n-q, 1/2)}{n} + \lfrac{t}{n}$, where $t$ is the number of queries that returned a value of 1. Then there is some  $x^{(1)}$ consistent with the $q$ queries such that $\mu_R(x^{(1)}) = \frac{t}{n}$, and another $x^{(2)}$ consistent with the same queries such that $\mu_R(x^{(2)}) = \lfrac{n-q+t}{n}$.  Consequently, there must also be an $x^{\star}$ consistent with the same queries for which 
\[ |D(x^\star) - \mu_R(x^\star)| \geq \frac1{2} \left(\frac{n-q+t}{n} -  \frac{t}{n}\right) = \frac{n-q}{2n}.\] 
Since $q \le c n$, for large $n$ and for $c$ sufficiently small this difference exceeds $0.1$, which concludes the proof of~\Cref{fact:pointwise-impossible}.
\end{proof}


\end{document}